\documentclass[
  10pt,
  letterpaper,
  twocolumn,
  amsmath,
  amssymb,
  aps,
  pra,
  floatfix,
  superscriptaddress,
  longbibliography
]{revtex4-2}

\usepackage{iftex}
\ifPDFTeX
  \usepackage[utf8]{inputenc}
  \usepackage[T1]{fontenc}
\fi
\usepackage{graphicx}
\usepackage[caption=false,position=top]{subfig}
\usepackage{multirow}
\usepackage{dcolumn}

\usepackage{bm}
\usepackage{braket}
\usepackage{microtype}
\usepackage{hyperref}
\usepackage{amsthm}
\newtheorem{proposition}{Proposition}
\usepackage{todonotes}
\usepackage[ruled, vlined]{algorithm2e}
\usepackage{tikz}
\usetikzlibrary{quantikz2}
\theoremstyle{plain}
\newtheorem{theorem}{Theorem}
\newtheorem{lemma}[theorem]{Lemma}
\newtheorem{corollary}[theorem]{Corollary}

\theoremstyle{definition} 
\newtheorem{definition}{Definition}
\newtheorem{remark}{Remark}

\hypersetup{
  colorlinks=true,
  linkcolor=blue,
  citecolor=blue,
  urlcolor=blue,
  breaklinks=true
}

\begin{document}

\title{Dealing with locality in QAOA}

\thanks{These authors have contributed equally}

\author{Mithilesh Kumar}
\thanks{\href{mailto:mithilesh.kumar@krea.edu.in}{\nolinkurl{mithilesh.kumar@krea.edu.in}}}
\affiliation{Krea University, Sri City, India}
\author{Yusuf Tahir}
\thanks{\href{mailto:yusuf_tahir.sias25@krea.ac.in}{\nolinkurl{yusuf_tahir.sias25@krea.ac.in}}}
\affiliation{Krea University, Sri City, India}


\begin{abstract}
Shallow-depth QAOA on sparse, high-diameter MaxCut instances faces a locality bottleneck: at depth \(p\), local observables can depend only on a bounded neighborhood of the circuit interaction graph. We propose a transport-augmented QAOA that keeps the MaxCut cost Hamiltonian unchanged but enriches the mixer with optimized, unweighted shortcut couplings (scheduled \(XX+YY\)) to collapse the effective interaction-graph diameter. Using exact finite-depth support recursions, we relate optimal shortcut placement to bounded-diameter graph augmentation, and show in benchmarks that (unlike ma-QAOA) performance becomes effectively size-invariant once the diameter is reduced. For bipartite families (base diameter 4), reducing the interaction path to \(d=1\) raises the ensemble-averaged approximation ratio from 0.7378 (ma-QAOA) to 0.9767 at \(p=1\) (\(\sigma=0.0251\), nine system sizes); on random trees (base diameter 10), at \(p=2\) it improves from 0.9226 to 0.9997 (\(\sigma=0.0001\)).
\end{abstract}

\maketitle

\section{Introduction}

The Quantum Approximate Optimization Algorithm (QAOA), introduced by Farhi \emph{et al.}~\cite{farhi2014}, is among the most widely studied variational quantum algorithms for combinatorial optimization problems; see also Refs.~\cite{zhou2020,rendl2008solving,Hadfield2019}. For MaxCut on a graph \(G=(V,E)\), the standard depth-\(p\) QAOA state is
\[
|\psi_p(\boldsymbol{\gamma},\boldsymbol{\beta})\rangle
=
\prod_{\ell=1}^{p}
e^{-i\beta_\ell H_B}
e^{-i\gamma_\ell H_C}
|+\rangle^{\otimes n}
\]
where the cost Hamiltonian is
\[
H_C
=
\sum_{(i,j)\in E}
\frac12(1-Z_iZ_j)
\]
and the standard mixer Hamiltonian is
\[
H_B=\sum_i X_i
\]

A fundamental limitation of finite-depth QAOA is locality. After \(p\) layers, the expectation value of a local observable depends only on a bounded graph neighborhood of that observable, in the same spirit as finite-velocity propagation bounds in local quantum systems~\cite{lieb1972,hastings2006,bravyi2006}. Consequently, shallow circuits cannot make a local cost term depend on distant parts of a high-diameter graph. This locality phenomenon underlies several limitations of shallow QAOA on sparse graph families~\cite{bravyi2018,farhi2020}. In particular, on typical sparse instances, QAOA may need to ``see'' essentially the whole graph to reach near-optimal performance~\cite{farhi2020}.

A number of QAOA variants enrich the ansatz while preserving the underlying optimization objective, including generalized alternating-operator frameworks~\cite{Hadfield2019}, warm-start initializations~\cite{Egger2021}, and multi-angle parameterizations~\cite{Herrman2022}. Relatedly, Vijendran \emph{et al.} introduced an expressive low-depth ansatz (XQAOA) that extends mixer-augmented QAOA by upgrading the mixer from purely Pauli-$X$ rotations to an $XY$-type mixer (adding a Pauli-$Y$ component) without increasing the number of variational parameters, and observed strong performance and scaling behaviour in MaxCut benchmarks~\cite{Vijendran_2024}; they subsequently applied XQAOA-style mixer design ideas to the Binary Paint Shop Problem and reported strong scaling in large-instance simulations~\cite{Vijendran_2025_PaintShop}. Separately, several quantum platforms naturally provide interactions beyond nearest-neighbor connectivity, including long-range Ising couplings in trapped-ion systems, which can change the effective causal structure available to variational dynamics~\cite{Pagano2020}.

One way to address this limitation is to modify the graph perceived by the QAOA circuit. Phantom-QAOA, proposed by Langfitt \textit{et al.}~\cite{langfitt2024}, augments the problem Hamiltonian with additional weighted ``phantom'' edges, enabling the phase operator to encode a modified graph with increased visibility at a fixed circuit depth while preserving the original optimization objective. The method introduces only a single additional tunable parameter, $\alpha$, independent of the graph size and QAOA depth. The authors derive a general analytical expression for the Max-Cut expectation value at $p=1$ and show that, for even cycle graphs, Phantom-QAOA improves the approximation ratio from $0.75$ to $0.7925$, representing a $5.67\%$ relative improvement over standard QAOA. Moreover, numerical experiments on random regular graphs with up to 16 vertices demonstrate average approximation-ratio improvements of approximately $4\%$ at $p=1$ and $2\%$ at $p=2$. The study further finds that the proposed \textit{Triangle} edge-placement strategy consistently matches or outperforms the \textit{Full} edge-placement approach, indicating that the introduction of triangle-forming phantom edges is particularly effective for enhancing QAOA performance.

The present work studies a generic mechanism. We keep the MaxCut cost Hamiltonian exactly on the original instance graph, and instead modify the mixer interaction geometry. The added edges are not additional terms in the objective; they are transport channels used by the variational circuit. Thus the problem Hamiltonian remains
\[
H_C
=
\sum_{(i,j)\in E}\frac12(1-Z_iZ_j)
\]
while the mixer is allowed to contain two-qubit transport interactions on a chosen shortcut graph.

The main transport mixer considered in this paper is a scheduled \(XX+YY\) mixer. On a single edge,
\[
X_iX_j+Y_iY_j
=
2(\sigma_i^+\sigma_j^-+\sigma_i^-\sigma_j^+)
\]
so the interaction has the interpretation of excitation hopping between graph vertices. This makes it a natural transport mechanism and connects directly to native \(XY\)- and iSWAP-type hardware interactions. However, \(XX+YY\) terms on adjacent edges do not commute. To obtain exact compact lightcone statements, we therefore formulate the \(XX+YY\) ansatz as a scheduled edge-color circuit.

A commuting \(XX\) transport mixer is used as an analytically transparent intermediate model:
\[
H_B'
=
\sum_i X_i
+
\sum_{(i,j)\in E_{\mathrm{new}}}
X_iX_j
\]
where \(E_{\mathrm{new}}\subseteq\binom{V}{2}\) is a set of shortcut mixer edges. Because all \(X_iX_j\) terms commute, this intermediate model yields a particularly clean support recursion. The same graph-theoretic logic then extends to scheduled \(XX+YY\) transport, with the number of edge-color sublayers determining the exact lightcone velocity.

Throughout the paper we distinguish three graphs. The cost graph \(G_C=(V,E)\) defines the optimization Hamiltonian. The mixer graph \(G_M=(V,E_{\mathrm{new}})\) defines the added transport couplings. The full interaction graph \(G_{\mathrm{int}}=(V,E\cup E_{\mathrm{new}})\) controls the causal geometry of the circuit, because cost layers act on edges of \(E\) and mixer layers act on edges of \(E_{\mathrm{new}}\).

Our main results are finite-depth causal statements. For the commuting \(XX\) intermediate mixer, we prove that the Heisenberg support of an initially local operator is contained in the iterated growth region \(\Phi^p(S)\), where \(\Phi(S)=N_C(N_M(S))\). This implies the coarser radius bound \(\operatorname{supp}(U^\dagger O_SU)\subseteq N_{2p}^{\mathrm{int}}(S)\). For scheduled \(XX+YY\) transport with \(q\) matching sublayers per mixer layer, the exact growth map becomes \(\Psi_q(S)=N_C(N_M^q(S))\), and the corresponding radius bound is \((q+1)p\). We also derive exact commutator consequences, objective-locality statements for local MaxCut contributions, connected-correlation obstructions for product initial states, lightcone-volume estimates, and full-lightcone criteria.

This viewpoint leads naturally to a generic graph-augmentation problem: for a target depth, add as few shortcut mixer edges as possible so that the resulting interaction graph has sufficiently small diameter. We show that an optimal solution to this graph-theoretic problem is precisely the smallest mixer augmentation that removes the diameter-based causal obstruction identified by our support and commutator theorems. The numerical figures presented near the end of the paper illustrate the behavior of standard QAOA, augmented \(XX\), and augmented \(XX+YY\) mixers on random connected bipartite and random regular graph instances. These numerical results motivate the focus on \(XX+YY\) transport, while the rigorous results of the paper remain causal and graph-theoretic rather than claims of universal optimization improvement.

\section*{Contributions}
The main contributions of this paper are:
\begin{itemize}
  \item \textbf{Locality as interaction-graph geometry:} we formulate finite-depth QAOA limitations in terms of the circuit interaction graph \(G_{\mathrm{int}}=(V,E\cup E_{\mathrm{new}})\), keeping the MaxCut cost Hamiltonian fixed on the instance graph \(G=(V,E)\).
  \item \textbf{Transport-augmented mixers:} we introduce shortcut-transport mixers that add unweighted couplings only in the mixer, and analyze both a commuting \(XX\) intermediate model (for exact algebraic control) and a scheduled \(XX+YY\) transport mixer (hardware-motivated exchange/iSWAP dynamics).
  \item \textbf{Exact finite-depth lightcone recursions:} we derive exact support-growth maps for Heisenberg-evolved local observables (\(\Phi\), \(\Psi_q\)), yielding explicit radius bounds, full-lightcone criteria, and commutator-based obstructions to long-range influence at shallow depth.
  \item \textbf{Shortcut placement as diameter collapse:} we cast shortcut design as a bounded-diameter graph augmentation problem and show how optimal shortcut injection removes the diameter-based causal obstruction at depth \(p\).
  \item \textbf{Benchmarks and scaling:} using exact statevector simulations, we show a structural separation from ma-QAOA: ma-QAOA remains constrained by the native topology, while the transport ansatz becomes nearly size-invariant once the effective interaction diameter collapses. For bipartite families (base diameter 4), reducing the interaction path to \(d=1\) increases the ensemble-averaged approximation ratio from 0.7378 (ma-QAOA) to 0.9767 at \(p=1\) (\(\sigma=0.0251\), nine system sizes); for random trees (base diameter 10), at \(p=2\) it improves from 0.9226 to 0.9997 (\(\sigma=0.0001\)).
\end{itemize}

\section{Baseline QAOA and Finite-Depth Locality}

Combinatorial optimization objectives can be written as a sum of local clause contributions,
\begin{equation}
  C(z)=\sum_{\alpha=1}^{m} C_\alpha(z)
\end{equation}
where $z\in\{0,1\}^n$ and each $C_\alpha(z)$ depends only on a small subset of bits (e.g., an indicator for whether clause $\alpha$ is satisfied). In the quantum setting, $C$ is promoted to a diagonal operator in the computational basis and implemented via the cost-phase unitary
\begin{equation}
  U(C,\gamma)=e^{-i\gamma C}=\prod_{\alpha=1}^{m} e^{-i\gamma C_\alpha}
\end{equation}

For MaxCut on $G=(V,E)$ we take
\begin{equation}
  C=\sum_{(j,k)\in E} C_{jk},\qquad
  C_{jk}=\tfrac12(1-Z_jZ_k)
\end{equation}
The standard mixer Hamiltonian is
\begin{equation}
  B=\sum_{j\in V} X_j
\end{equation}
with mixer unitary $U(B,\beta)=e^{-i\beta B}$.

Starting from the uniform superposition state $\ket{s}=\ket{+}^{\otimes n}$, the depth-$p$ QAOA state with layer angles $\{(\gamma_\ell,\beta_\ell)\}_{\ell=1}^p$ is
\begin{equation}
  \ket{\gamma,\beta}
  =U(B,\beta_p)U(C,\gamma_p)\cdots U(B,\beta_1)U(C,\gamma_1)\ket{s}
\end{equation}
The objective value and optimal depth-$p$ value are
\begin{equation}
  F_p(\gamma,\beta)=\bra{\gamma,\beta}C\ket{\gamma,\beta}
  \qquad
  M_p=\max_{\gamma,\beta} F_p(\gamma,\beta)
\end{equation}

The crucial observation for finite-depth QAOA is that operator spreading remains local.

\subsection{Fixed-$p$ locality}

For MaxCut on a graph $G=(V,E)$ we define

\begin{equation}
    C
    =
    \sum_{(j,k)\in E} C_{jk}
\end{equation}

where

\begin{equation}
    C_{jk}
    =
    \frac12(1-Z_jZ_k)
\end{equation}

The cost expectation at depth $p$ is

\begin{equation}
F_p(\gamma,\beta)
=
\sum_{(j,k)\in E}
\bra{s}
U^\dagger(\gamma,\beta)
C_{jk}
U(\gamma,\beta)
\ket{s}
\end{equation}

Define

\begin{equation}
A_{jk}
=
U^\dagger(\gamma,\beta)
C_{jk}
U(\gamma,\beta)
\end{equation}

\begin{lemma}[Locality of standard QAOA \cite{farhi2014}]
The operator $A_{jk}$ only acts nontrivially on qubits $j,k$ and qubits whose graph distance from $j$ or $k$ is at most $p$.
\end{lemma}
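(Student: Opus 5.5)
The plan is to prove the claim by peeling off the layers of the circuit one at a time and tracking the support of the Heisenberg-evolved operator. Write
\[
U(\gamma,\beta)=U(B,\beta_p)U(C,\gamma_p)\cdots U(B,\beta_1)U(C,\gamma_1),
\]
so that
\[
A_{jk}=U^\dagger(C,\gamma_1)U^\dagger(B,\beta_1)\cdots U^\dagger(C,\gamma_p)U^\dagger(B,\beta_p)\,C_{jk}\,U(B,\beta_p)U(C,\gamma_p)\cdots U(C,\gamma_1).
\]
Conjugation proceeds from the innermost (last-applied) layer outward. The induction hypothesis is: after conjugating by the last $r$ layer pairs, the operator is supported on $\{j,k\}$ together with every vertex at graph distance at most $r-1$ from $j$ or $k$. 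The stated bound, distance at most $p$, is slightly weaker, so the induction will give it with room to spare.

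\textbf{Two basic support facts.}
\begin{itemize}
\item[(i)] The mixer is a product of single-qubit unitaries, $U(B,\beta)=\prod_{v}e^{-i\beta X_v}$. Factors acting outside $\operatorname{supp}(O)$ commute with $O$ and cancel, so conjugation by the mixer preserves the support exactly.
\item[(ii)] The cost unitary is a product of mutually commuting factors, $U(C,\gamma)=\prod_{(u,v)\in E}e^{-i\gamma C_{uv}}$. Reorder so that factors on edges disjoint from $S=\operatorname{supp}(O)$ are moved next to $O$. These commute with $O$ and cancel. The only surviving factors are those on edges meeting $S$, and each of these enlarges the support by at most the other endpoint. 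Hence conjugation by the cost layer maps support $S$ into $N[S]$, the closed neighbourhood of $S$ in $G$.
\end{itemize}

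\textbf{Induction.} Start with $\operatorname{supp}(C_{jk})=\{j,k\}$.
\begin{itemize}
\item The outermost mixer $U(B,\beta_p)$ leaves the support unchanged, by (i).
\item The cost layer $U(C,\gamma_p)$ then commutes with $C_{jk}$, because both are diagonal. So the support is still $\{j,k\}$ after the first full layer pair.
\item Each subsequent layer pair $\ell=p-1,\dots,1$ grows the support to at most its closed neighbourhood, by (i) and (ii).
\end{itemize}
After all $p$ pairs the support lies in $N_{p-1}[\{j,k\}]\subseteq N_p[\{j,k\}]$, which is the statement of the lemma. If one prefers not to use the first-layer commutation, applying (ii) uniformly gives radius exactly $p$, which is already what the lemma asserts.

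\textbf{Main obstacle.} The only delicate point is justifying the cancellation in (ii). One must note that all diagonal $C_{uv}$ commute with one another, so the product can be reordered freely. Then $e^{i\gamma C_{uv}}\,O\,e^{-i\gamma C_{uv}}=O$ whenever $\{u,v\}\cap S=\emptyset$, since the factors then act on disjoint tensor factors. The surviving factors may act nontrivially on $S\cup\{u,v\}$, which gives the neighbourhood bound. Once this is written carefully, the rest is bookkeeping of unions of neighbourhoods.
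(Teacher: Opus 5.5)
There is a genuine error in your main line, although the fallback you mention at the end repairs it. The step ``the cost layer $U(C,\gamma_p)$ then commutes with $C_{jk}$, because both are diagonal'' is false. In the conjugation, $U(C,\gamma_p)$ does not act on $C_{jk}$ but on $U^\dagger(B,\beta_p)\,C_{jk}\,U(B,\beta_p)$. Fact (i) says the mixer preserves the \emph{support}, not the operator. Since $e^{i\beta X}Ze^{-i\beta X}=\cos(2\beta)Z+\sin(2\beta)Y$, this operator contains off-diagonal terms proportional to $\sin(4\beta_p)\,Y_jZ_k$ and $\sin^2(2\beta_p)\,Y_jY_k$. These anticommute with $Z_iZ_j$ for any neighbour $i\neq k$ of $j$. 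Conjugation by $e^{i\gamma_p C_{ij}}$ then sends $Y_j$ to $\cos\gamma_p\,Y_j-\sin\gamma_p\,Z_iX_j$, so for generic angles the support already reaches $N(j)\cup N(k)$ after the first layer pair.

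A quick sanity check shows the claim cannot be right. If it held, then at $p=1$ you would have $A_{jk}=U^\dagger(B,\beta_1)C_{jk}U(B,\beta_1)$. Since $U(B,\beta_1)\ket{s}$ is a phase times $\ket{s}$, every edge would contribute exactly $\tfrac12$, giving $F_1\equiv|E|/2$ for all angles. Consequently your induction hypothesis (radius $r-1$ after $r$ pairs) already fails at $r=1$, and the advertised strengthening $\operatorname{supp}(A_{jk})\subseteq N_{p-1}[\{j,k\}]$ is false. The diagonal-commutation shortcut would only be available if a cost layer were the last unitary applied before measuring $C_{jk}$, which is not the ordering of this ansatz.

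If you drop that step and apply (i) and (ii) uniformly to all $p$ layer pairs, with hypothesis ``radius $r$ after $r$ pairs'', the argument is correct and is exactly the paper's proof. Mixer factors off the current support cancel, cost factors on edges not meeting the current support cancel, and each layer adds at most one graph hop. Your fact (ii) is a cleaner justification of the one-hop growth than the paper's sketch, because it uses the mutual commutativity of the $C_{uv}$ to bring the factors disjoint from the current support next to the operator before cancelling them.
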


\begin{proof}
Consider first the case $p=1$. Then

\begin{equation}
A_{jk}
=
U^\dagger(C,\gamma_1)
U^\dagger(B,\beta_1)
C_{jk}
U(B,\beta_1)
U(C,\gamma_1)
\end{equation}

All factors in $U(B,\beta_1)$ not acting on qubits $j$ or $k$ commute through $C_{jk}$ and cancel, leaving

\begin{equation}
A_{jk}
=
U^\dagger(C,\gamma_1)
e^{i\beta_1(X_j+X_k)}
C_{jk}
e^{-i\beta_1(X_j+X_k)}
U(C,\gamma_1)
\end{equation}

Next, all factors in $U(C,\gamma_1)$ not involving edges adjacent to $j$ or $k$ also commute through and cancel. Thus the support expands only to neighboring edges.

Repeating the argument layer-by-layer shows that after depth $p$, the support can spread at most $p$ graph hops from the original edge $(j,k)$.
\end{proof}

Consequently, for bounded-degree graphs, each contribution to $F_p$ depends only on a finite-radius neighborhood independent of the total system size $n$. This locality phenomenon underlies many known finite-depth limitations of QAOA on sparse graphs~\cite{bravyi2018,farhi2020}.

\section{\(XX+YY\) Transport-Augmented QAOA}

The preceding locality argument shows that finite-depth QAOA is fundamentally constrained by graph distance. A local cost term cannot depend on distant parts of the graph unless the circuit depth is sufficiently large compared to the graph diameter.
We now introduce a modified mixer designed to enlarge the interaction graph seen by the circuit.

We consider MaxCut on a graph $G=(V,E)$ (which determines the cost Hamiltonian), but allow the mixer interactions to act on a transport graph \(G_M=(V,E_{\mathrm{new}})\).

The key idea is to add shortcut transport edges so that the full interaction graph \(G_{\mathrm{int}}=(V,E\cup E_{\mathrm{new}})\) has smaller diameter than the original instance graph.

\begin{definition}[Commuting \(XX\) intermediate mixer family]
Let \(E_{\mathrm{new}} \subseteq \binom{V}{2}\) be a set of additional edges chosen to reduce the diameter of the full interaction graph

\begin{equation}
G_{\mathrm{int}}=(V,E\cup E_{\mathrm{new}})
\end{equation}

Define the mixer Hamiltonian
\begin{align}
H_B(\beta)
  &= \beta_0\sum_{j\in V}X_j
     + \sum_{(i,j)\in E_{\mathrm{new}}}\beta_{ij}X_iX_j
\end{align}

The corresponding mixer unitary is
\begin{align}
U_B(\beta) &= e^{-iH_B(\beta)}
\end{align}

The depth-$p$ transport-augmented QAOA state is
\begin{align}
\ket{\gamma,\beta}
  &= \prod_{\ell=1}^{p} U_B(\beta^{(\ell)})\,U(C,\gamma_\ell)\,\ket{s}
\end{align}

\end{definition}

\begin{definition}[Scheduled \(XX+YY\) transport mixer]
An \(XX+YY\) transport mixer on the same transport graph \(G_M=(V,E_{\mathrm{new}})\) is implemented as a finite sequence of two-qubit gates supported on the transport edges. Let
\[
E_{\mathrm{new}}=\mathcal M_1\cup\cdots\cup \mathcal M_q
\]
be an edge coloring of \(G_M\), so each \(\mathcal M_c\) is a matching. For each color class define
\[
V_c(\theta)
:=
\prod_{(i,j)\in \mathcal M_c}
\exp\!\left[
-i\left(
\theta_{ij}^{x}X_iX_j+\theta_{ij}^{y}Y_iY_j
\right)
\right]
\]
The scheduled \(XX+YY\) mixer layer is
\[
U_B^{XY}(\theta)
:=
\left(
\prod_{c=1}^{q}V_c(\theta)
\right)
\left(
\prod_{u\in V}e^{-i\eta_u X_u}
\right)
\]
with a fixed ordering of the color classes. The single-qubit rotations are optional and do not enlarge support.

\end{definition}

\subsection{Circuit implementation of \(XX\) and \(XX+YY\) transport mixers}

Each additional transport term \(e^{-i\beta_{ij}X_iX_j}\) is a two-qubit Pauli rotation.

Using Hadamard conjugation,

\begin{equation}
e^{-i\beta_{ij}X_iX_j}
=
(H_i\otimes H_j)
e^{-i\beta_{ij}Z_iZ_j}
(H_i\otimes H_j)
\end{equation}

The central $ZZ$ rotation admits the standard decomposition

\begin{equation}
e^{-i\beta Z_iZ_j}
=
\mathrm{CNOT}_{ij}
\,R_z(2\beta)_j\,
\mathrm{CNOT}_{ij}
\end{equation}
Here we use the convention \(R_z(\phi)=e^{-i\phi Z/2}\).

Therefore each transport edge introduces a two-CNOT entangling block:

\begin{equation}
\begin{aligned}
e^{-i\beta_{ij}X_iX_j}
&=
(H_i\otimes H_j)
\,
\mathrm{CNOT}_{ij}
\,R_z(2\beta_{ij})_j\,
\mathrm{CNOT}_{ij}
\\
&\quad\times
(H_i\otimes H_j)
\end{aligned}
\end{equation}

Equivalently, defining
\[
\begin{aligned}
R_{XX}^{ij}(\theta)&:=e^{-i\theta X_iX_j}\\
R_{YY}^{ij}(\theta)&:=e^{-i\theta Y_iY_j}\\
R_{ZZ}^{ij}(\theta)&:=e^{-i\theta Z_iZ_j}
\end{aligned}
\]
we have
\[
R_{XX}^{ij}(\theta)
=
(H_i\otimes H_j)
R_{ZZ}^{ij}(\theta)
(H_i\otimes H_j)
\]
Thus an \(XX\) mixer edge requires two CNOTs and single-qubit Hadamards when implemented through a \(ZZ\) rotation.

Hence the transport-augmented mixer introduces explicit entangling transport channels between distant regions of the graph.

Because all $X_iX_j$ terms commute with one another and with all single-qubit $X_i$ terms,

\begin{equation}
[X_iX_j, X_kX_l]=0
\qquad
[X_iX_j,X_k]=0
\end{equation}

the entire mixer layer can be parallelized according to an edge-coloring of the transport graph $G_M$.

For the \(XX+YY\) variant, each edge gate
\[
\exp\!\left[-i\left(\theta_{ij}^{x}X_iX_j+\theta_{ij}^{y}Y_iY_j\right)\right]
\]
is also a two-qubit gate. On a single edge, the two Pauli strings commute:
\[
[X_iX_j,Y_iY_j]=0
\]
Thus the edge gate can be written as
\[
e^{-i\theta_{ij}^{x}X_iX_j}
e^{-i\theta_{ij}^{y}Y_iY_j}
\]
The \(YY\) rotation is reduced to the same \(ZZ\)-rotation block by the phase-basis change \(S=\operatorname{diag}(1,i)\):
\[
Y=S H Z H S^\dagger
\]
Therefore
\begin{equation}
e^{-i\theta Y_iY_j}
=
(S_iH_i\otimes S_jH_j)
e^{-i\theta Z_iZ_j}
(H_iS_i^\dagger\otimes H_jS_j^\dagger)
\end{equation}
and hence
\begin{equation}
\begin{aligned}
e^{-i\theta Y_iY_j}
&=
(S_iH_i\otimes S_jH_j)
\,
\mathrm{CNOT}_{ij}
\,R_z(2\theta)_j\,
\mathrm{CNOT}_{ij}
\\
&\quad\times
(H_iS_i^\dagger\otimes H_jS_j^\dagger)
\end{aligned}
\end{equation}
Consequently a general scheduled edge gate
\[
R_{XY}^{ij}(\theta_x,\theta_y)
:=
\exp[-i(\theta_xX_iX_j+\theta_yY_iY_j)]
\]
has the exact decomposition
\[
R_{XY}^{ij}(\theta_x,\theta_y)
=
R_{XX}^{ij}(\theta_x)R_{YY}^{ij}(\theta_y)
\]
Because this factors into two commuting Pauli rotations, a direct compilation of the product uses two CNOTs per factor, i.e., four CNOTs total.

However, the combined operator \(\exp[-i(\theta_xX_iX_j+\theta_yY_iY_j)]\) is a parity-preserving two-qubit gate (a matchgate) and can be synthesized with only two CNOTs, up to single-qubit gates \cite{vatan2004}. One convenient two-CNOT form is
\begin{eqnarray*}
\exp\!\left[-i \left(\theta_xX_iX_j+\theta_yY_iY_j\right)\right]
=\\
(K_i\otimes K_j)
\,\mathrm{CNOT}_{ij}\,
\bigl(R_y(2\theta_x)_i\otimes R_y(2\theta_y)_j\bigr)
\,\\ 
\mathrm{CNOT}_{ij}\,
(K_i^\dagger\otimes K_j^\dagger)
\end{eqnarray*}

where
\[
K_i=K_j=HS^\dagger
\qquad
(\text{so }K_i^\dagger=K_j^\dagger=SH)
\]
where \(H\) is the Hadamard and \(S=\operatorname{diag}(1,i)\).

However, \(XX+YY\) terms on adjacent transport edges need not commute. For distinct vertices \(a,b,c\),
\[
[X_aX_b+Y_aY_b,\;X_bX_c+Y_bY_c]
=
2i(X_aZ_bY_c-Y_aZ_bX_c)
\]
which is generally nonzero. Therefore the exact finite-depth support proof for \(XX+YY\) mixers uses the scheduled edge-color implementation above, where each matching sublayer consists of disjoint two-qubit gates.

\section{Variational Monotonicity Under Mixer Augmentation}

We first establish that adding additional mixer edges cannot reduce the best achievable variational optimum at fixed depth.

\begin{proposition}[Monotonic improvement under mixer-edge augmentation]

Fix the instance graph \(G=(V,E)\) and depth \(p\).
Assume the allowed mixer-parameter domains contain the origin.
Let \(\mathcal{F}_p(E_{\mathrm{new}})\) be the variational family above with additional mixer edges \(E_{\mathrm{new}}\).
Then for any \(E_{\mathrm{new}}' \supseteq E_{\mathrm{new}}\) we have
\begin{equation}
    \max_{\ket{\psi}\in \mathcal{F}_p(E_{\mathrm{new}}')}
    \bra{\psi}C\ket{\psi}
    \ge
    \max_{\ket{\psi}\in \mathcal{F}_p(E_{\mathrm{new}})}
    \bra{\psi}C\ket{\psi}
\end{equation}

In particular, adding diameter-reducing transport mixer edges cannot worsen the best achievable approximation ratio at fixed depth \(p\).

\end{proposition}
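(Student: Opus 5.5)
The plan is a nesting argument: I will show that every state in \(\mathcal F_p(E_{\mathrm{new}})\) also lies in \(\mathcal F_p(E_{\mathrm{new}}')\). A maximum over a larger set is at least the maximum over a subset, so the inequality follows. The embedding sets to zero the parameters attached to the extra edges \(E_{\mathrm{new}}'\setminus E_{\mathrm{new}}\), which the hypothesis allows because the mixer-parameter domains contain the origin.

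First, for the commuting \(XX\) family, I fix an arbitrary parameter vector \((\gamma,\beta)\) for \(E_{\mathrm{new}}\). I extend it to \(E_{\mathrm{new}}'\) by keeping \(\gamma_\ell\), \(\beta_0^{(\ell)}\) and \(\beta^{(\ell)}_{ij}\) for \((i,j)\in E_{\mathrm{new}}\), and setting \(\beta^{(\ell)}_{ij}=0\) for the new edges, in every layer \(\ell\). Then \(H_B\) on \(E_{\mathrm{new}}'\) with these parameters equals, term by term, \(H_B\) on \(E_{\mathrm{new}}\). So the two exponentials agree, \(U_B(\beta'^{(\ell)})=U_B(\beta^{(\ell)})\). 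The cost unitaries \(U(C,\gamma_\ell)\) and the initial state \(\ket{s}\) do not depend on the mixer graph, so the two depth-\(p\) products coincide and give the same state.

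Second, for the scheduled \(XX+YY\) family, I need to handle the edge coloring. I fix a coloring \(\mathcal M_1',\dots,\mathcal M_{q'}'\) of \(E_{\mathrm{new}}'\). Setting \(\theta^x_{ij}=\theta^y_{ij}=0\) on the new edges turns their gates into the identity, leaving the product \(\prod_c\prod_{(i,j)\in\mathcal M_c'\cap E_{\mathrm{new}}}\exp[\cdots]\). This is a scheduled circuit on \(E_{\mathrm{new}}\), but possibly with a different coloring and ordering than the one used to define \(\mathcal F_p(E_{\mathrm{new}})\), and adjacent \(XX+YY\) gates do not commute. This is the main obstacle. I would resolve it by adopting the convention that the family for \(E_{\mathrm{new}}'\) uses a coloring that extends the coloring of \(E_{\mathrm{new}}\): keep the classes \(\mathcal M_1,\dots,\mathcal M_q\) in their order and put the new edges into additional classes \(\mathcal M_{q+1},\dots,\mathcal M_{q'}\), or into existing classes when the matching property allows. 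Adding extra classes is always possible, so such a schedule exists. With zero angles on the new edges, the mixer layer then reduces exactly to \(U_B^{XY}(\theta)\) for \(E_{\mathrm{new}}\), and the single-qubit rotations \(\eta_u\) carry over unchanged.

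Finally, with the inclusion \(\mathcal F_p(E_{\mathrm{new}})\subseteq\mathcal F_p(E_{\mathrm{new}}')\) in hand, the expectation \(\bra{\psi}C\ket{\psi}\) is the same function of the state in both families, since \(C\) is fixed on \(G\). The maximum over the larger set therefore dominates. If the domains are compact the maxima are attained; otherwise I would phrase the argument with suprema. Dividing by the fixed optimum \(\max_z C(z)\) gives the statement about the approximation ratio.
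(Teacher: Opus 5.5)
Your proof is correct and matches the paper's argument: set the parameters on the edges in \(E_{\mathrm{new}}'\setminus E_{\mathrm{new}}\) to zero to get \(\mathcal F_p(E_{\mathrm{new}})\subseteq\mathcal F_p(E_{\mathrm{new}}')\), then note that a maximum over a superset cannot be smaller. Your treatment of the scheduled \(XX+YY\) case, where you require the coloring of \(E_{\mathrm{new}}'\) to extend that of \(E_{\mathrm{new}}\) so the surviving gates keep their order, is a sensible convention that covers a point the paper's one-line proof leaves implicit.
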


\begin{proof}

The family \(\mathcal{F}_p(E_{\mathrm{new}})\) is obtained from the larger family \(\mathcal{F}_p(E_{\mathrm{new}}')\) by restricting parameters: set \(\beta_{ij}^{(\ell)}=0\) for all added edges \((i,j)\in E_{\mathrm{new}}'\setminus E_{\mathrm{new}}\) and all layers \(\ell\).

Thus \(\mathcal{F}_p(E_{\mathrm{new}})\subseteq \mathcal{F}_p(E_{\mathrm{new}}')\),
and maximizing a fixed observable \(C\) over a superset cannot decrease the optimum.

\end{proof}

Locality in QAOA at fixed \(p\) is governed by the lightcone induced by the interaction graphs of both the cost layers and the mixer layers.

Adding transport terms on edges of \(E_{\mathrm{new}}\) changes the full interaction graph \(G_{\mathrm{int}}=(V,E\cup E_{\mathrm{new}})\), so conjugating a local cost term \(C_{jk}\) through alternating layers is upper bounded by neighborhoods in \(G_{\mathrm{int}}\).

Heuristically, a smaller interaction-graph diameter enlarges the portion of the instance graph that can influence a local expectation value at depth \(p\), which is the graph-distance obstruction behind locality-based lower bounds.

\section{Lightcone Theory for Transport-Augmented QAOA}

For an edge \((u,v)\in E\) define the local MaxCut observable \(C_{uv}=\frac12(1-Z_uZ_v)\).
The augmented QAOA unitary at depth \(p\) is \(U_p = \prod_{\ell=1}^{p} U_B(\beta^{(\ell)}) U(C,\gamma_\ell)\).
We study the Heisenberg evolution \(C_{uv}^{(p)} = U_p^\dagger C_{uv} U_p\).

\begin{definition}[\(r\)-neighborhood in the interaction graph]

Let \(G_{\mathrm{int}}=(V,E\cup E_{\mathrm{new}})\). For a subset \(S\subseteq V\) and integer \(r\ge0\), define
\begin{equation}
    N_r^{\mathrm{int}}(S)
    :=
    \{
    v\in V:
    \mathrm{dist}_{G_{\mathrm{int}}}(v,S)\le r
    \}
\end{equation}

For a single vertex \(u\in V\), we write \(B_{G_{\mathrm{int}}}(u,r):=N_r^{\mathrm{int}}(\{u\})\). In particular, for any two vertices \(u,v\in V\), \(N_r^{\mathrm{int}}(\{u,v\}) = B_{G_{\mathrm{int}}}(u,r)\cup B_{G_{\mathrm{int}}}(v,r)\).

\end{definition}

\begin{definition}[One-step growth maps]
For a subset \(S\subseteq V\), define the mixer expansion \(N_M(S):=S\cup\{v\in V:\exists\,u\in S\text{ with }(u,v)\in E_{\mathrm{new}}\}\), the cost expansion \(N_C(S):=S\cup\{v\in V:\exists\,u\in S\text{ with }(u,v)\in E\}\), and the full one-layer growth map \(\Phi(S):=N_C(N_M(S))\). Iterating, we write \(\Phi^0(S):=S\) and \(\Phi^{t+1}(S):=\Phi(\Phi^t(S))\).

\end{definition}

\begin{theorem}[Operator-growth recursion and lightcone bound]
Fix depth \(p\) and angles, and let
\(U_p := \prod_{\ell=1}^{p} U_B(\beta^{(\ell)}) U(C,\gamma_\ell)\)
be the corresponding depth-\(p\) alternating unitary.

For any operator \(O_S\) initially supported on a subset \(S\subseteq V\) define
\begin{equation}
  O_S^{(p)} := U_p^\dagger O_S U_p
\end{equation}
Then \(\operatorname{supp}(O_S^{(p)})\) is contained in the iterated growth region \(\Phi^p(S)\), and therefore also obeys the interaction-graph lightcone bound \(\operatorname{supp}(O_S^{(p)}) \subseteq N_{2p}^{\mathrm{int}}(S)\).
\end{theorem}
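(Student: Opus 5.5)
The plan is to prove a one-layer support lemma for conjugation by a product of commuting local unitaries, and then iterate it along the Heisenberg ordering of the circuit. Expanding \(U_p=U_B(\beta^{(p)})U(C,\gamma_p)\cdots U_B(\beta^{(1)})U(C,\gamma_1)\) gives
\[
O_S^{(p)}=U(C,\gamma_1)^\dagger U_B(\beta^{(1)})^\dagger\cdots U(C,\gamma_p)^\dagger U_B(\beta^{(p)})^\dagger\,O_S\,U_B(\beta^{(p)})U(C,\gamma_p)\cdots U_B(\beta^{(1)})U(C,\gamma_1).
\]
So the innermost conjugation is by a mixer layer, followed by a cost layer, and this pattern repeats \(p\) times. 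This ordering is exactly what the composition \(\Phi=N_C\circ N_M\) encodes.

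\textbf{Step 1 (commuting-layer lemma).} Let \(W=\prod_a e^{-i\theta_a h_a}\) with pairwise commuting Hermitian terms \(h_a\), and let \(O\) be supported on \(T\). The claim is that \(\operatorname{supp}(W^\dagger O W)\subseteq T\cup\bigcup_{a:\,\operatorname{supp}(h_a)\cap T\neq\emptyset}\operatorname{supp}(h_a)\). Because the factors commute, we may reorder \(W=W_{\mathrm{far}}W_{\mathrm{near}}\), where \(W_{\mathrm{far}}\) collects the terms whose support is disjoint from \(T\). Then \(W_{\mathrm{far}}\) commutes with \(O\) and with \(W_{\mathrm{near}}\), so it cancels in the conjugation. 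What remains is \(W_{\mathrm{near}}^\dagger O W_{\mathrm{near}}\), a product of operators supported in the stated union.

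We apply this lemma twice.
\begin{itemize}
\item For the commuting \(XX\) mixer \(U_B(\beta)=e^{-iH_B(\beta)}\), all terms \(X_j\) and \(X_iX_j\) commute. The exponential therefore factorizes, and the lemma gives support in \(T\cup\{j:(i,j)\in E_{\mathrm{new}},\,i\in T\}=N_M(T)\). The single-site terms \(X_j\) add nothing.
\item For the cost layer, the terms \(Z_jZ_k\) (and constants) commute, so the lemma gives support in \(N_C(T)\).
\end{itemize}
Hence one full layer maps support \(T\) into \(\Phi(T)\).

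\textbf{Step 2 (induction).} We induct on the number of layers peeled from the inside. Write \(O^{(t)}\) for the operator after the innermost \(t\) mixer/cost pairs, and hypothesize \(\operatorname{supp}(O^{(t)})\subseteq\Phi^t(S)\). The maps \(N_M\), \(N_C\) and \(\Phi\) are monotone under inclusion, since they are unions of neighborhoods. Applying Step 1 to an operator supported in \(\Phi^t(S)\) therefore yields support in \(\Phi(\Phi^t(S))=\Phi^{t+1}(S)\). This monotonicity is needed because the hypothesis is a containment rather than an equality. At \(t=p\) we obtain \(\operatorname{supp}(O_S^{(p)})\subseteq\Phi^p(S)\).

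\textbf{Step 3 (radius bound).} Both \(E\) and \(E_{\mathrm{new}}\) are edge sets of \(G_{\mathrm{int}}\). Consequently \(N_M(N_r^{\mathrm{int}}(S))\subseteq N_{r+1}^{\mathrm{int}}(S)\), and the same holds for \(N_C\), so \(\Phi(N_r^{\mathrm{int}}(S))\subseteq N_{r+2}^{\mathrm{int}}(S)\). Induction from \(N_0^{\mathrm{int}}(S)=S\) gives \(\Phi^p(S)\subseteq N_{2p}^{\mathrm{int}}(S)\).

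\textbf{Main obstacle.} The only delicate point is Step 1. One must fix a precise notion of support: the minimal set \(T\) such that the operator equals \(O_T\otimes I_{V\setminus T}\). One must also justify the cancellation of \(W_{\mathrm{far}}\) cleanly, which relies on global commutativity of the layer's generators. This is exactly why the argument needs the commuting \(XX\) model here, whereas the \(XX+YY\) case requires the scheduled matching sublayers. Everything else is bookkeeping with monotone set maps.
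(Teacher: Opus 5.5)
Your proposal is correct and takes essentially the same route as the paper. Both arguments factor each commuting layer and cancel the factors disjoint from the current support, so one mixer/cost pair maps support $T$ into $N_C(N_M(T))=\Phi(T)$; both then induct over layers and use $E,E_{\mathrm{new}}\subseteq E\cup E_{\mathrm{new}}$ to obtain $\Phi^p(S)\subseteq N_{2p}^{\mathrm{int}}(S)$. Your inside-out peeling, which starts by conjugating with $U_B(\beta^{(p)})$, tracks the Heisenberg ordering more carefully than the paper's indexing of $U_t$, but this does not change the conclusion because every layer has the same mixer-then-cost form.
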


\begin{proof}

For \(t=0,1,\dots,p\), define \(U_t := \prod_{\ell=1}^{t} U_B(\beta^{(\ell)})U(C,\gamma_\ell)\) and \(O_S^{(t)} := U_t^\dagger O_S U_t\), with \(U_0=\mathbb{I}\), and let \(S_t:=\Phi^t(S)\). We first prove by induction on \(t\) that \(\operatorname{supp}(O_S^{(t)})\subseteq S_t\).

The base case \(t=0\) is immediate because \(O_S\) is supported on \(S=S_0\).

For the inductive step, assume \(\operatorname{supp}(O_S^{(t)})\subseteq S_t\).
Write the next mixer layer as
\begin{equation}
    U_B(\beta^{(t+1)})
    =
    \prod_{u\in V}
    e^{-i\beta_0^{(t+1)}X_u}
    \prod_{(a,b)\in E_{\mathrm{new}}}
    e^{-i\beta_{ab}^{(t+1)}X_aX_b}
\end{equation}
and the next cost layer as
\begin{equation}
    U(C,\gamma_{t+1})
    =
    \prod_{(a,b)\in E}
    e^{-i\gamma_{t+1}C_{ab}}
\end{equation}

If a unitary \(U_T\) acts trivially outside a subset \(T\subseteq V\), then \(\operatorname{supp}(U_T^\dagger O_R U_T)\subseteq R\cup T\) for every operator \(O_R\) supported on \(R\). The single-qubit \(X_u\) factors do not enlarge support. For the two-qubit \(XX\) factors, all mixer terms commute with one another. Hence all edge factors whose support is disjoint from \(S_t\) commute through \(O_S^{(t)}\) and cancel against their adjoints. The only remaining two-qubit mixer factors are supported on transport edges incident on \(S_t\), and each such factor can only adjoin the other endpoint of that edge. Therefore
\(\operatorname{supp}\!\left(U_B(\beta^{(t+1)})^\dagger O_S^{(t)} U_B(\beta^{(t+1)})\right)\subseteq N_M(S_t)\).

Conjugating next by \(U(C,\gamma_{t+1})\) can enlarge support by at most one additional hop along a cost edge \((a,b)\in E\subseteq E\cup E_{\mathrm{new}}\). Therefore \(\operatorname{supp}(O_S^{(t+1)})\subseteq N_C(N_M(S_t))=S_{t+1}\).
Thus \(\operatorname{supp}(O_S^{(t)})\subseteq S_t\) for all \(t\), and in particular \(\operatorname{supp}(O_S^{(p)})\subseteq \Phi^p(S)\).

To compare \(\Phi^t(S)\) with interaction-graph neighborhoods, note that \(N_M(S)\subseteq N_1^{\mathrm{int}}(S)\) and \(N_C(S)\subseteq N_1^{\mathrm{int}}(S)\),
because \(E_{\mathrm{new}}\subseteq E\cup E_{\mathrm{new}}\) and \(E\subseteq E\cup E_{\mathrm{new}}\). Therefore \(\Phi(S)=N_C(N_M(S))\subseteq N_2^{\mathrm{int}}(S)\). Inducting on \(t\) then gives \(\Phi^t(S)\subseteq N_{2t}^{\mathrm{int}}(S)\).
Combining this with the first part proves \(\operatorname{supp}(O_S^{(p)})\subseteq \Phi^p(S)\subseteq N_{2p}^{\mathrm{int}}(S)\).

\end{proof}

\begin{theorem}[Support bound for scheduled \(XX+YY\) transport mixers]
Let \(E_{\mathrm{new}}=\mathcal M_1\cup\cdots\cup\mathcal M_q\) be an edge coloring of the transport graph \(G_M\), so each \(\mathcal M_c\) is a matching. Let each mixer layer be the scheduled \(XX+YY\) unitary
\[
\begin{aligned}
U_B^{XY}(\theta^{(\ell)})
&=
\prod_{c=1}^{q}V_c(\theta^{(\ell)})
\prod_{u\in V}e^{-i\eta_u^{(\ell)}X_u},
\\
V_c(\theta^{(\ell)})&:=
\prod_{(a,b)\in \mathcal M_c}
\exp\!\left[-i\left(\theta_{ab}^{x,(\ell)}X_aX_b+\theta_{ab}^{y,(\ell)}Y_aY_b\right)\right]
\end{aligned}
\]
Define \(\Psi_q(S):=N_C(N_M^q(S))\), where \(N_M^0(S):=S\) and \(N_M^{m+1}(S):=N_M(N_M^m(S))\). For the depth-\(p\) unitary
\[
U_p^{XY}:=
\prod_{\ell=1}^{p}
U_B^{XY}(\theta^{(\ell)})U(C,\gamma_\ell)
\]
any operator \(O_S\) initially supported on \(S\subseteq V\) satisfies
\[
\operatorname{supp}\!\left((U_p^{XY})^\dagger O_S U_p^{XY}\right)
\subseteq
\Psi_q^p(S)
\subseteq
N_{(q+1)p}^{\mathrm{int}}(S)
\]
In particular, if \(q=1\), meaning the transport graph is a matching in each mixer layer, the same radius-\(2p\) bound as the commuting \(XX\) mixer is recovered.
\end{theorem}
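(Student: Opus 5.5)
We follow the template of the operator-growth recursion theorem above, inducting on the layer index. The only change is inside each mixer layer: adjacent transport gates no longer commute. We therefore conjugate the evolving operator through the mixer one matching sublayer at a time, instead of treating the whole mixer layer as a single commuting product.

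Let \(U_t^{XY}\) denote the first \(t\) layers and \(O_S^{(t)}=(U_t^{XY})^\dagger O_S U_t^{XY}\). We prove \(\operatorname{supp}(O_S^{(t)})\subseteq\Psi_q^t(S)\) by induction on \(t\); the base case \(t=0\) is trivial. The basic tool is the support rule already used in the previous proof: if a unitary \(W\) factors as a product of mutually commuting unitaries \(W_e\) with disjoint supports \(T_e\), then conjugating an operator supported on \(R\) by \(W\) yields support in
\[
R\cup\bigcup_{e:\,T_e\cap R\neq\emptyset}T_e .
\]
Each factor with \(T_e\cap R=\emptyset\) commutes with the operator and cancels against its adjoint.

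We apply this rule to the sublayers in the order in which they are peeled off in the Heisenberg picture. Because \(U_p^{XY}\) is written with the mixer to the left of the cost layer within each layer (and layer \(\ell\) to the right of layer \(\ell+1\)), we need to fix the conjugation order carefully. In passing from \(O_S^{(t)}\) to \(O_S^{(t+1)}\), the operator is first conjugated by the mixer factors and then by \(U(C,\gamma_{t+1})\). This mirrors the previous theorem, where \(N_M\) is applied before \(N_C\) in \(\Phi\).

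Within the mixer, the single-qubit factors \(e^{-i\eta_uX_u}\) do not enlarge support. Each \(V_c\) is a product of commuting two-qubit gates on the disjoint edges of the matching \(\mathcal M_c\). By the rule above, conjugation by \(V_c\) maps an operator supported on \(R\) to one supported in \(R\cup\{b:(a,b)\in\mathcal M_c,\ a\in R\}\subseteq N_M(R)\). Applying this for \(c=1,\dots,q\) in whichever order the product prescribes gives support in \(N_M^q(\operatorname{supp}O_S^{(t)})\subseteq N_M^q(\Psi_q^t(S))\), using monotonicity of \(N_M\) under inclusion. The cost layer is a product of commuting diagonal two-qubit factors over \(E\), so the same rule gives one further application of \(N_C\). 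Together these yield \(\Psi_q^{t+1}(S)\). The ordering of the colors is irrelevant for this bound, since each sublayer contributes at most one \(N_M\) step regardless of position.

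For the radius bound, we note as before that \(N_M(R)\subseteq N_1^{\mathrm{int}}(R)\), \(N_C(R)\subseteq N_1^{\mathrm{int}}(R)\), and \(N_a^{\mathrm{int}}(N_b^{\mathrm{int}}(R))=N_{a+b}^{\mathrm{int}}(R)\) by the triangle inequality for graph distance. Hence \(\Psi_q(R)\subseteq N_{q+1}^{\mathrm{int}}(R)\), and by induction \(\Psi_q^p(S)\subseteq N_{(q+1)p}^{\mathrm{int}}(S)\). Setting \(q=1\) gives \(\Psi_1=\Phi\) and the radius \(2p\).

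The main obstacle is the noncommutativity across color classes. One cannot cancel all transport gates not touching the current support at once, as was done for \(XX\): a gate in \(\mathcal M_2\) disjoint from \(R\) may meet the enlarged support produced by \(\mathcal M_1\). The sublayer-by-sublayer argument handles this honestly and accounts for the factor \(q\). The remaining care is bookkeeping of the Heisenberg conjugation order, to confirm that mixer sublayers precede the cost expansion within each step, consistent with the definition \(\Psi_q=N_C\circ N_M^q\).
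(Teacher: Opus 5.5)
Your proof is correct and follows the paper's argument: conjugate through one matching sublayer at a time (each adding at most one $N_M$ hop, with the single-qubit rotations harmless), then through the cost layer (one $N_C$ hop), induct over layers, and finish with $\Psi_q(R)\subseteq N_{q+1}^{\mathrm{int}}(R)$. One bookkeeping slip: with later layers on the left, $O_S^{(t+1)}$ is $O_S$ conjugated first by layer $t+1$ and only then by $U_t^{XY}$, not $O_S^{(t)}$ conjugated by layer $t+1$. The fix is to state the one-layer update for an arbitrary operator supported on an arbitrary $R$ and then iterate, as the paper does; this is harmless because every layer has the same growth map $\Psi_q$.
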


\begin{proof}
We first prove the one-layer support update. Suppose an operator is supported on a set \(R\). A color class \(\mathcal M_c\) is a matching, so the gates in \(V_c\) have disjoint supports. Gates on edges disjoint from the current support commute through the operator and cancel against their adjoints. Gates on edges incident on the current support can only add the opposite endpoint of that edge. Therefore conjugation by one matching sublayer maps support into \(N_M(R)\).
Applying the \(q\) color classes in the scheduled order gives support contained in \(N_M^q(R)\). The single-qubit rotations \(e^{-i\eta_u X_u}\) do not enlarge support. Conjugating afterward by the cost layer \(U(C,\gamma_\ell)\) can add at most one cost-graph hop, so one full \(XX+YY\) QAOA layer maps \(R\) into \(N_C(N_M^q(R))=\Psi_q(R)\).

Inducting over \(p\) alternating layers gives
\[
\operatorname{supp}\!\left((U_p^{XY})^\dagger O_S U_p^{XY}\right)
\subseteq
\Psi_q^p(S)
\]
Finally, \(N_M(R)\subseteq N_1^{\mathrm{int}}(R)\) and \(N_C(R)\subseteq N_1^{\mathrm{int}}(R)\), so \(\Psi_q(R)=N_C(N_M^q(R))\subseteq N_{q+1}^{\mathrm{int}}(R)\). Iterating this inclusion yields \(\Psi_q^p(S)\subseteq N_{(q+1)p}^{\mathrm{int}}(S)\), which proves the theorem.

\end{proof}

For a commuting \(XX\) mixer, the Hamiltonian exponential
\[
\exp\!\left[-i\sum_{(i,j)\in E_{\mathrm{new}}}\beta_{ij}X_iX_j\right]
\]
has an exact one-hop support bound because all transport terms commute. By contrast, the analog \(XX+YY\) Hamiltonian
\[
\sum_{(i,j)\in E_{\mathrm{new}}}
\left(
\theta_{ij}^{x}X_iX_j+\theta_{ij}^{y}Y_iY_j
\right)
\]
contains noncommuting terms whenever transport edges share a vertex. 

The commuting \(XX\) mixer is algebraically convenient because all transport terms \(X_iX_j\) commute, so the entire mixer exponential has an exact one-mixer-hop support bound. The \(XX+YY\) mixer has a more direct transport interpretation:
\[
X_iX_j+Y_iY_j
=
2(\sigma_i^+\sigma_j^-+\sigma_i^-\sigma_j^+)
\]
so it moves a single excitation between \(i\) and \(j\) and, without the optional single-qubit \(X\) rotations, preserves Hamming weight. Its adjacent-edge terms do not commute, however, so an exact compact lightcone is obtained for the scheduled edge-color circuit rather than for the unscheduled Hamiltonian exponential; the price is that the lightcone velocity depends on the number \(q\) of matching sublayers.

\begin{corollary}[Full-lightcone criterion for scheduled \(XX+YY\)]
For a scheduled \(XX+YY\) mixer implemented with \(q\) edge-color sublayers, if
\[
\operatorname{diam}(G_{\mathrm{int}})\le (q+1)p
\]
then for every nonempty \(S\subseteq V\),
\[
N_{(q+1)p}^{\mathrm{int}}(S)=V
\]
Thus the graph-distance support bound for the scheduled \(XX+YY\) ansatz becomes the entire vertex set at depth \(p\). Equivalently, in the generic augmentation problem one replaces the commuting-\(XX\) target \(D=2p\) by
\[
D=(q+1)p
\]

\end{corollary}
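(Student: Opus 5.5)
The claim follows almost immediately from the definition of diameter, so the plan is a short set-theoretic argument. The support bound itself comes from the preceding theorem.

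Fix a nonempty \(S\subseteq V\) and pick any \(s\in S\). The first step is to note that the conclusion only makes sense when \(G_{\mathrm{int}}\) is connected, so that \(\operatorname{diam}(G_{\mathrm{int}})\) is finite. This holds in the MaxCut setting because the instance graph \(G\) is connected and \(E\subseteq E\cup E_{\mathrm{new}}\). If \(G\) could be disconnected, the diameter would be \(+\infty\) and the hypothesis would fail vacuously, so nothing is lost.

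For any \(v\in V\) we then have
\[
\mathrm{dist}_{G_{\mathrm{int}}}(v,S)\le \mathrm{dist}_{G_{\mathrm{int}}}(v,s)\le \operatorname{diam}(G_{\mathrm{int}})\le (q+1)p .
\]
By the definition of the \(r\)-neighborhood, this places \(v\in N_{(q+1)p}^{\mathrm{int}}(S)\). Since \(v\) was arbitrary and the reverse inclusion is trivial, we get \(N_{(q+1)p}^{\mathrm{int}}(S)=V\).

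The interpretive sentence then follows by combining this with the support bound for scheduled \(XX+YY\) transport mixers. That theorem gives
\[
\operatorname{supp}\!\left((U_p^{XY})^\dagger O_S U_p^{XY}\right)\subseteq N_{(q+1)p}^{\mathrm{int}}(S).
\]
Under the diameter hypothesis the right-hand side is all of \(V\), so the graph-distance bound imposes no restriction at depth \(p\). The bound itself remains valid. The hypothesis does not say that the true support equals \(V\), only that the radius bound no longer excludes any vertex.

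For the ``equivalently'' clause, recall the commuting case. There the preceding operator-growth theorem gives radius \(2p\), so the augmentation target is \(\operatorname{diam}\le 2p\). Replacing the radius \(2p\) by \((q+1)p\) yields the target \(D=(q+1)p\).

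There is no real obstacle. The only points needing care are the connectivity and finite-diameter caveat, and keeping the conclusion stated as a statement about the bound rather than about the actual operator support.
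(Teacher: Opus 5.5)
Your argument is correct and is the same as the paper's: fix \(s\in S\) and chain \(\mathrm{dist}_{G_{\mathrm{int}}}(v,S)\le \mathrm{dist}_{G_{\mathrm{int}}}(v,s)\le \operatorname{diam}(G_{\mathrm{int}})\le (q+1)p\) for an arbitrary \(v\in V\). Your connectivity caveat is harmless but unnecessary, and slightly misattributed: the hypothesis \(\operatorname{diam}(G_{\mathrm{int}})\le (q+1)p\) already forces \(G_{\mathrm{int}}\) to be connected whether or not \(G\) is, since shortcut edges in \(E_{\mathrm{new}}\) can join components of \(G\).
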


\begin{proof}
The proof is identical to the full-lightcone criterion for the commuting \(XX\) mixer, with radius \(2p\) replaced by \((q+1)p\). If \(s\in S\) and \(v\in V\), then
\[
\mathrm{dist}_{G_{\mathrm{int}}}(v,S)
\le
\mathrm{dist}_{G_{\mathrm{int}}}(v,s)
\le
\operatorname{diam}(G_{\mathrm{int}})
\le
(q+1)p
\]
Therefore \(v\in N_{(q+1)p}^{\mathrm{int}}(S)\), and since \(v\) was arbitrary,
\[
N_{(q+1)p}^{\mathrm{int}}(S)=V
\]

\end{proof}

\begin{corollary}[Exact finite-depth commutator bound]
Let \(O_S\) and \(O_T\) be operators initially supported on subsets \(S,T\subseteq V\). If \(T\cap \Phi^p(S)=\emptyset\), then \([U_p^\dagger O_S U_p,\; O_T]=0\). In particular, if \(\mathrm{dist}_{G_{\mathrm{int}}}(S,T)>2p\), then \([U_p^\dagger O_S U_p,\; O_T]=0\).
\end{corollary}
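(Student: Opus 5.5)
The plan is to derive the commutator bound directly from the Operator-growth recursion theorem, using the standard fact that operators acting on disjoint sets of qubits commute.

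\textbf{Step 1 (main claim).} By the Operator-growth recursion theorem, \(U_p^\dagger O_S U_p\) acts nontrivially only on \(\Phi^p(S)\). Concretely, it can be written as \(A\otimes \mathbb{I}_{V\setminus \Phi^p(S)}\), where \(A\) is an operator on the qubits in \(\Phi^p(S)\). Likewise \(O_T=B\otimes \mathbb{I}_{V\setminus T}\). Under the hypothesis \(T\cap\Phi^p(S)=\emptyset\), the two factors live on disjoint tensor factors of the Hilbert space. Ordering the qubits as \(\Phi^p(S)\), \(T\), and the remainder, the product in either order equals \(A\otimes B\otimes\mathbb{I}\). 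The commutator therefore vanishes.

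\textbf{Step 2 (the distance version).} It remains to show that \(\mathrm{dist}_{G_{\mathrm{int}}}(S,T)>2p\) implies \(T\cap\Phi^p(S)=\emptyset\).
\begin{itemize}
\item The same theorem gives \(\Phi^p(S)\subseteq N_{2p}^{\mathrm{int}}(S)\).
\item Every \(v\in T\) satisfies \(\mathrm{dist}_{G_{\mathrm{int}}}(v,S)\ge \mathrm{dist}_{G_{\mathrm{int}}}(S,T)>2p\).
\item Hence no vertex of \(T\) lies in \(N_{2p}^{\mathrm{int}}(S)\), and so none lies in \(\Phi^p(S)\).
\end{itemize}
Applying the first part then finishes the proof.

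\textbf{Main obstacle.} The one subtle point is making precise what "support contained in \(R\)" means, namely that the operator factorizes as an operator on \(R\) tensored with the identity elsewhere. This is the meaning used implicitly in the earlier theorem, so the commutation step is immediate.

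One convention also needs care: if \(S\) and \(T\) lie in different connected components, the distance is \(+\infty\), and the argument goes through unchanged. The same reasoning extends verbatim to the scheduled \(XX+YY\) setting, with \(\Phi^p\) replaced by \(\Psi_q^p\) and \(2p\) replaced by \((q+1)p\).
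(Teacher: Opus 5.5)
Your proof is correct and follows the paper's argument: the operator-growth theorem gives \(\operatorname{supp}(U_p^\dagger O_S U_p)\subseteq\Phi^p(S)\), operators on disjoint tensor factors commute, and the distance version follows from \(\Phi^p(S)\subseteq N_{2p}^{\mathrm{int}}(S)\). Your extra detail on the tensor-factor meaning of support, and on why \(\mathrm{dist}_{G_{\mathrm{int}}}(S,T)>2p\) keeps \(T\) outside \(N_{2p}^{\mathrm{int}}(S)\), makes explicit what the paper leaves implicit.
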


\begin{proof}
By the theorem, \(\operatorname{supp}(U_p^\dagger O_S U_p)\subseteq \Phi^p(S)\). Hence \(U_p^\dagger O_S U_p\) and \(O_T\) act on disjoint qubit sets when \(T\cap\Phi^p(S)=\emptyset\), so they commute. The distance-based statement follows from \(\Phi^p(S)\subseteq N_{2p}^{\mathrm{int}}(S)\).
\end{proof}

\begin{remark}[Heisenberg lightcones versus amplitudes]
 The lightcone statements in this section are Heisenberg-picture statements about the support of evolved observables. They imply that the expectation value of a local observable, such as
 \[
 \langle \psi_p|C_{uv}|\psi_p\rangle
 =
 \langle s|U_p^\dagger C_{uv}U_p|s\rangle
 \]
 depends only on the qubits inside the corresponding evolved-operator support. They do not assert that individual computational-basis amplitudes in an expansion
 \[
 \ket{\psi_p}=\sum_z c_z\ket{z}
 \]
 factorize or depend only on local graph neighborhoods. A single amplitude \(c_z\) is a global Schrödinger-picture quantity involving many computational-basis paths. The locality result proved here is therefore the standard QAOA locality statement: local expectation values are controlled by the Heisenberg lightcone of the local observable.

 \end{remark}

\begin{corollary}[Full-lightcone criterion]
If
\[
\operatorname{diam}(G_{\mathrm{int}})\le 2p
\]
then for every nonempty subset \(S\subseteq V\)
\[
N_{2p}^{\mathrm{int}}(S)=V
\]
Consequently, for every operator \(O_S\) initially supported on \(S\), the graph-distance support bound from the theorem becomes the entire vertex set \(V\).

\end{corollary}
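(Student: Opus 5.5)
The argument is a direct distance computation, the same one used for the scheduled \(XX+YY\) full-lightcone corollary with \((q+1)p\) replaced by \(2p\).

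Fix a nonempty \(S\subseteq V\) and choose some \(s\in S\). Take an arbitrary vertex \(v\in V\). Since \(\mathrm{dist}_{G_{\mathrm{int}}}(v,S)\) is a minimum over the elements of \(S\), we get
\[
\mathrm{dist}_{G_{\mathrm{int}}}(v,S)\le \mathrm{dist}_{G_{\mathrm{int}}}(v,s)\le \operatorname{diam}(G_{\mathrm{int}})\le 2p .
\]
By the definition of the \(r\)-neighborhood, this gives \(v\in N_{2p}^{\mathrm{int}}(S)\). As \(v\) was arbitrary, \(V\subseteq N_{2p}^{\mathrm{int}}(S)\). The reverse inclusion holds by definition, so \(N_{2p}^{\mathrm{int}}(S)=V\).

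For the consequence about operators, I would invoke the operator-growth recursion theorem. It gives \(\operatorname{supp}(O_S^{(p)})\subseteq \Phi^p(S)\subseteq N_{2p}^{\mathrm{int}}(S)\), and the right-hand side has just been shown to equal \(V\). So the graph-distance support bound becomes the full vertex set.

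The only point needing care is implicit in the finite-diameter hypothesis. If \(G_{\mathrm{int}}\) were disconnected, its diameter would be \(\infty\) and the hypothesis would fail, so every pairwise distance above is finite. Apart from this, there is no real obstacle; the statement is a direct consequence of the definitions.
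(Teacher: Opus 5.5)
Your proposal is correct and matches the paper's proof essentially verbatim: fix $s\in S$, then for arbitrary $v$ bound $\mathrm{dist}_{G_{\mathrm{int}}}(v,S)\le \mathrm{dist}_{G_{\mathrm{int}}}(v,s)\le \operatorname{diam}(G_{\mathrm{int}})\le 2p$. Your explicit appeal to the operator-growth theorem for the ``consequently'' clause and your remark on connectedness are harmless additions that the paper leaves implicit.
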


\begin{proof}
Fix a nonempty \(S\subseteq V\), choose \(s\in S\), and let \(v\in V\) be arbitrary. Then
\[
\mathrm{dist}_{G_{\mathrm{int}}}(v,S)
\le
\mathrm{dist}_{G_{\mathrm{int}}}(v,s)
\le
\operatorname{diam}(G_{\mathrm{int}})
\le
2p
\]
Hence \(v\in N_{2p}^{\mathrm{int}}(S)\), proving
\[
N_{2p}^{\mathrm{int}}(S)=V
\]

\end{proof}

\begin{remark}[Lieb--Robinson interpretation]

The theorem and exact commutator corollary constitute a finite-depth Lieb--Robinson statement specialized to an alternating local quantum circuit.

Operator support can expand by at most two graph hops per full QAOA layer: one arising from the mixer graph \(G_M\) and one from the cost graph \(G_C\), both of which are contained in \(G_{\mathrm{int}}\).

Thus reducing
\(
\mathrm{diam}(G_{\mathrm{int}})
\)
increases the portion of the graph that lies inside
\(
N_{2p}^{\mathrm{int}}(\{j,k\})\)
at fixed depth \(p\).

\end{remark}

\section{Consequences, Graph Augmentation, and Symmetry}

The operator-growth theorem above implies that finite-depth support propagation is controlled by distances in the interaction graph \(G_{\mathrm{int}}\), rather than the original graph \(G\) alone.
In standard QAOA, the Heisenberg support of a local edge observable can expand by at most one cost-graph hop per layer, so after depth \(p\) it is confined to a radius-\(p\) neighborhood in the original graph \(G\).
Under transport augmentation with a commuting \(XX\) transport mixer, the support can expand by at most one mixer hop and one cost hop per layer, so after depth \(p\) it is confined to a radius-\(2p\) neighborhood in the interaction graph \(G_{\mathrm{int}}\). More generally, for scheduled \(XX+YY\) transport with \(q\) matching sublayers per mixer layer, the radius bound becomes \((q+1)p\).
Therefore, shrinking the interaction-graph diameter (i.e., achieving \(\operatorname{diam}(G_{\mathrm{int}})<\operatorname{diam}(G)\)) relaxes finite-depth graph-distance obstructions by bringing more of the instance within the relevant lightcones at fixed depth \(p\).

\subsection{Generic Graph-Augmentation Framework}

We now make the graph-agnostic augmentation problem explicit. The cost Hamiltonian is always kept on the original instance graph \(G=(V,E)\), and only the mixer interaction pattern is enlarged.

\begin{definition}[Diameter-reducing augmentation problem]
Let \(G=(V,E)\) be a finite instance graph, and fix a target diameter \(D\ge1\). Define the feasible augmentation family
\[
\mathcal A_D(G):=\left\{
F\subseteq \binom{V}{2}\setminus E:
\operatorname{diam}(V,E\cup F)\le D
\right\}
\]
The corresponding diameter-reduction number is
\[
\alpha_D(G)
:=
\min_{F\in\mathcal A_D(G)} |F|
\]
Any minimizer \(F_D^\star\in\mathcal A_D(G)\) is called an optimal diameter-\(D\) augmentation.

Because \(V\) is finite and \(F=\binom{V}{2}\setminus E\) produces the complete graph \(K_{|V|}\) of diameter \(1\), the feasible set \(\mathcal A_D(G)\) is nonempty for every \(D\ge1\), and the minimum exists.

\end{definition}

\begin{figure*}[!tbp]
\centering
\subfloat[\label{fig:reduction-base}] {
\includegraphics[width=0.23\textwidth]{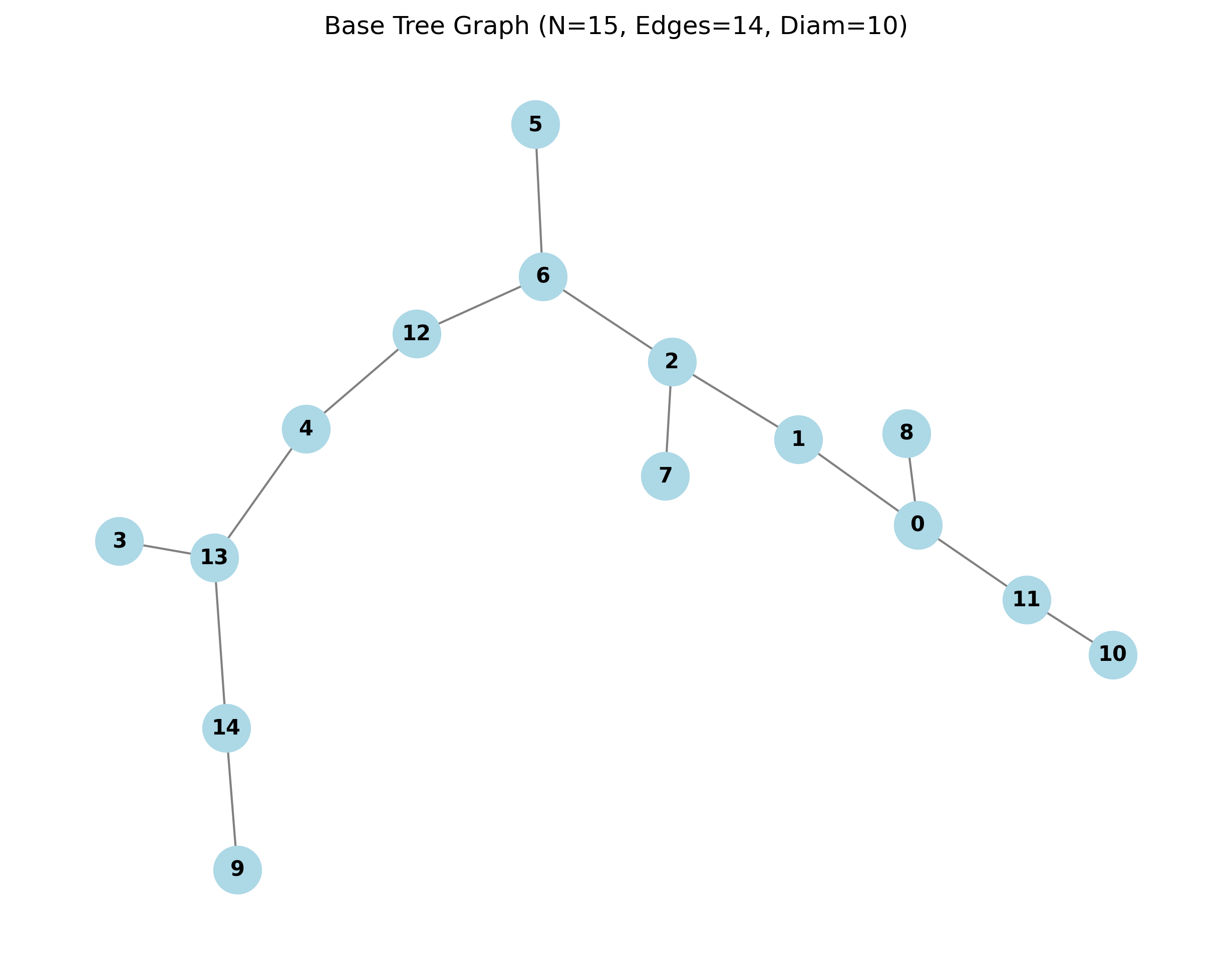}
}\hfill
\subfloat[\label{fig:reduction-d7}] {
\includegraphics[width=0.23\textwidth]{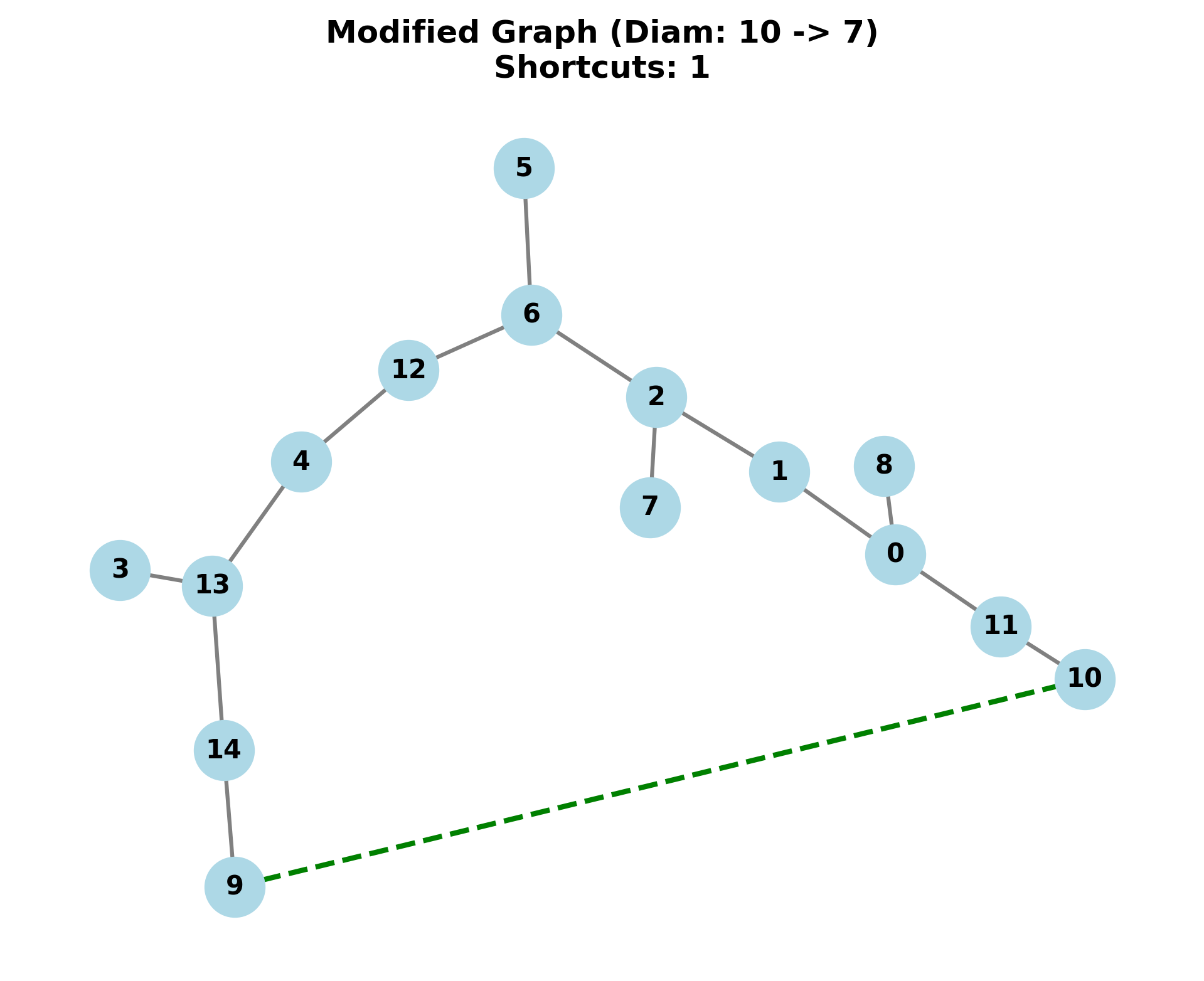}
}\hfill
\subfloat[\label{fig:reduction-d6}] {
\includegraphics[width=0.23\textwidth]{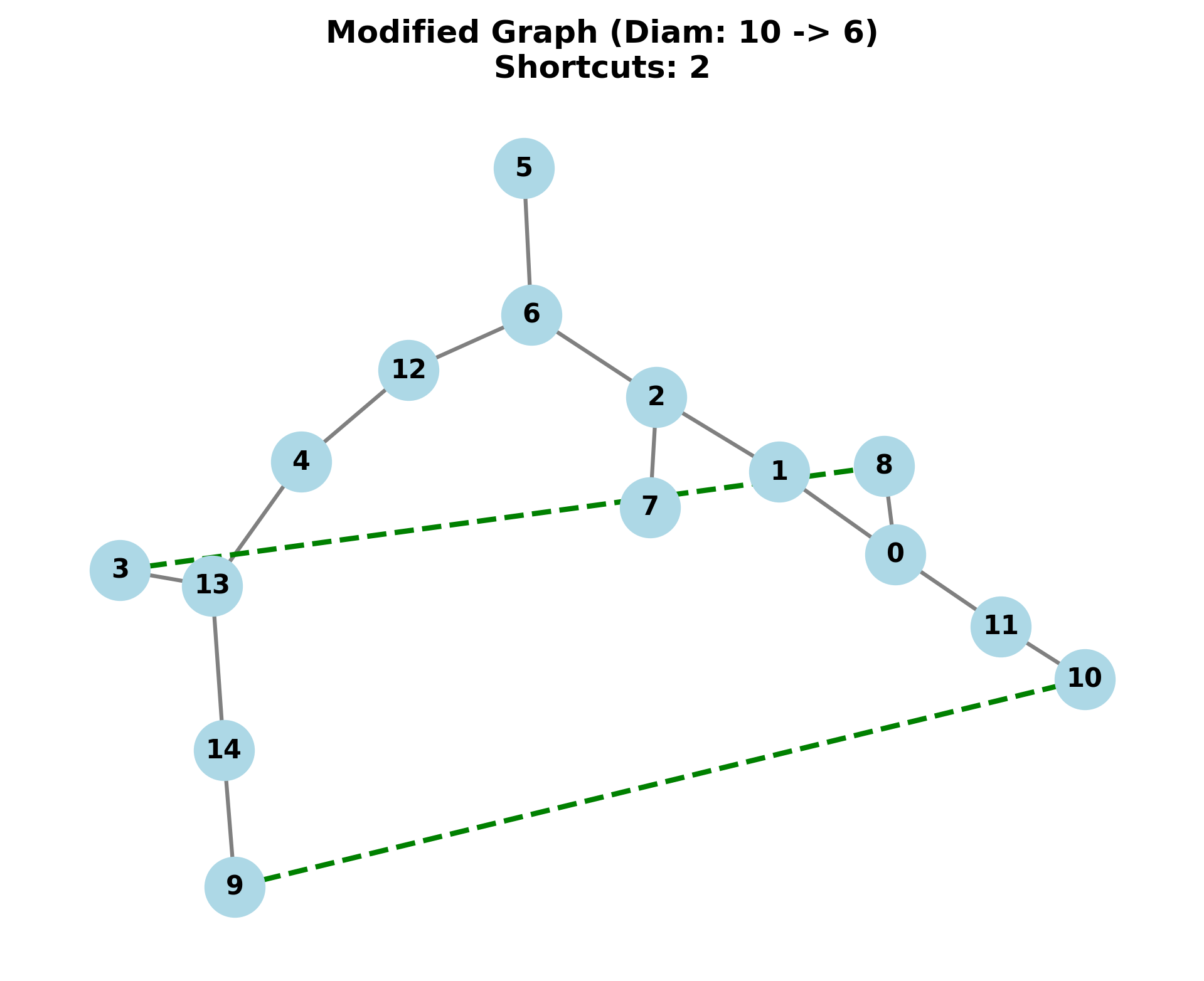}
}\hfill
\subfloat[\label{fig:reduction-d5}] {
\includegraphics[width=0.23\textwidth]{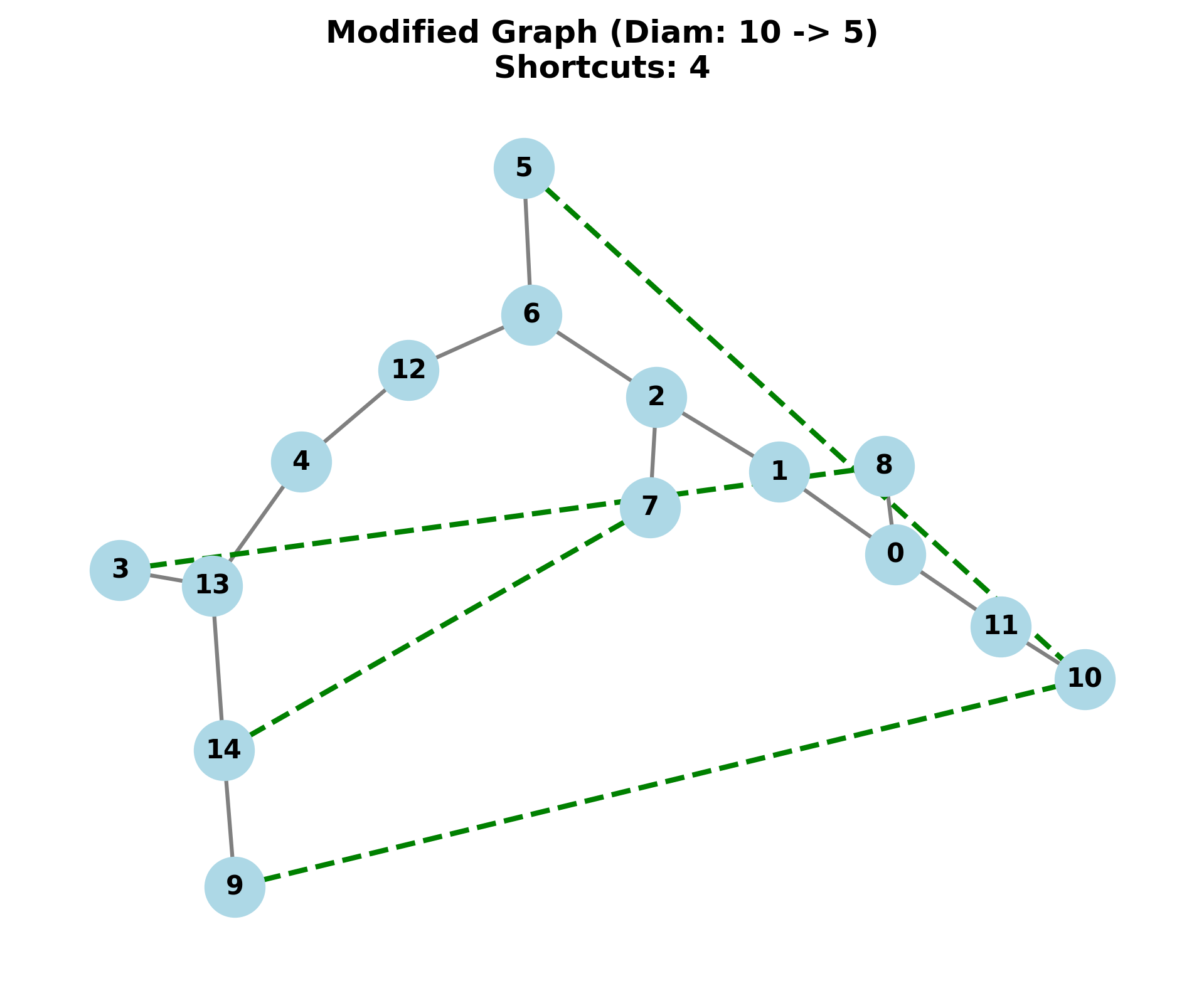}
}

\vspace{0.4cm}

\subfloat[\label{fig:reduction-d4}] {
\includegraphics[width=0.23\textwidth]{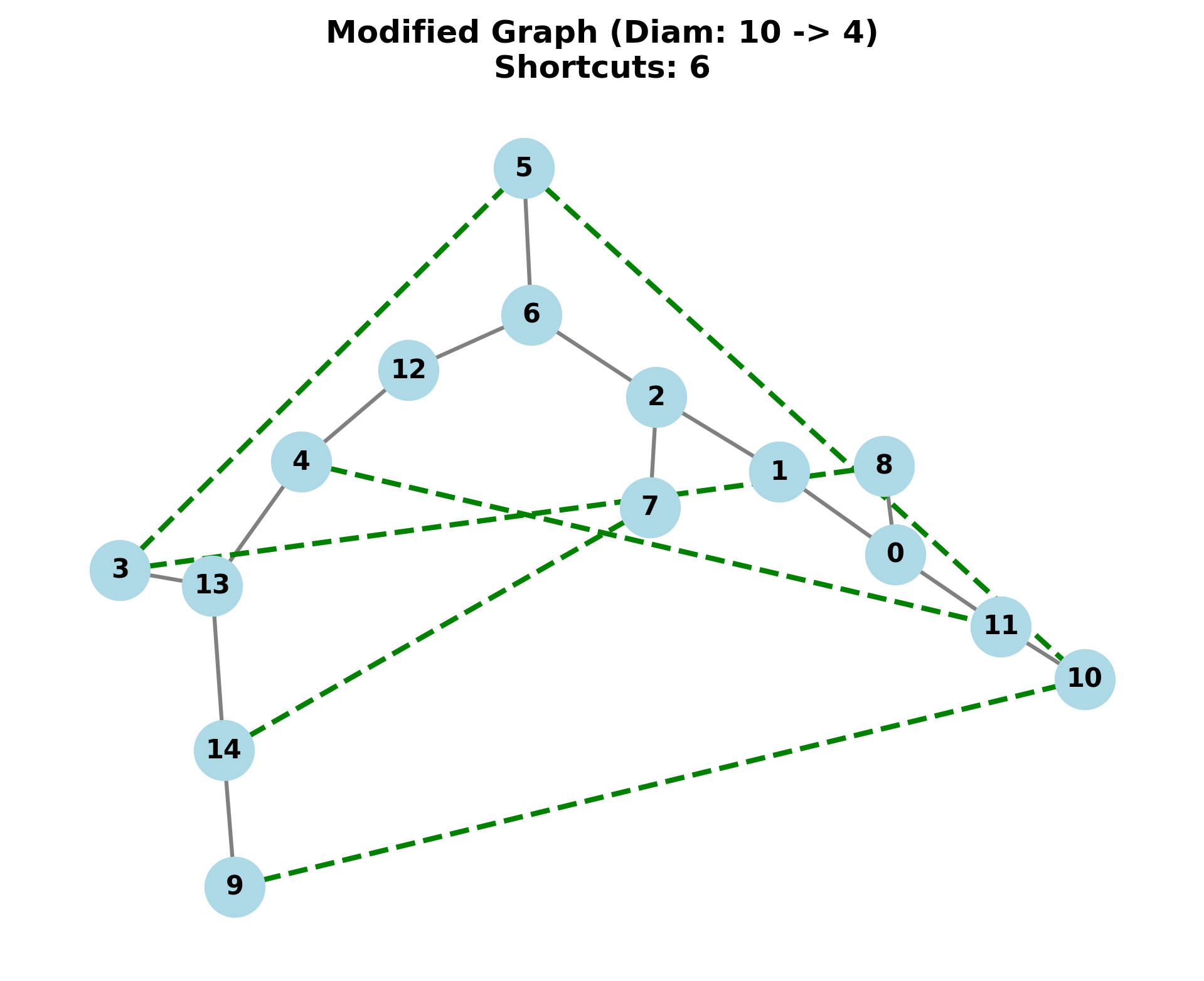}
}\hfill
\subfloat[\label{fig:reduction-d3}] {
\includegraphics[width=0.23\textwidth]{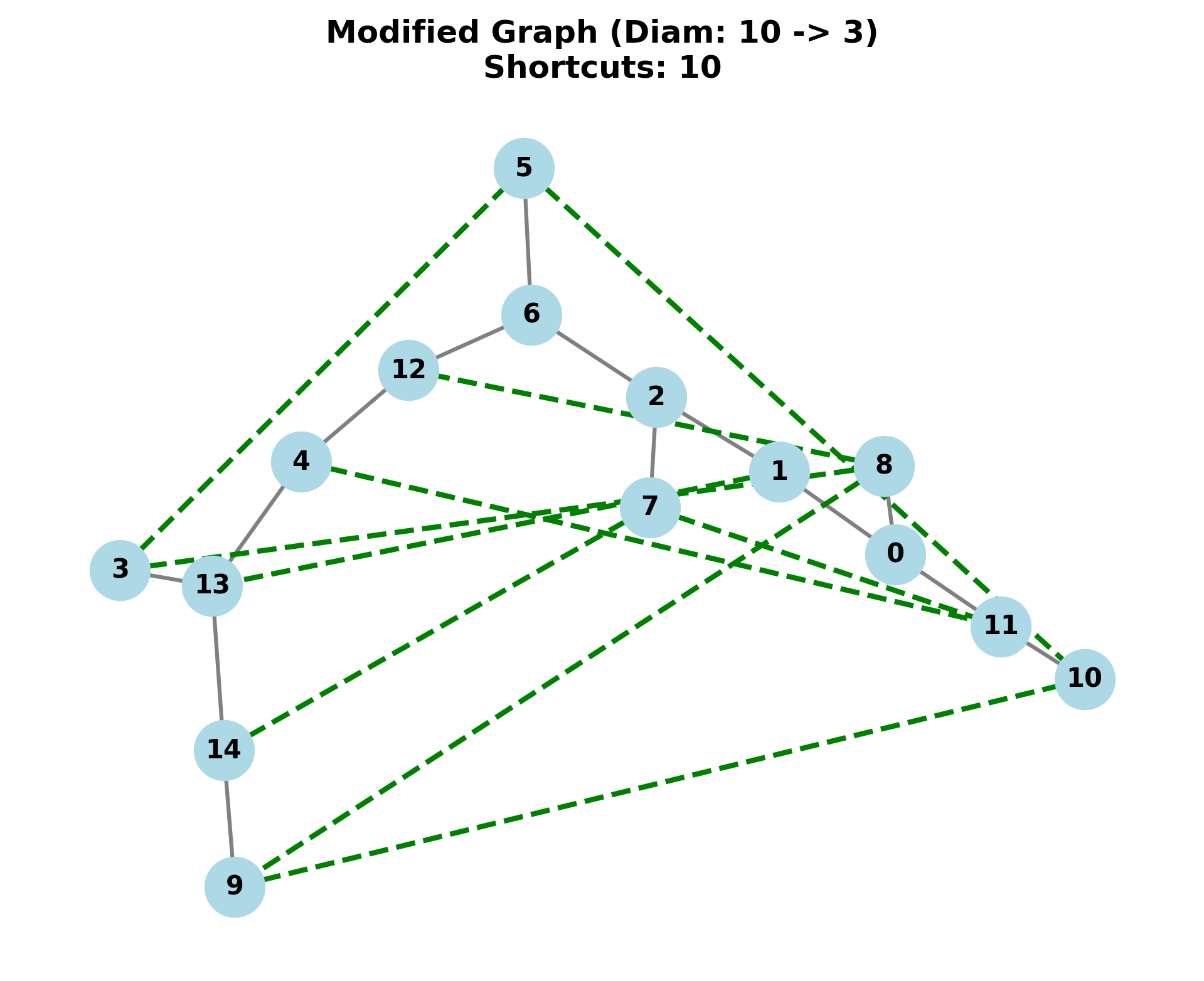}
}\hfill
\subfloat[\label{fig:reduction-d2}] {
\includegraphics[width=0.23\textwidth]{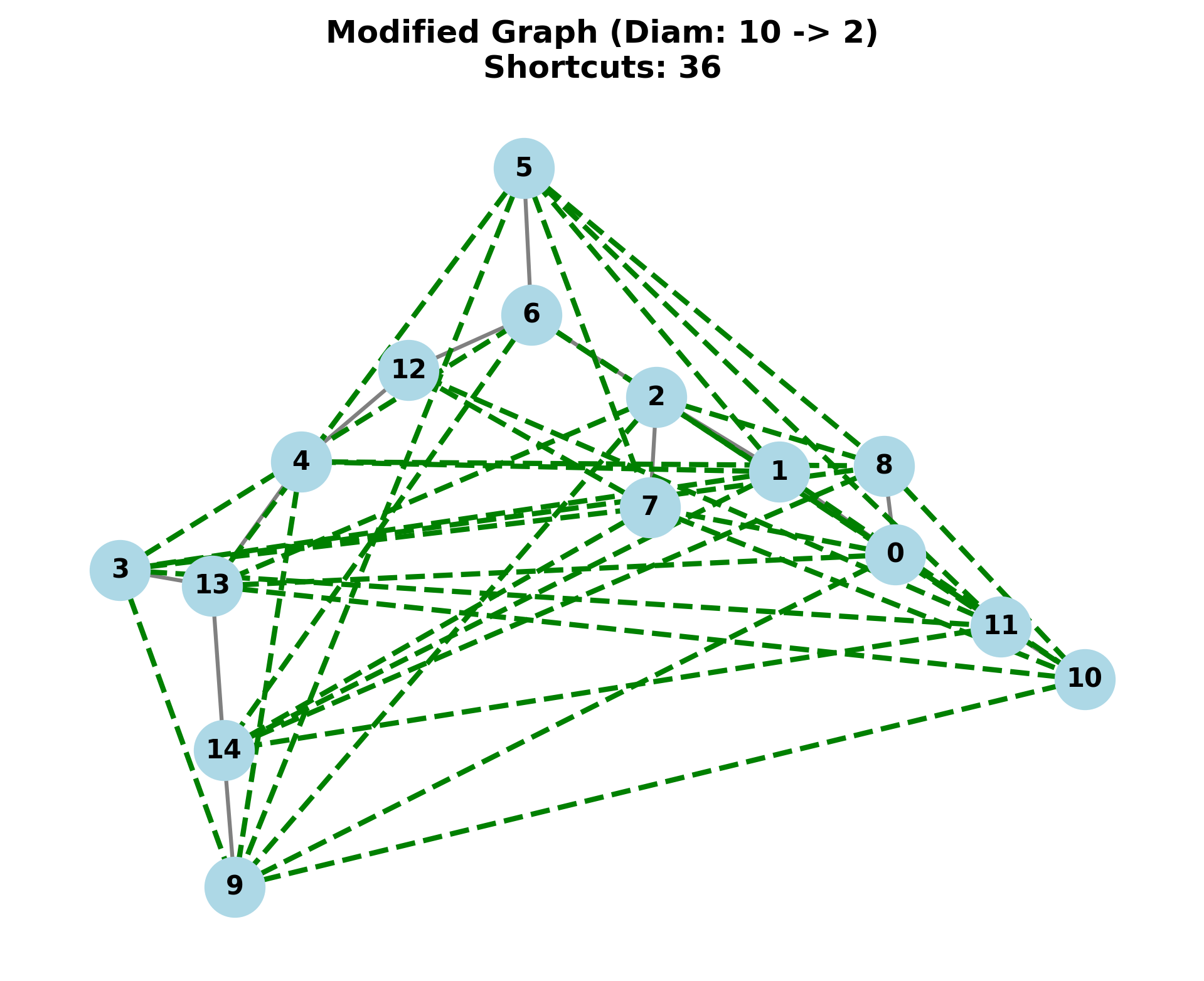}
}\hfill
\subfloat[\label{fig:reduction-d1}] {
\includegraphics[width=0.23\textwidth]{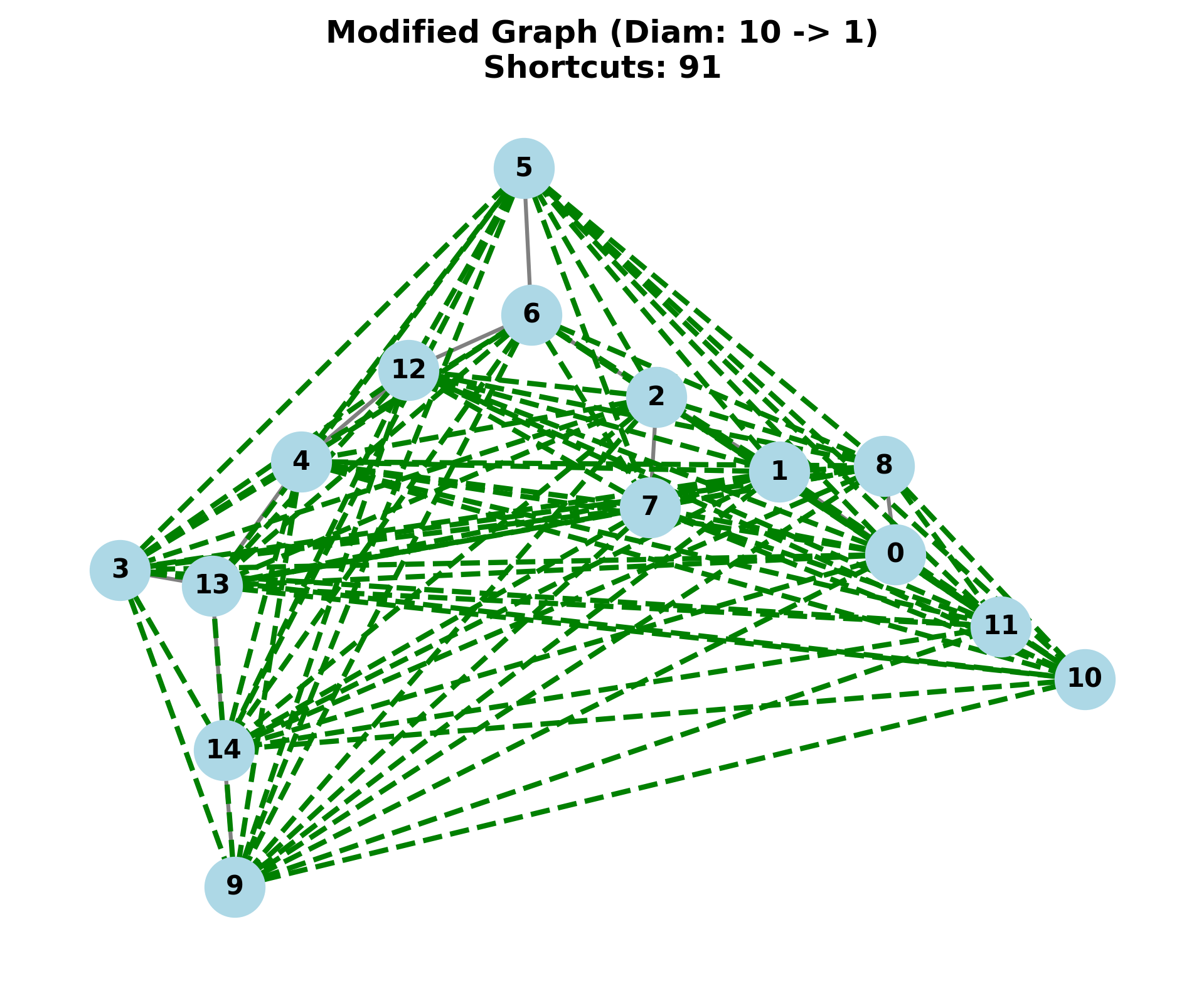}
}
\caption{Visual progression of the transport-augmentation algorithm. \textbf{(a)} The original base graph topology subject to the MaxCut cost Hamiltonian. \textbf{(b)--(h)} The modified interaction graph $G_{\mathrm{int}}$ following the iterative injection of shortcut transport couplings. This process sequentially collapses the structural diameter from $d=7$ down to $d=1$, thereby geometrically rewiring the causal lightcone to overcome depth-dependent locality obstructions without altering the underlying objective function.}
\label{fig:graph-reduction-progression}
\end{figure*}

\begin{theorem}[Optimal augmentation for depth-\(p\) locality reduction]
Fix a QAOA depth \(p\), and let \(F_p^\star\in\mathcal A_{2p}(G)\) be any optimal diameter-\(2p\) augmentation, so that \(|F_p^\star|=\alpha_{2p}(G)\). Then:
\begin{enumerate}
    \item The corresponding interaction graph \(G_p^\star:=(V,E\cup F_p^\star)\) satisfies \(\operatorname{diam}(G_p^\star)\le 2p\). Hence, by the full-lightcone criterion, for every nonempty subset \(S\subseteq V\) the graph-distance support bound for depth-\(p\) transport-augmented QAOA on \(G_p^\star\) has the entire vertex set \(V\) as its admissible lightcone.
    \item If \(F\subseteq \binom{V}{2}\setminus E\) has \(|F|<\alpha_{2p}(G)\), then the interaction graph \(G_F:=(V,E\cup F)\) satisfies \(\operatorname{diam}(G_F)>2p\). Consequently, by the necessary-depth corollary, there exist vertices \(u,v\in V\) such that every depth-\(p\) transport-augmented QAOA circuit using \(G_F\) obeys \([U_p^\dagger O_u U_p,\; O_v]=0\) for all single-site operators \(O_u\) and \(O_v\).
\end{enumerate}

In this precise sense, \(\alpha_{2p}(G)\) is the minimum number of added mixer couplings required to remove the diameter-based causal obstruction at depth \(p\).

\end{theorem}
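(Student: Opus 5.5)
The plan is to prove the two items separately, since each reduces almost immediately to earlier results together with the definition of \(\alpha_{2p}(G)\).

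\textbf{Item 1.} By definition \(F_p^\star\in\mathcal A_{2p}(G)\), so \(\operatorname{diam}(V,E\cup F_p^\star)\le 2p\). Take the transport graph to be \(G_M=(V,F_p^\star)\) with the commuting \(XX\) mixer. Then \(G_{\mathrm{int}}=G_p^\star\), and the full-lightcone criterion gives \(N_{2p}^{\mathrm{int}}(S)=V\) for every nonempty \(S\). This is exactly the claim about the admissible lightcone. Nothing further is needed.

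\textbf{Item 2.} I argue by contrapositive using minimality. Suppose \(|F|<\alpha_{2p}(G)\) but \(\operatorname{diam}(G_F)\le 2p\). Then \(F\in\mathcal A_{2p}(G)\), and so \(\alpha_{2p}(G)\le |F|\), which is a contradiction. Hence \(\operatorname{diam}(G_F)>2p\).

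One subtlety is that \(G_F\) may be disconnected, so that its diameter is infinite. Either way, there exist \(u,v\) with \(\mathrm{dist}_{G_F}(u,v)>2p\), where the distance may be \(+\infty\).

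For the commutator claim, I apply the exact finite-depth commutator corollary with \(S=\{u\}\) and \(T=\{v\}\), taking the interaction graph \(G_F\). Since \(\mathrm{dist}_{G_{\mathrm{int}}}(S,T)>2p\), we get \(v\notin N_{2p}^{\mathrm{int}}(\{u\})\supseteq\Phi^p(\{u\})\). Therefore \([U_p^\dagger O_uU_p,O_v]=0\) for all single-site \(O_u\) and \(O_v\), and for every choice of angles. This plays the role of the ``necessary-depth corollary'' the statement cites.

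The main point to watch is scope. The commutator corollary is proved for the commuting \(XX\) mixer with radius \(2p\). For scheduled \(XX+YY\) with \(q\ge2\), the radius is \((q+1)p\), so the argument as stated covers the \(XX\) model, and also the \(XX+YY\) model when \(q=1\). I would state this restriction explicitly.

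A second point is that a mixer graph \(F\) might include edges of \(E\). That does not matter: \(E\cup F\) is unchanged if we replace \(F\) by \(F\setminus E\), and the problem definition already assumes \(F\subseteq\binom{V}{2}\setminus E\).

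The final ``in this precise sense'' sentence then follows by combining the two items. Item 1 shows that \(\alpha_{2p}(G)\) edges suffice, and item 2 shows that fewer edges leave a pair of vertices causally disconnected.
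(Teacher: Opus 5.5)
Your proof is correct and follows the paper's argument. Item 1 is the full-lightcone criterion applied to \(G_p^\star\), and item 2 is the same minimality contradiction (\(F\notin\mathcal A_{2p}(G)\)) followed by the necessary-depth corollary, which you simply unpack into the exact commutator corollary with \(S=\{u\}\), \(T=\{v\}\); your extra remarks on a possibly disconnected \(G_F\) and on restricting the radius-\(2p\) conclusion to the commuting \(XX\) mixer (or scheduled \(XX+YY\) with \(q=1\)) are valid refinements that the paper leaves implicit.
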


By geometrically reducing the interaction-graph diameter, transport augmentation enlarges the Heisenberg lightcone without altering the MaxCut cost Hamiltonian. This removes the causal graph-distance obstruction at depth \(p\), allowing the shallow circuit to efficiently access global topological features that would otherwise remain causally disconnected. Although this does not immediately guarantee convergence to the global optimum, it provides a rigorous architectural foundation for the observed numerical enhancements by ensuring the transport ansatz is no longer structurally blind to long-range graph features. The formulation is graph-independent and applies to arbitrary interaction graphs; specific constructions are discussed in Appendix~\ref{app:graph-augmentation}.


\section{Experiments}

Theoretical results above are instance-independent; here we test their algorithmic implications via classical simulation of transport-augmented QAOA on graph ensembles.

\paragraph{Simulation and optimization.}
All circuits were run with Qiskit's GPU-accelerated \texttt{AerSimulator} using exact double-precision statevectors (no sampling noise). Variational angles were optimized with L-BFGS-B (function tolerance \(10^{-7}\), up to 1000 iterations). Each run used 8 random restarts with initial angles sampled uniformly from \([-0.05,0.05]\); we report the best solution across restarts. The initial state was \(|+\rangle^{\otimes N}\).

\paragraph{Ans\"atze and parameterization.}
Across all experiments, the MaxCut cost Hamiltonian remains fixed on the instance graph \(G=(V,E)\); only the mixer connectivity is modified by injecting shortcut edges \(E_{\mathrm{new}}\), yielding the interaction graph \(G_{\mathrm{int}}=(V,E\cup E_{\mathrm{new}})\). As a strong baseline we use multi-angle QAOA (ma-QAOA), assigning independent parameters to each local node/edge gate in a layer. At depth \(p=1\) this gives \(N+|E|\) parameters; adding shortcut edges gives \(N+|E|+|E_{\mathrm{new}}|\) parameters for both augmented mixers. For the scheduled \(XX+YY\) mixer, the \(XX\) and \(YY\) weights on each shortcut edge share a single parameter (parity symmetry).

\paragraph{Metrics and diameter-controlled ensembles.}
For each instance we computed the exact MaxCut optimum classically and report the approximation ratio (AR). We study uniform random 4-regular graphs, connected bipartite graphs, and random trees. For each instance we construct shortcut sets that progressively reduce \(\operatorname{diam}(G_{\mathrm{int}})\), and aggregate results by (graph family, achieved interaction diameter \(d\)) to separate diameter effects from system-size effects.

\paragraph{Main empirical trend.}
Across families, ma-QAOA remains constrained by the native topology, whereas the transport ansatz becomes nearly size-invariant once the effective interaction diameter collapses. For bipartite instances with base diameter 4, reducing the interaction path to \(d=1\) using the scheduled \(XX+YY\) mixer raises the ensemble-averaged AR from 0.7378 (ma-QAOA) to 0.9767 at \(p=1\) (\(\sigma=0.0251\), nine system sizes). For high-diameter trees, performance improves already at intermediate diameters (e.g., \(d\in\{3,4\}\)), and collapsing to \(d=1\) produces near-perfect behavior at shallow depth (see below).

The code, simulation scripts, and processed data used in our experiments are publicly available at \url{https://github.com/heyitsshoz/dealing_with_locality_in_qaoa}.

\subsection{Uniform random 4-regular instances}
Uniform random 4-regular graphs already have small base diameters (typically $d=3$ or $d=4$), so ma-QAOA at $p=1$ starts from a relatively strong baseline (AR $\approx 0.86$). Even in this regime, collapsing the interaction diameter to $d=1$ with the scheduled $XX+YY$ mixer yields near-perfect performance (ensemble mean AR $\approx 0.997$ at $p=1$). The corresponding $d=1$ scaling curves are essentially flat in $N$ (standard deviation $<0.01$ across sizes), consistent with diameter---not vertex count---setting the shallow-depth limitation.
\begin{figure}[!tbp]
\centering
\includegraphics[width=\columnwidth]{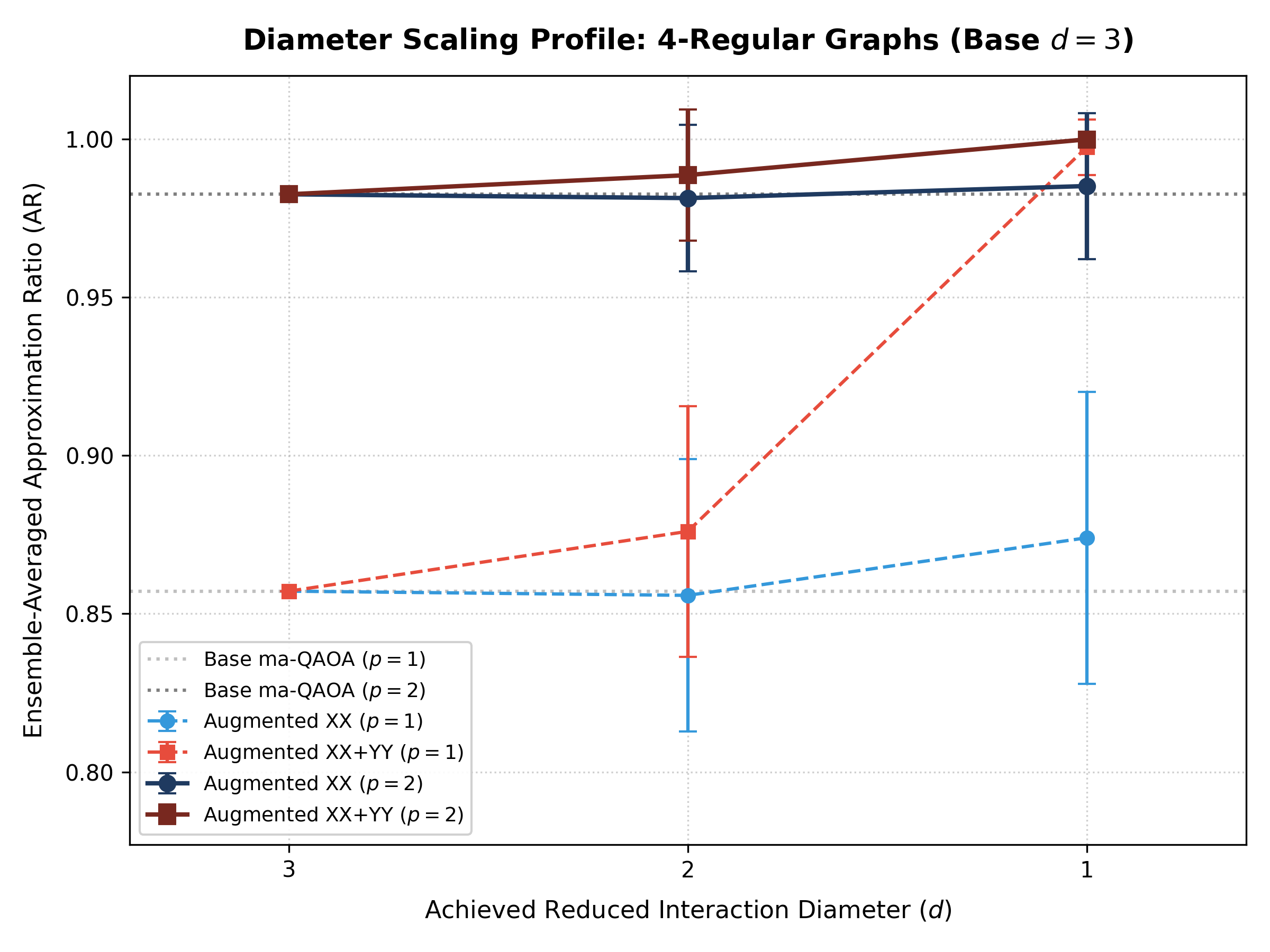}
\caption{Approximation Ratio for 4-regular instances (base diameter 3).}
\label{fig:regular-d3}
\end{figure}

\begin{figure}[!tbp]
\centering
\includegraphics[width=\columnwidth]{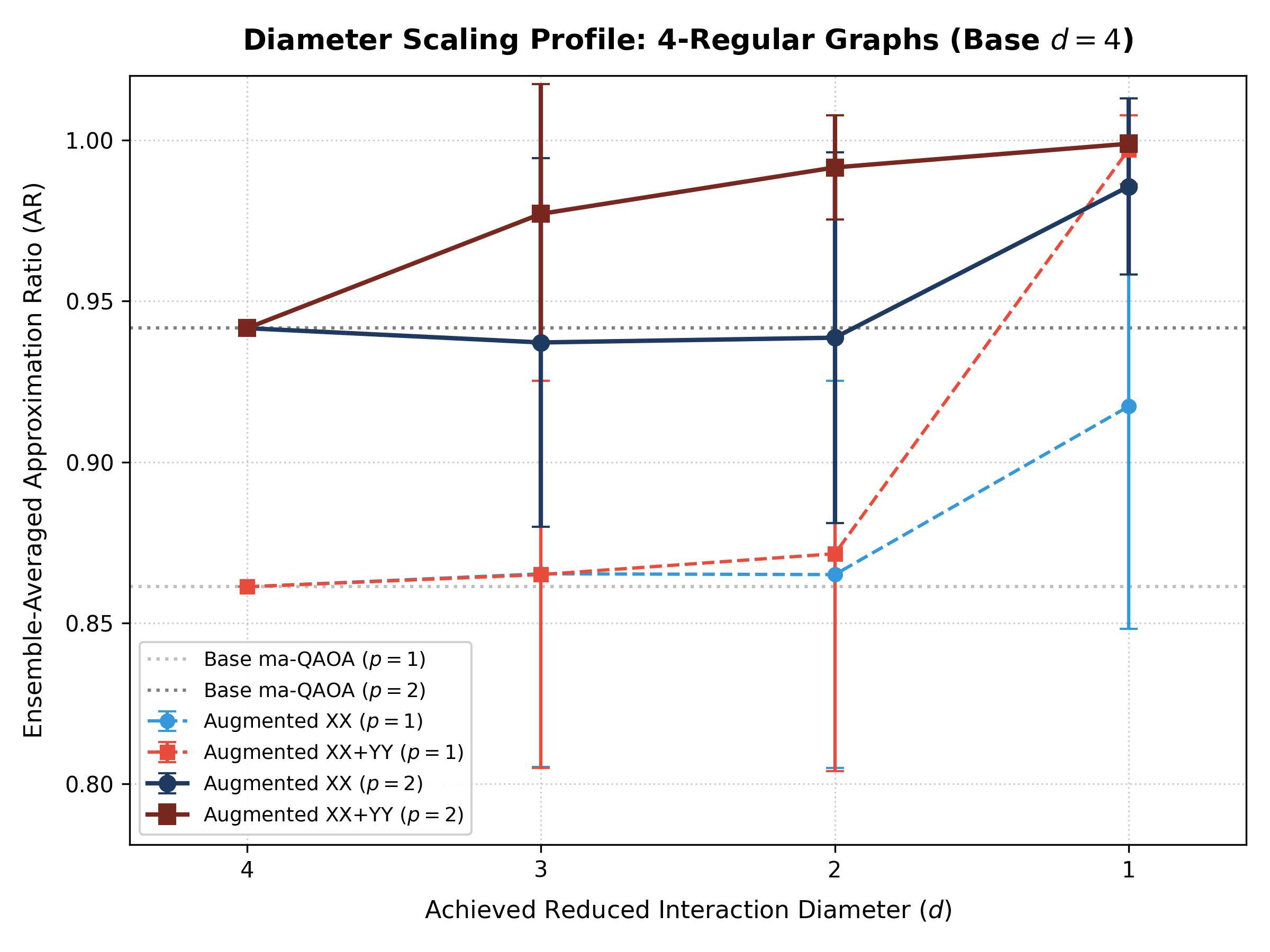}
\caption{Approximation Ratio for 4-regular instances (base diameter 4).}
\label{fig:regular-d4}
\end{figure}

\subsection{Random connected bipartite instances}
Connected bipartite instances span larger base diameters ($d=4$ to $d=9$), and the shallow-depth locality obstruction is visible directly in the $p=1$ baseline: performance degrades as diameter increases (e.g., AR $\approx 0.73$ at base diameter $d=4$). Injecting shortcuts and running the scheduled $XX+YY$ transport mixer to achieve $d=1$ produces large gains (up to $+32\%$ at $p=1$) and yields $d=1$ curves that are nearly flat in $N$, indicating that once the interaction diameter is collapsed the residual dependence on system size is weak.
\begin{figure}[!tbp]
\centering
\includegraphics[width=\columnwidth]{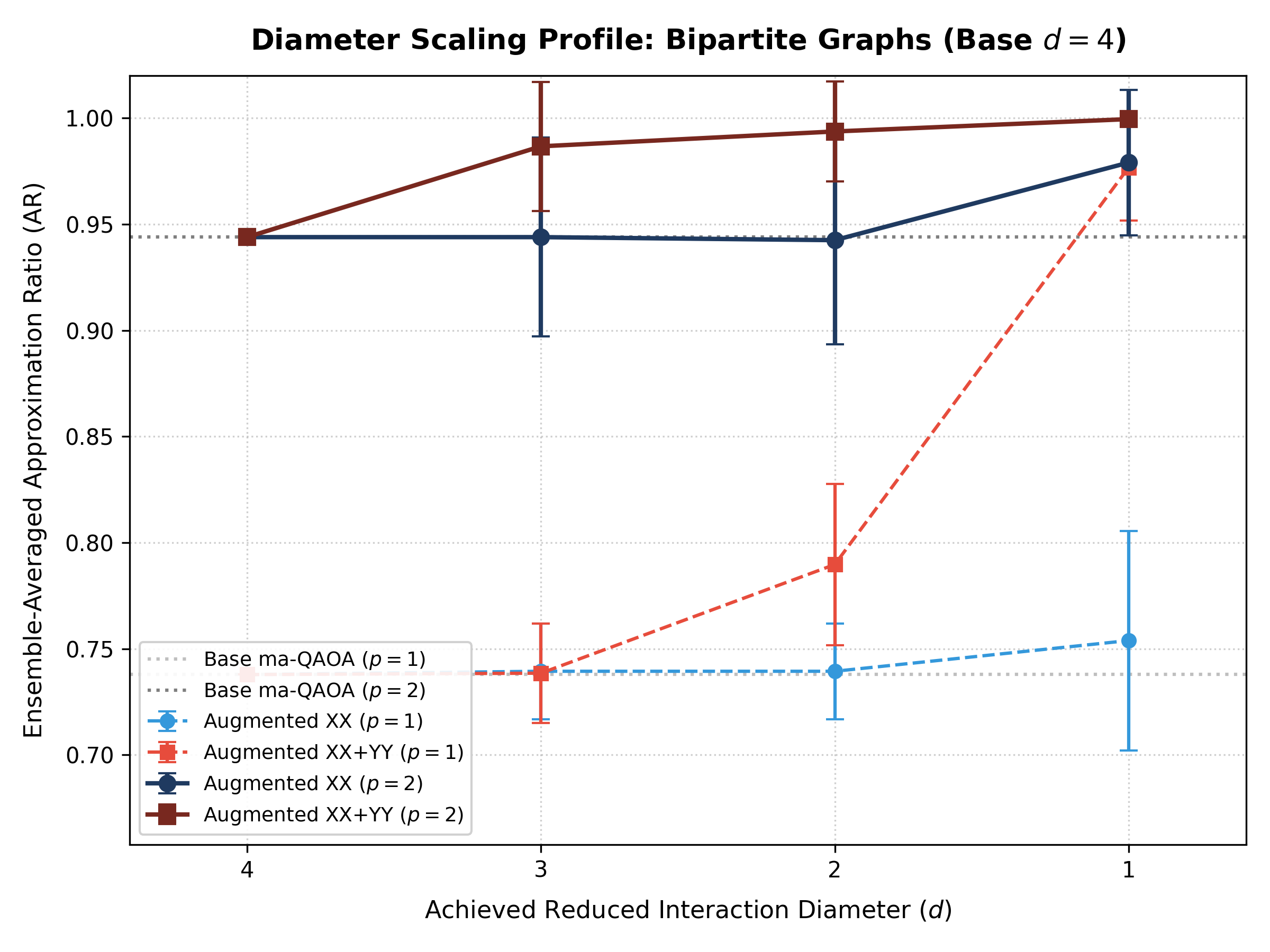}
\caption{Approximation Ratio for bipartite instances (base diameter 4).}
\label{fig:bipartite-d4}
\end{figure}

\begin{figure}[!tbp]
\centering
\includegraphics[width=\columnwidth]{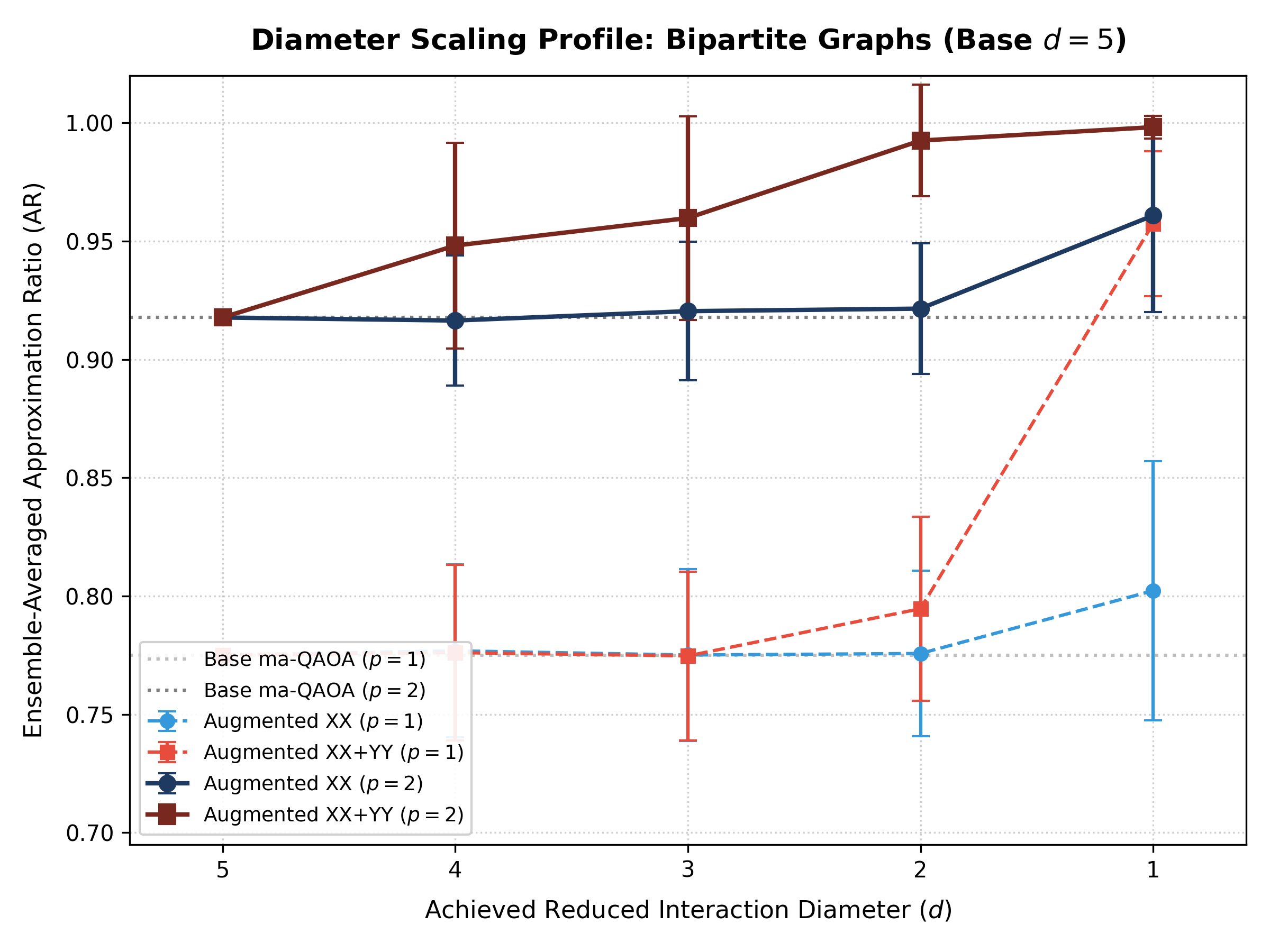}
\caption{Approximation Ratio for bipartite instances (base diameter 5).}
\label{fig:bipartite-d5}
\end{figure}

\begin{figure}[!tbp]
\centering
\includegraphics[width=\columnwidth]{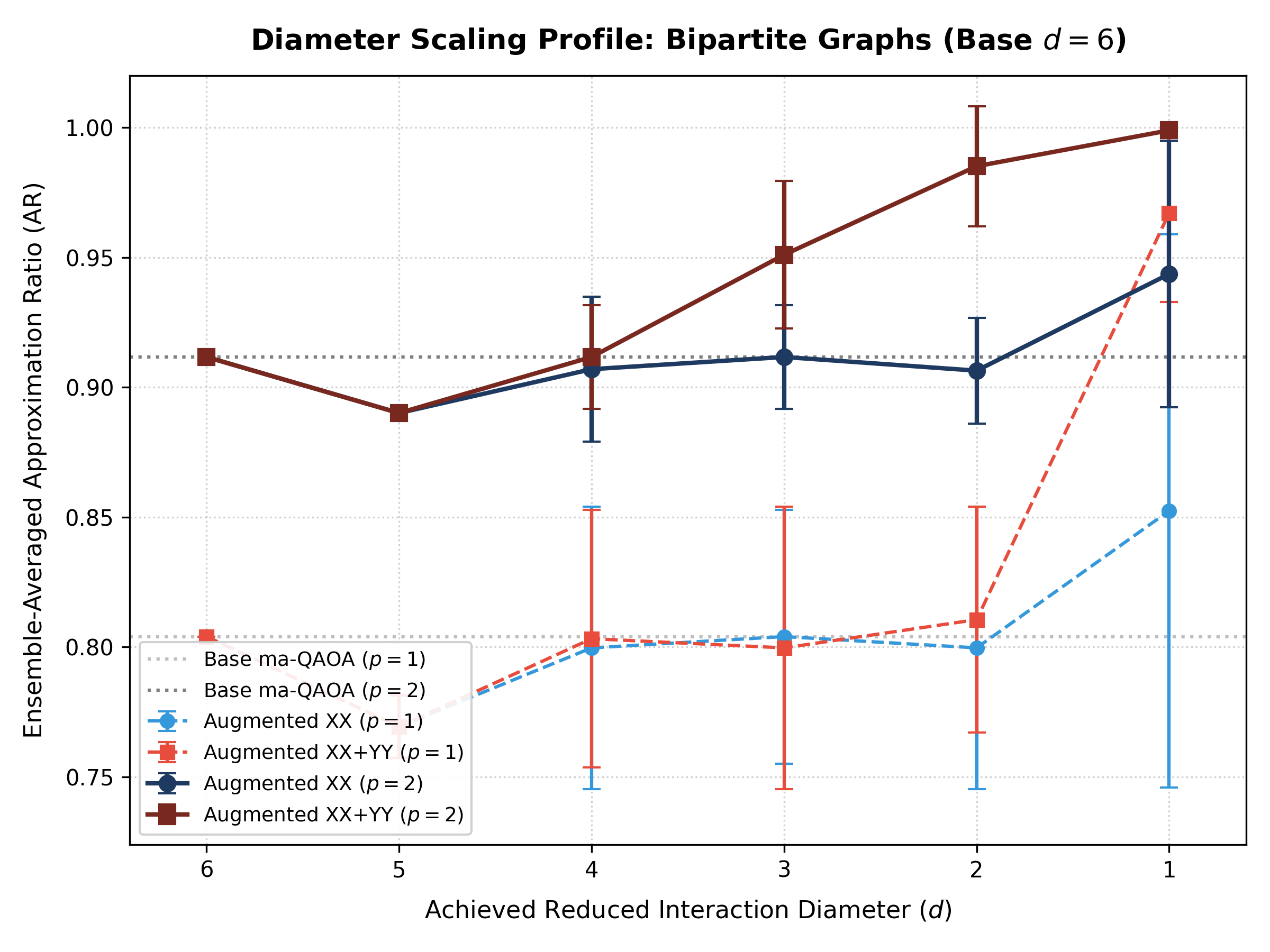}
\caption{Approximation Ratio for bipartite instances (base diameter 6).}
\label{fig:bipartite-d6}
\end{figure}

\begin{figure}[!tbp]
\centering
\includegraphics[width=\columnwidth]{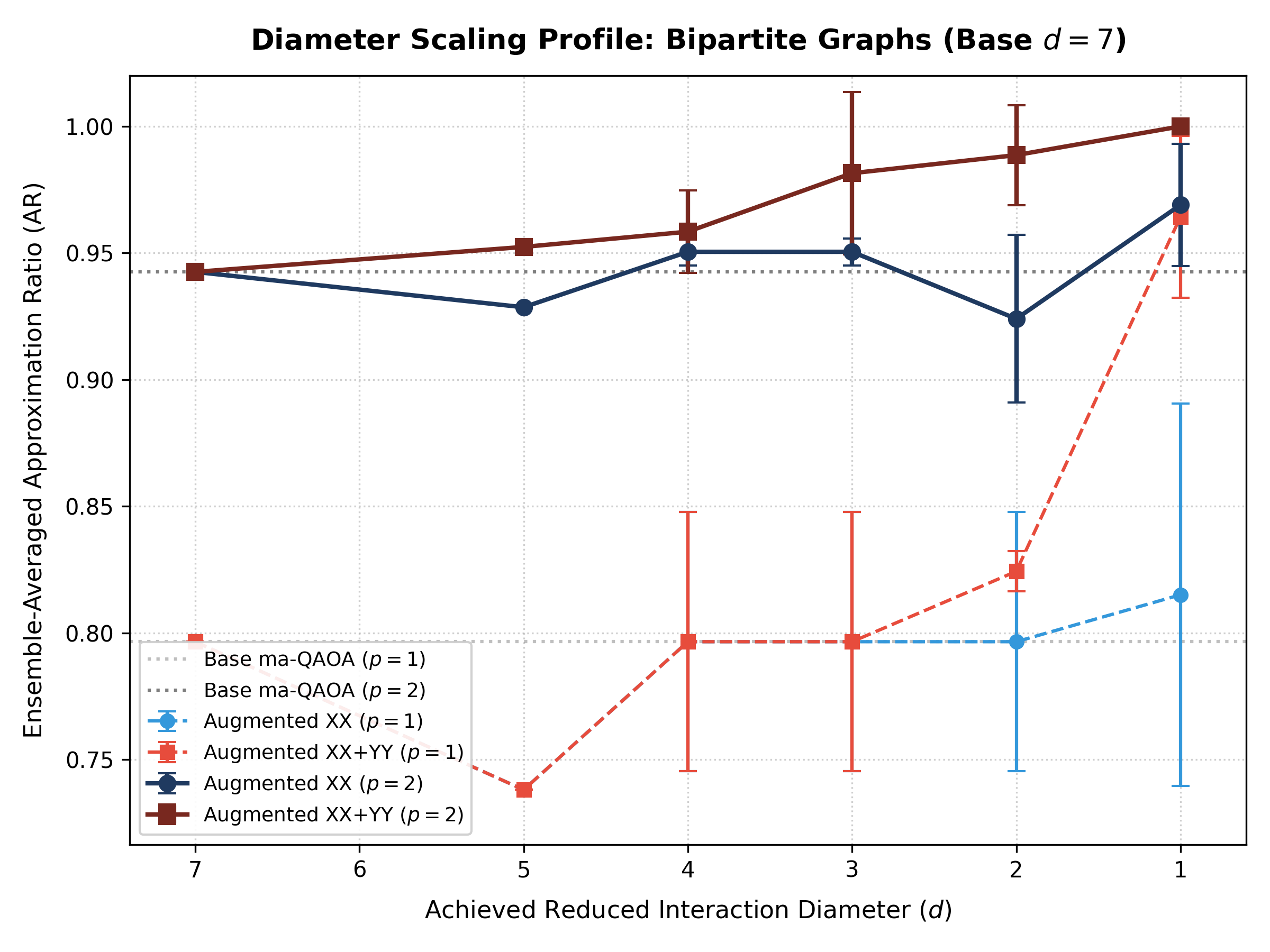}
\caption{Approximation Ratio for bipartite instances (base diameter 7).}
\label{fig:bipartite-d7}
\end{figure}

\begin{figure}[!tbp]
\centering
\includegraphics[width=\columnwidth]{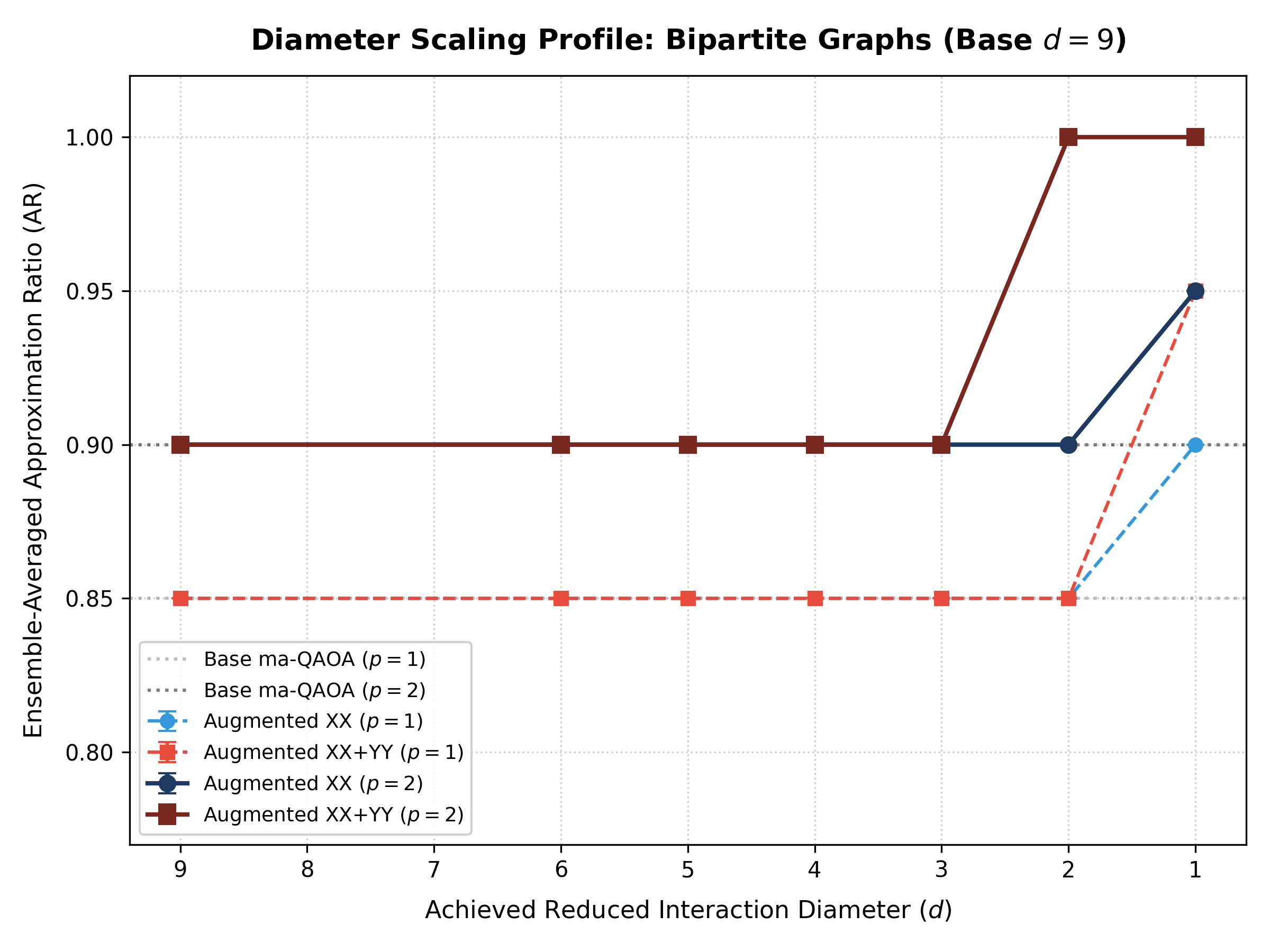}
\caption{Approximation Ratio for bipartite instances (base diameter 9).}
\label{fig:bipartite-d9}
\end{figure}

\subsection{Random tree instances}
Random trees exhibit the strongest locality obstruction, with base diameters up to $d=11$. Here the diameter dependence is most pronounced: at shallow depth the baseline struggles, but AR improves already at $p=1$ and $p=2$ once shortcuts reduce the interaction diameter to intermediate values (e.g., $d=3$ or $d=4$). Fully collapsing to $d=1$ with the scheduled $XX+YY$ mixer removes the remaining obstruction: at $p=2$ we consistently obtain perfect or near-perfect solutions (AR $\approx 1.000$) for $d=1$ across system sizes, producing nearly horizontal $AR=1$ curves in $N$.
\begin{figure}[!tbp]
\centering
\includegraphics[width=\columnwidth]{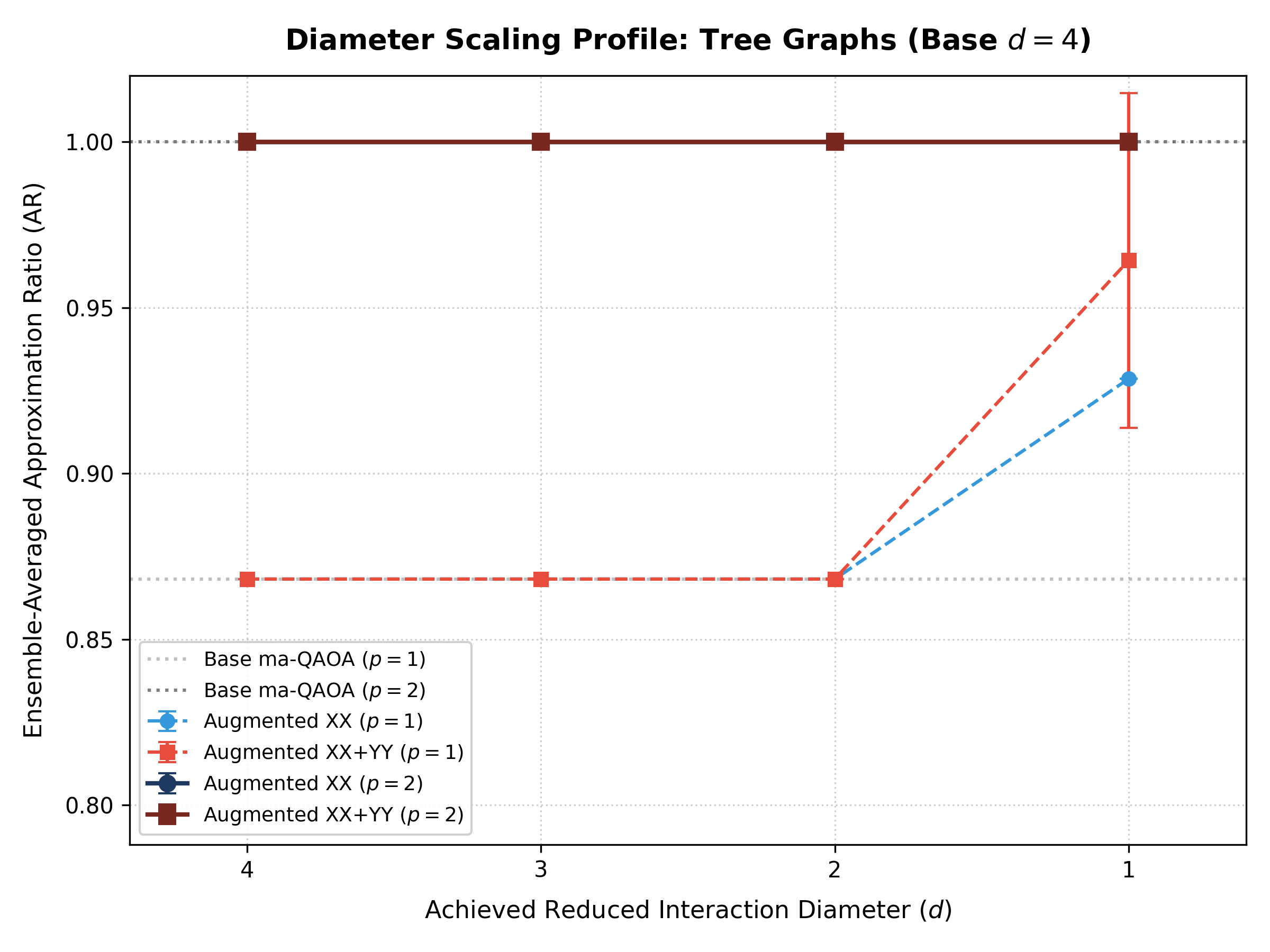}
\caption{Approximation Ratio for tree instances (base diameter 4).}
\label{fig:tree-d4}
\end{figure}
\begin{figure}[!tbp]
\centering
\includegraphics[width=\columnwidth]{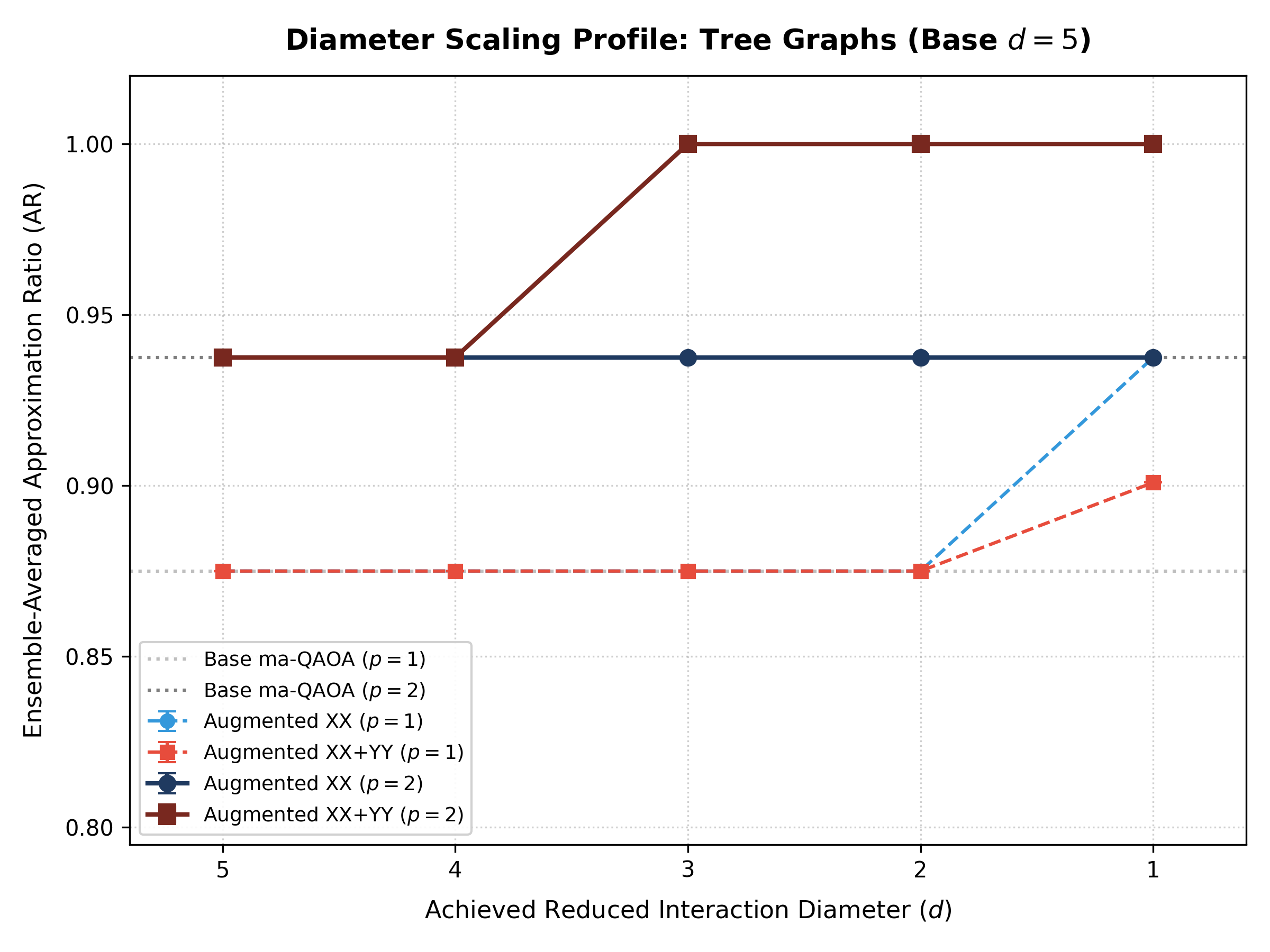}
\caption{Approximation Ratio for tree instances (base diameter 5).}
\label{fig:tree-d5}
\end{figure}
\begin{figure}[!tbp]
\centering
\includegraphics[width=\columnwidth]{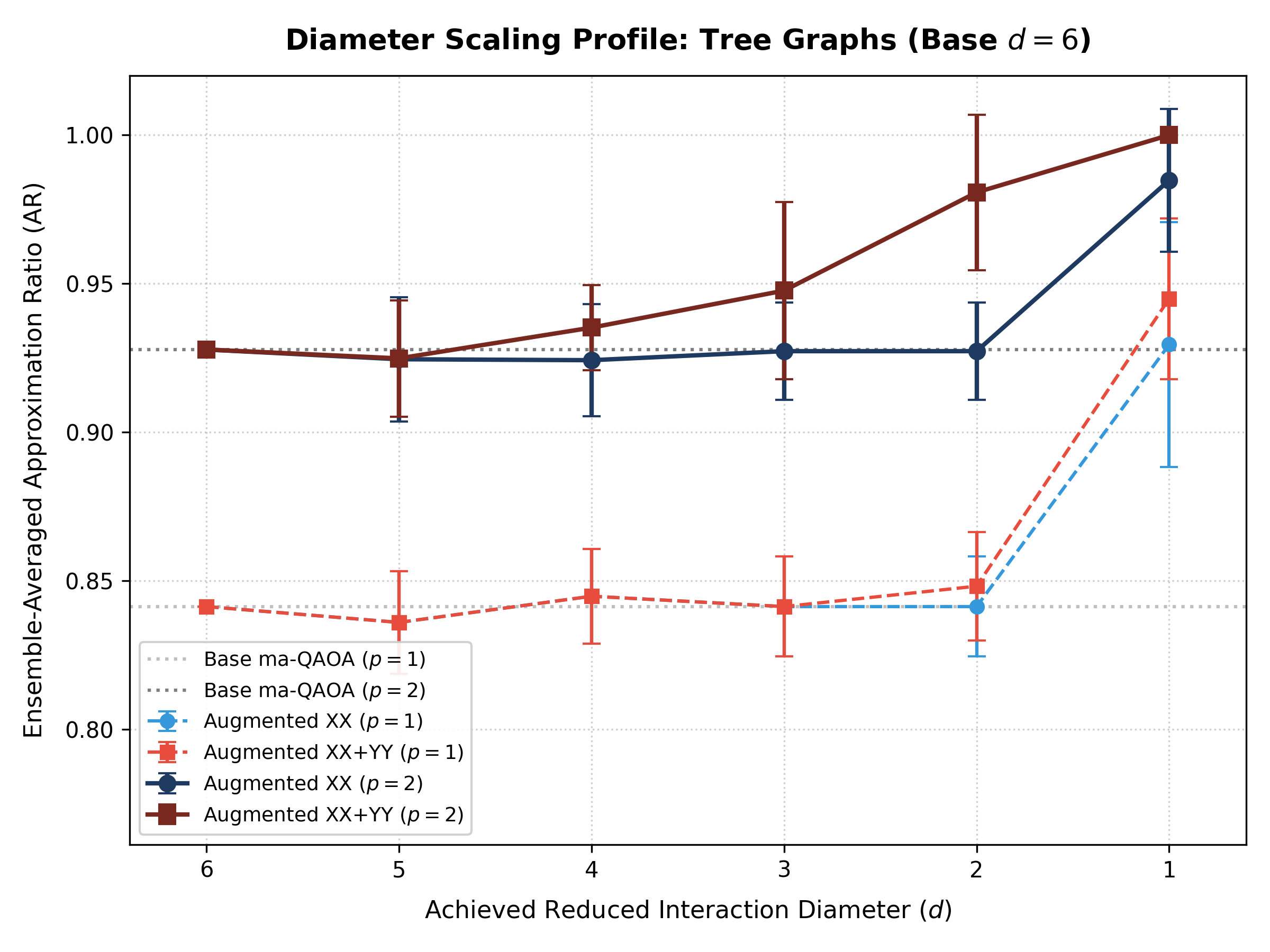}
\caption{Approximation Ratio for tree instances (base diameter 6).}
\label{fig:tree-d6}
\end{figure}
\begin{figure}[!tbp]
\centering
\includegraphics[width=\columnwidth]{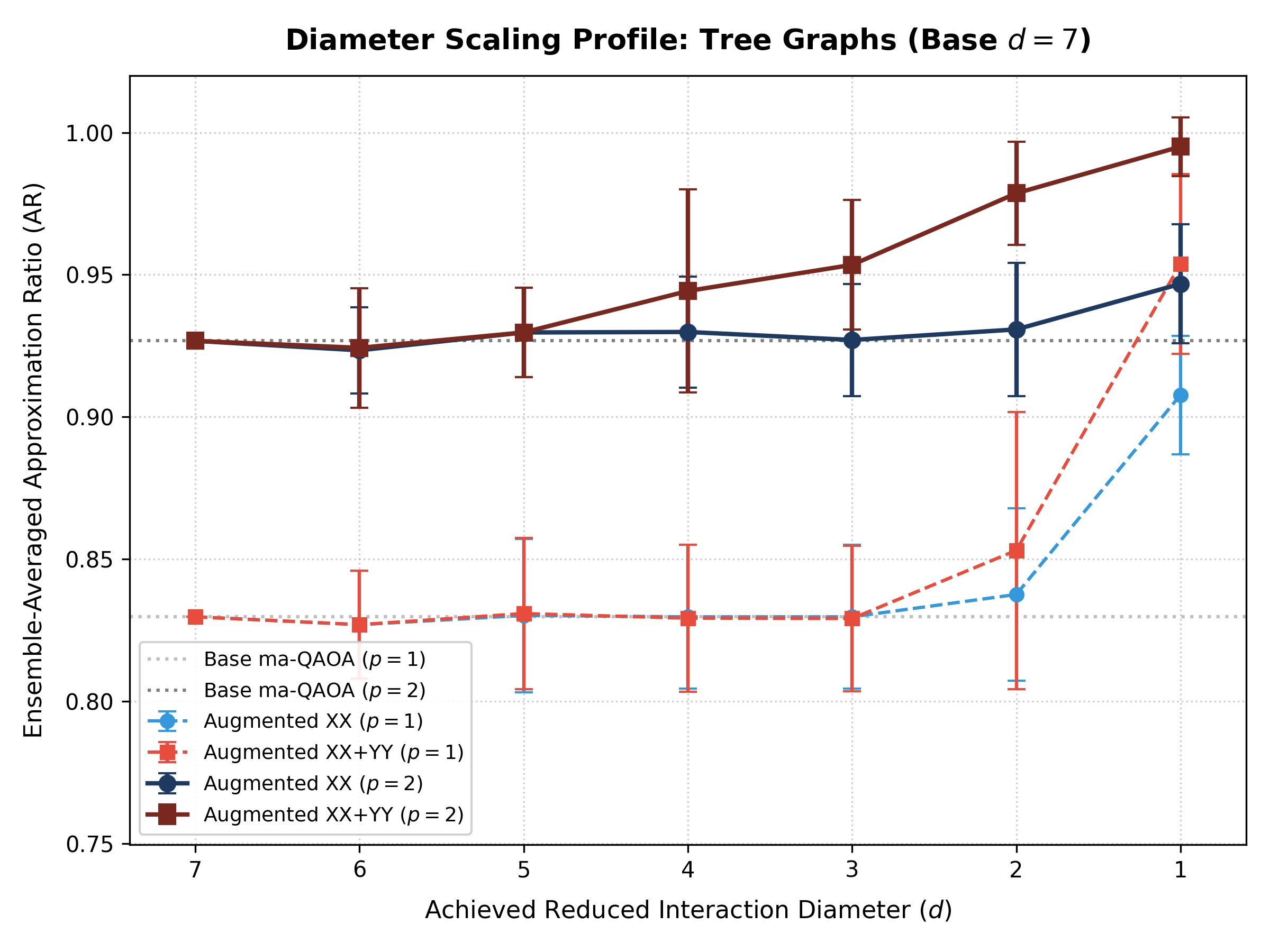}
\caption{Approximation Ratio for tree instances (base diameter 7).}
\label{fig:tree-d7}
\end{figure}
\begin{figure}[!tbp]
\centering
\includegraphics[width=\columnwidth]{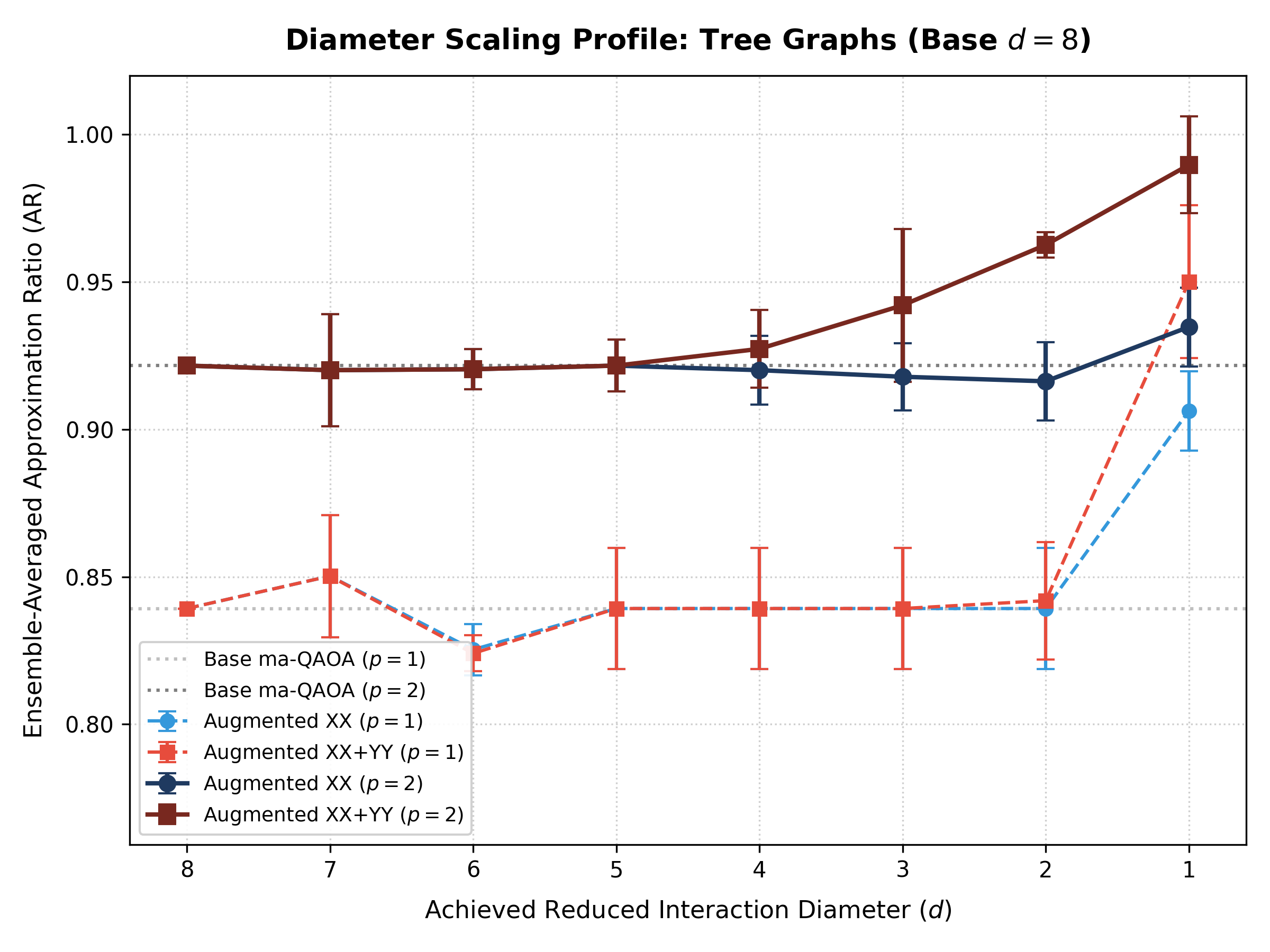}
\caption{Approximation Ratio for tree instances (base diameter 8).}
\label{fig:tree-d8}
\end{figure}
\begin{figure}[!tbp]
\centering
\includegraphics[width=\columnwidth]{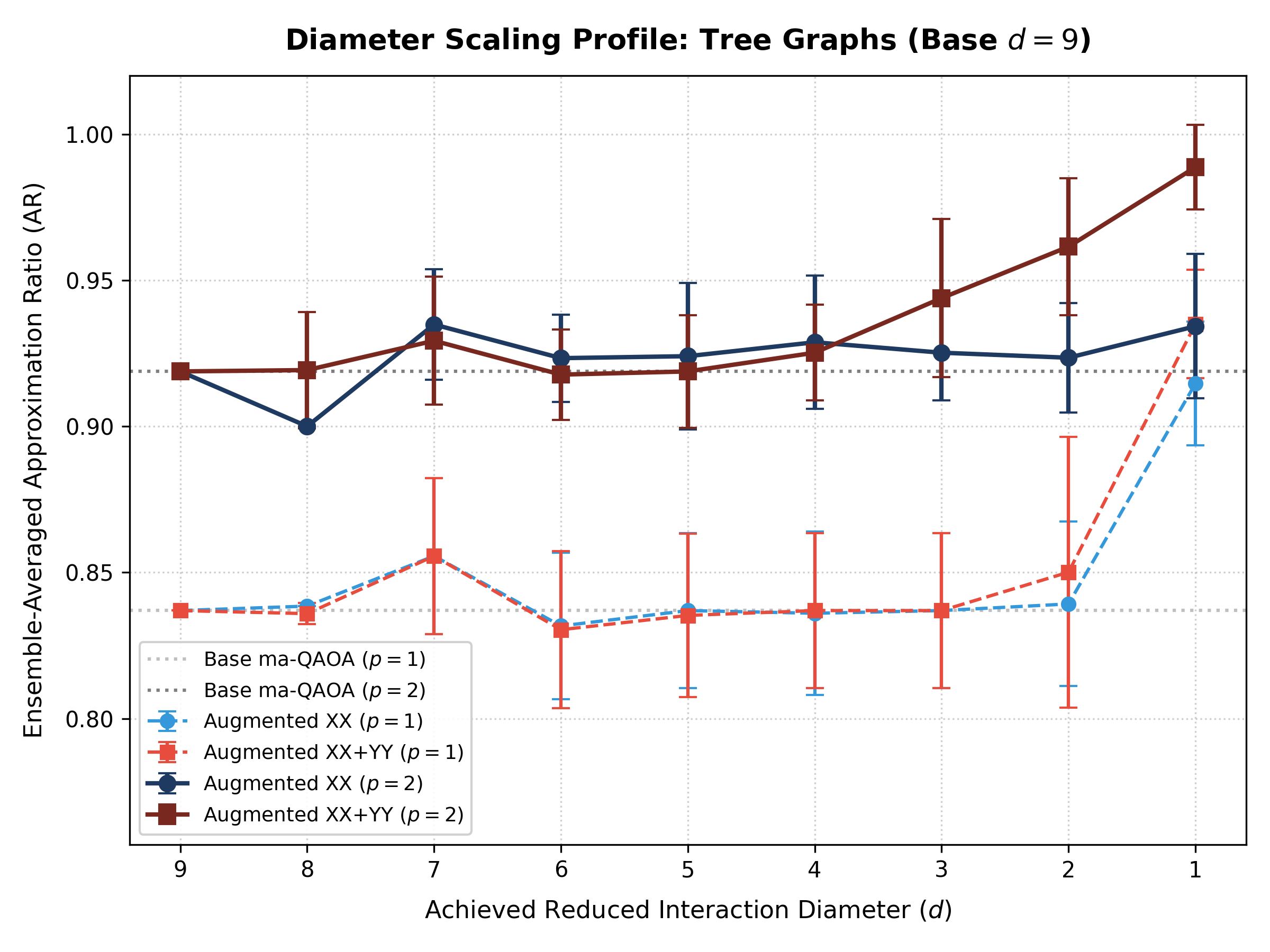}
\caption{Approximation Ratio for tree instances (base diameter 9).}
\label{fig:tree-d9}
\end{figure}
\begin{figure}[!tbp]
\centering
\includegraphics[width=\columnwidth]{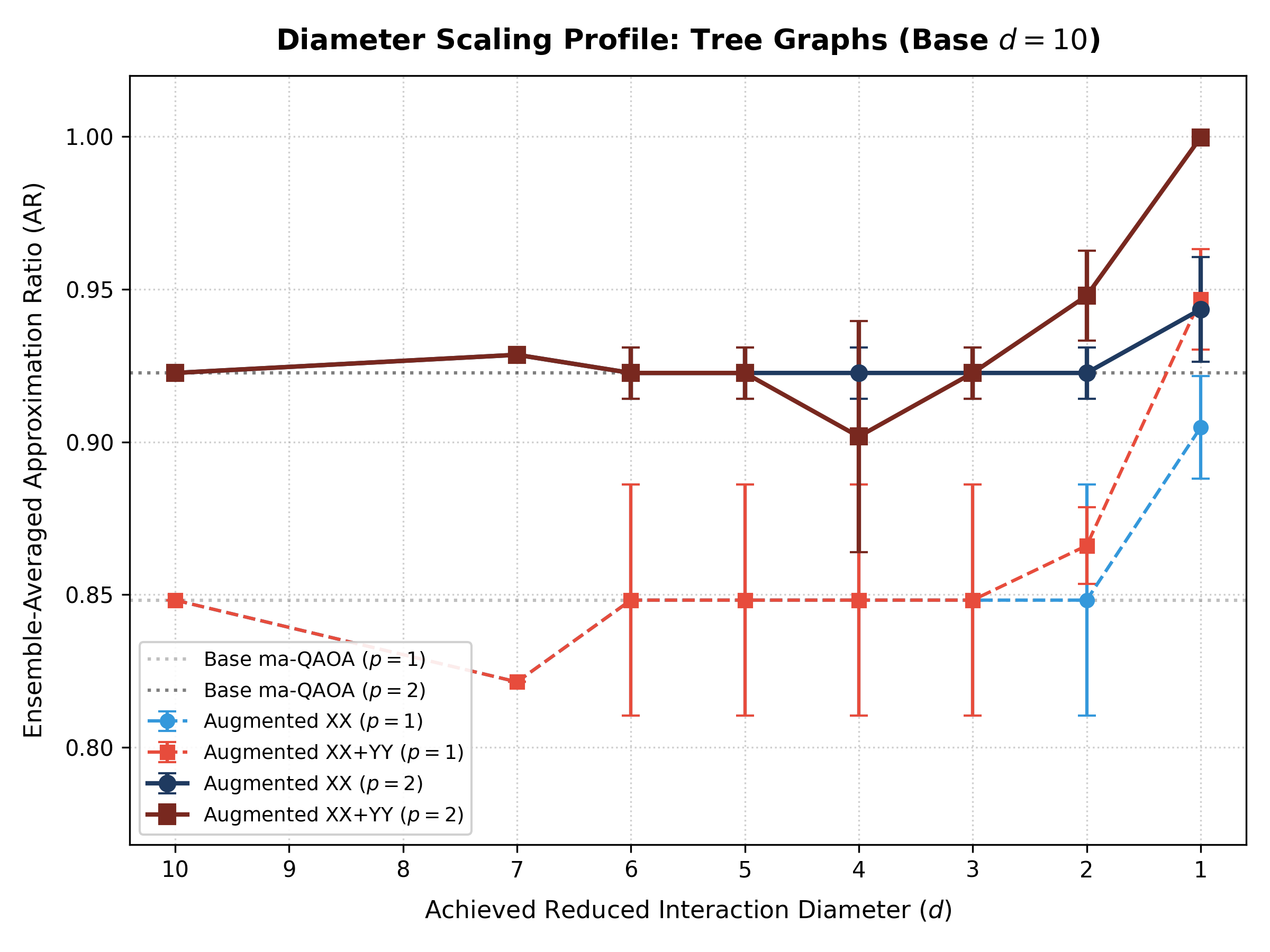}
\caption{Approximation Ratio for tree instances (base diameter 10).}
\label{fig:tree-d10}
\end{figure}
\begin{figure}[!tbp]
\centering
\includegraphics[width=\columnwidth]{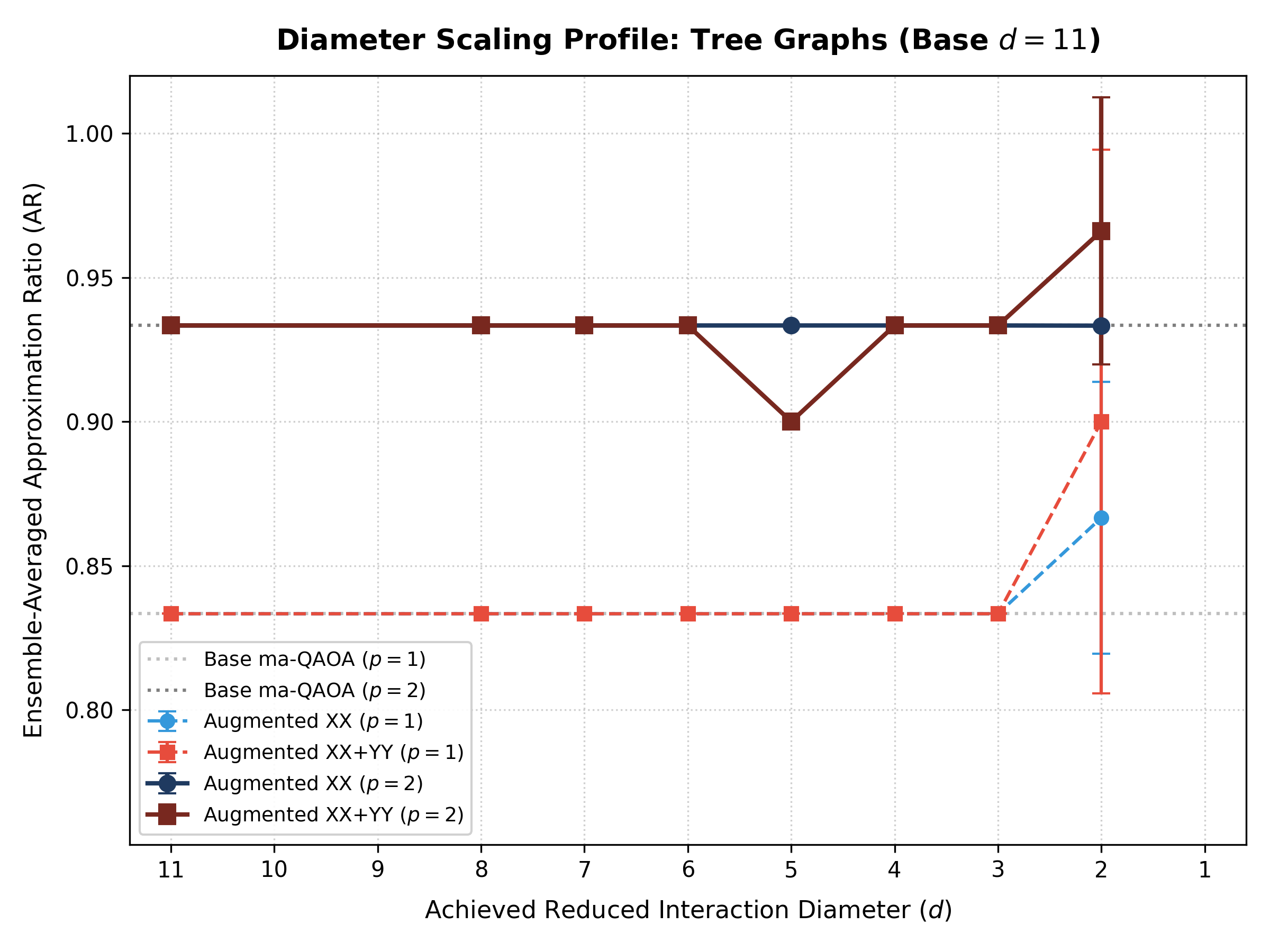}
\caption{Approximation Ratio for tree instances (base diameter 11).}
\label{fig:tree-d11}
\end{figure}

\section{Future Directions}

The transport-augmented ans\"atze studied here represent only a small subset of the possible mixer constructions that can be used to modify finite-depth QAOA locality.

A natural direction for future work is the investigation of richer transport Hamiltonians beyond the commuting \(XX\) and scheduled \(XX+YY\) mixers considered in this paper. Examples include Heisenberg-type transport interactions,
\[
\sum_{(i,j)\in E_{\mathrm{new}}}
\bigl(
J_x X_iX_j
+
J_y Y_iY_j
+
J_z Z_iZ_j
\bigr)
\]
as well as higher-order transport motifs such as hub-mediated architectures and multi-qubit couplings, for example \(XXXX\)-type interactions acting on carefully chosen vertex subsets. Additionally, relaxing the parameter constraints on the current models such as assigning independent variational angles to the \(XX\) and \(YY\) generators rather than enforcing a shared parameter could drastically increase the expressivity of the ansatz. Such constructions may further modify the causal structure of finite-depth circuits and potentially provide additional mechanisms for long-range information transport.

Another promising direction is the combination of transport augmentation with existing QAOA enhancements, including warm-start QAOA, recursive QAOA (RQAOA), adaptive-QAOA variants, problem-informed mixer constructions, and symmetry-reduced ans\"atze. Since transport augmentation primarily modifies the mixer interaction geometry while leaving the cost Hamiltonian unchanged, it is largely complementary to many of these approaches and may therefore be combined with them.

Since the number of edges and parameters are comparatively larger, one could follow a hardware-aware edge augmentation (for example nearest neighbors on chip only). A future study could involve noise models on specific hardware.

The main reason these possibilities were not explored in the present work is practical rather than conceptual. The transport-augmentation viewpoint immediately suggests a large family of potential extensions, each introducing a substantially larger design and optimization space. As observed in the higher-depth (\(p=3\)) results of this study, increasing the parameter density without careful bounding can overwhelm classical optimizers and trigger convergence failures. Evaluating such highly-parameterized variants requires not only additional numerical experiments but also careful analysis of compilation overhead, gate counts, scheduling constraints, parameter-scaling behavior, and hardware implementability. A systematic study of these tradeoffs would require significantly greater computational resources and development effort than were available within the scope of the present work.

Nevertheless, the strong empirical performance observed for the diameter-reducing transport ans\"atze considered here suggests that viewing the mixer as a designable transport geometry may provide a useful organizing principle for the development of future QAOA variants.

\section{Conclusion}

We introduced a transport-augmented QAOA framework in which the MaxCut cost Hamiltonian remains fixed on the instance graph, while the mixer is augmented with shortcut transport couplings that reduce the diameter of the circuit interaction graph.

On the theoretical side, we established exact finite-depth locality statements for commuting \(XX\) transport mixers and for scheduled \(XX+YY\) transport mixers (implemented via edge-color matchings), including explicit support-growth recursions, commutator obstructions, lightcone-volume estimates, and full-lightcone criteria. We also formalized shortcut selection as a bounded-diameter augmentation problem and connected optimal diameter-\(2p\) augmentations to the minimal mixer enlargement required to remove diameter-based causal disconnection at depth \(p\).

On the empirical side, classical statevector simulations (with multi-angle parameterization) show that performance is controlled primarily by the achieved interaction diameter rather than by the system size. Across bipartite graphs with base diameter \(d=4\), collapsing the interaction diameter to \(d=1\) with the scheduled \(XX+YY\) transport mixer increases the ensemble-averaged approximation ratio from \(0.7378\) to \(0.9767\) at depth \(p=1\), with a small spread (\(\sigma=0.0251\)) across nine system sizes. Similar diameter-driven gains appear for 4-regular graphs (near-perfect performance already at \(p=1\) once \(d=1\) is reached) and for trees, where reducing the interaction diameter yields large improvements even at shallow depths and can produce near-perfect performance when the diameter is fully collapsed.

Together, these results support a unified viewpoint: finite-depth QAOA limitations on sparse graphs are fundamentally graph-distance limited, and mixer transport augmentation provides a principled way to rewire the circuit lightcone (without changing the objective) so that shallow variational dynamics can access global graph structure.

\bibliographystyle{apsrev4-2}

\bibliography{main}

\appendix

\section{Additional Consequences of the Locality Theorem}

\begin{proposition}[Set-dynamics properties of the growth map]
For any subsets \(S,T\subseteq V\), the map \(\Phi\) is monotone and inflationary: \(S\subseteq T\implies \Phi(S)\subseteq \Phi(T)\) and \(S\subseteq \Phi(S)\). Moreover, \(N_1^{\mathrm{int}}(S)\subseteq \Phi(S)\subseteq N_2^{\mathrm{int}}(S)\). Consequently, for every integer \(t\ge0\), \(N_t^{\mathrm{int}}(S)\subseteq \Phi^t(S)\subseteq N_{2t}^{\mathrm{int}}(S)\).
The increasing chain \(S\subseteq \Phi(S)\subseteq \Phi^2(S)\subseteq \cdots\) stabilizes after at most \(|V|-|S|\) strict growth steps. If \(G_{\mathrm{int}}\) is connected and \(S\neq\emptyset\), then \(\Phi^{\operatorname{diam}(G_{\mathrm{int}})}(S)=V\).

\end{proposition}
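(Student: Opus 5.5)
The plan is to reduce everything to two elementary facts about the expansion operators \(N_M\), \(N_C\) and the interaction-graph neighborhoods \(N_r^{\mathrm{int}}\): each is monotone and inflationary, and the neighborhoods compose additively, \(N_a^{\mathrm{int}}(N_b^{\mathrm{int}}(S))=N_{a+b}^{\mathrm{int}}(S)\). Once these are in hand, every claim in the proposition follows by short inductions.

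\textbf{Step 1: monotone and inflationary.} Monotonicity and the inflationary property of \(N_M\) and \(N_C\) are immediate from their definitions, since each is of the form \(S\mapsto S\cup\{\text{neighbors of }S\text{ along a fixed edge set}\}\). Composing them, \(S\subseteq N_M(S)\subseteq N_C(N_M(S))=\Phi(S)\). Likewise, \(S\subseteq T\) gives \(N_M(S)\subseteq N_M(T)\) and hence \(\Phi(S)\subseteq\Phi(T)\).

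\textbf{Step 2: one-step sandwich.} The upper bound \(\Phi(S)\subseteq N_2^{\mathrm{int}}(S)\) was already shown in the proof of the operator-growth theorem. For the lower bound, take \(v\in N_1^{\mathrm{int}}(S)\setminus S\), adjacent to some \(u\in S\) via an edge of \(E\cup E_{\mathrm{new}}\). If the edge lies in \(E_{\mathrm{new}}\), then \(v\in N_M(S)\subseteq\Phi(S)\). If it lies in \(E\), then \(u\in S\subseteq N_M(S)\), so \(v\in N_C(N_M(S))\).

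\textbf{Step 3: iterated sandwich.} I would first record the composition law \(N_a^{\mathrm{int}}(N_b^{\mathrm{int}}(S))=N_{a+b}^{\mathrm{int}}(S)\). The inclusion \(\subseteq\) follows from the triangle inequality. For \(\supseteq\), split a shortest path from \(S\) of length at most \(a+b\) at its \(b\)-th vertex, or take that vertex to be the endpoint if the path is shorter. This is the only step with any content, and I expect it to be the main (if mild) obstacle, since the rest is bookkeeping.

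The induction on \(t\) then uses monotonicity of \(\Phi\) and of \(N_r^{\mathrm{int}}\). For the lower bound,
\[
\Phi^{t+1}(S)=\Phi(\Phi^t(S))\supseteq\Phi(N_t^{\mathrm{int}}(S))\supseteq N_1^{\mathrm{int}}(N_t^{\mathrm{int}}(S))=N_{t+1}^{\mathrm{int}}(S).
\]
The upper bound is symmetric: \(\Phi^{t+1}(S)\subseteq N_2^{\mathrm{int}}(N_{2t}^{\mathrm{int}}(S))=N_{2t+2}^{\mathrm{int}}(S)\).

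\textbf{Step 4: stabilization.} The chain \(S=\Phi^0(S)\subseteq\Phi^1(S)\subseteq\cdots\) is increasing by the inflationary property. It is determined by iterating a fixed map, so once \(\Phi^{t+1}(S)=\Phi^t(S)\) it is constant from then on. Each strict step adds at least one vertex of \(V\setminus S\), so there are at most \(|V|-|S|\) strict steps.

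\textbf{Step 5: full coverage.} Let \(G_{\mathrm{int}}\) be connected and \(S\neq\emptyset\). By the argument of the full-lightcone criterion, \(N_{\operatorname{diam}(G_{\mathrm{int}})}^{\mathrm{int}}(S)=V\). The lower bound from Step 3 with \(t=\operatorname{diam}(G_{\mathrm{int}})\) then gives \(\Phi^{\operatorname{diam}(G_{\mathrm{int}})}(S)=V\).
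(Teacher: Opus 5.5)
Your proposal is correct and follows essentially the same route as the paper. Both arguments derive monotonicity and inflation of \(\Phi\) from those of \(N_M\) and \(N_C\), use the one-step sandwich \(N_1^{\mathrm{int}}(S)=N_C(S)\cup N_M(S)\subseteq\Phi(S)\subseteq N_2^{\mathrm{int}}(S)\), induct on \(t\), count strict growth steps, and apply the lower bound at \(t=\operatorname{diam}(G_{\mathrm{int}})\). Your additions are small but useful: you prove the composition law \(N_a^{\mathrm{int}}(N_b^{\mathrm{int}}(S))=N_{a+b}^{\mathrm{int}}(S)\), which the paper uses without proof in the form \(N_{t+1}^{\mathrm{int}}(S)=N_1^{\mathrm{int}}(N_t^{\mathrm{int}}(S))\), and you observe that the chain stays constant once it stops growing.
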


\begin{proof}
The maps \(N_M\) and \(N_C\) are both monotone, so their composition \(\Phi=N_C\circ N_M\) is monotone as well. Also, \(S\subseteq N_M(S)\subseteq N_C(N_M(S))=\Phi(S)\),
so \(\Phi\) is inflationary.

Because \(N_C(T)\supseteq T\) for every \(T\subseteq V\), we have \(N_M(S)\subseteq \Phi(S)\).
Likewise, since \(N_M(S)\supseteq S\) and \(N_C\) is monotone, \(N_C(S)\subseteq N_C(N_M(S))=\Phi(S)\). Therefore \(N_1^{\mathrm{int}}(S)=N_C(S)\cup N_M(S)\subseteq \Phi(S)\).
The upper bound \(\Phi(S)\subseteq N_2^{\mathrm{int}}(S)\)
was already proved in the preceding theorem.

We now iterate these bounds. The base case \(t=0\) is immediate. If \(N_t^{\mathrm{int}}(S)\subseteq \Phi^t(S)\), then monotonicity of \(N_1^{\mathrm{int}}\) and the one-step lower bound give \(N_{t+1}^{\mathrm{int}}(S)=N_1^{\mathrm{int}}\!\bigl(N_t^{\mathrm{int}}(S)\bigr)\subseteq N_1^{\mathrm{int}}\!\bigl(\Phi^t(S)\bigr)\subseteq \Phi^{t+1}(S)\).
The upper inclusion \(\Phi^t(S)\subseteq N_{2t}^{\mathrm{int}}(S)\)
follows by the same induction used in the theorem.

Since \(\Phi\) is inflationary, the subsets \(\Phi^t(S)\) form an increasing chain inside the finite set \(V\). Each strict inclusion adds at least one new vertex, so there can be at most \(|V|-|S|\) strict growth steps before stabilization.

Finally, if \(G_{\mathrm{int}}\) is connected and \(S\neq\emptyset\), then every vertex lies within distance at most \(\operatorname{diam}(G_{\mathrm{int}})\) of \(S\). Hence \(N_{\operatorname{diam}(G_{\mathrm{int}})}^{\mathrm{int}}(S)=V\). Using the lower bound already proved, \(V = N_{\operatorname{diam}(G_{\mathrm{int}})}^{\mathrm{int}}(S) \subseteq \Phi^{\operatorname{diam}(G_{\mathrm{int}})}(S) \subseteq V\), which forces \(\Phi^{\operatorname{diam}(G_{\mathrm{int}})}(S)=V\).

\end{proof}

\begin{corollary}[Locality reduction through mixer augmentation]

Let \(G_{\mathrm{int}}=(V,E\cup E_{\mathrm{new}})\) be the full interaction graph associated with the transport-augmented circuit.

Then the support of the Heisenberg-evolved observable \(C_{uv}^{(p)}\) is contained entirely within the radius-\(2p\) neighborhood of the set \(\{u,v\}\) in the interaction graph \(G_{\mathrm{int}}\). That is, \(\operatorname{supp}(C_{uv}^{(p)}) \subseteq \Phi^p(\{u,v\}) \subseteq N_{2p}^{\mathrm{int}}(\{u,v\})\), or equivalently, \(\operatorname{supp}(C_{uv}^{(p)}) \subseteq B_{G_{\mathrm{int}}}(u,2p)\cup B_{G_{\mathrm{int}}}(v,2p)\).

\end{corollary}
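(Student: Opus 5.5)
This corollary is a direct specialization of the operator-growth recursion theorem, so the plan is to instantiate that theorem with \(S=\{u,v\}\) and \(O_S=C_{uv}\). First, I will check that \(C_{uv}=\frac12(1-Z_uZ_v)\) is supported on \(\{u,v\}\). It is a linear combination of the identity and the two-qubit Pauli string \(Z_uZ_v\), and both act nontrivially only on qubits \(u\) and \(v\), since the identity has empty support.

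Next, applying the theorem with \(U_p=\prod_{\ell}U_B(\beta^{(\ell)})U(C,\gamma_\ell)\) gives \(\operatorname{supp}(C_{uv}^{(p)})\subseteq\Phi^p(\{u,v\})\). The same theorem also supplies \(\Phi^p(\{u,v\})\subseteq N_{2p}^{\mathrm{int}}(\{u,v\})\). Alternatively, this inclusion follows from the upper bound \(\Phi^t(S)\subseteq N_{2t}^{\mathrm{int}}(S)\) in the set-dynamics proposition just proved.

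For the equivalent ball form, I will use the remark in the definition of \(N_r^{\mathrm{int}}\), namely that \(N_r^{\mathrm{int}}(\{u,v\})=B_{G_{\mathrm{int}}}(u,r)\cup B_{G_{\mathrm{int}}}(v,r)\). To make this explicit: \(\mathrm{dist}(w,\{u,v\})=\min(\mathrm{dist}(w,u),\mathrm{dist}(w,v))\le 2p\) holds exactly when \(w\) lies in one of the two balls. Setting \(r=2p\) then gives the stated equivalence.

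There is no real obstacle here. The only points needing a sentence of care are the support claim for the identity part of \(C_{uv}\) and the convention that disconnected vertices are at infinite distance. The second point does not affect the inclusions.
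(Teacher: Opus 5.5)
Your proposal is correct and matches the paper's proof, which simply applies the operator-growth theorem with \(S=\{u,v\}\) and \(O_S=C_{uv}\). Your extra remarks on why \(C_{uv}\) is supported on \(\{u,v\}\) and why \(N_{2p}^{\mathrm{int}}(\{u,v\})=B_{G_{\mathrm{int}}}(u,2p)\cup B_{G_{\mathrm{int}}}(v,2p)\) only make explicit what the paper leaves implicit.
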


\begin{proof}

Apply the theorem with \(S=\{u,v\}\) and \(O_S=C_{uv}\).

\end{proof}

\begin{corollary}[Schr\"odinger-picture objective locality]
For the commuting \(XX\) transport mixer, fix a cost edge \((u,v)\in E\) and define the exact causal region \(\mathcal L_{uv}^{XX}:=\Phi^p(\{u,v\})\). Write \(\mathcal L=\mathcal L_{uv}^{XX}\), and let \(U_{p,\mathcal L}^{XX}\) denote the depth-\(p\) circuit obtained by keeping only those single-qubit, mixer-edge, and cost-edge gates whose supports are contained in \(\mathcal L\), with the same layer order and the same angles as in the full circuit. Then
\(\langle s|U_p^\dagger C_{uv}U_p|s\rangle=\langle s_{\mathcal L}|(U_{p,\mathcal L}^{XX})^\dagger C_{uv}U_{p,\mathcal L}^{XX}|s_{\mathcal L}\rangle\),
where \(|s_{\mathcal L}\rangle\) is the product initial state restricted to \(\mathcal L_{uv}^{XX}\). Consequently, the local objective contribution \(\langle C_{uv}\rangle\) is a function only of the restricted causal subcircuit inside \(\mathcal L_{uv}^{XX}\subseteq N_{2p}^{\mathrm{int}}(\{u,v\})\).

For the scheduled \(XX+YY\) mixer with \(q\) edge-color sublayers, the same statement holds with \(\mathcal L_{uv}^{XY}:=\Psi_q^p(\{u,v\})\subseteq N_{(q+1)p}^{\mathrm{int}}(\{u,v\})\). If some variational angles are tied globally across all edges in a layer, those shared scalar parameters appear in every local function; the locality statement is that distant gates and graph data outside the causal region do not otherwise affect the local contribution.
\end{corollary}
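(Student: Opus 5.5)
The plan is a backward lightcone-pruning argument: starting from the observable, peel off layers from the inside out and show that at each stage every gate not contained in \(\mathcal L\) either cancels against its adjoint or acts as the identity on the relevant operator. Let \(S_t:=\Phi^t(\{u,v\})\) for the commuting \(XX\) case. Define \(W_t\) as the product of the last \(p-t\) layers of the full circuit. The operator-growth theorem, applied in reverse order with indices relabeled, places the support of the partially evolved observable \(W_t^\dagger C_{uv}W_t\) in an iterated growth region of \(\{u,v\}\). For the sets to match, I will index so that the region after peeling \(k\) layers from the observable side is \(\Phi^k(\{u,v\})\subseteq\mathcal L\).

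In the inductive step, we conjugate an operator \(O\) supported on \(R=\Phi^k(\{u,v\})\) by one further layer, first the mixer and then the cost layer, in the order of the Heisenberg recursion.

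For the mixer layer, all \(X_a\) and \(X_aX_b\) factors commute with one another. Any factor whose support misses \(R\) commutes with \(O\) and cancels. The surviving factors are the edges meeting \(R\), together with single-qubit gates on \(R\). Each of these is supported in \(N_M(R)\subseteq\mathcal L\), so it is kept in \(U^{XX}_{p,\mathcal L}\).

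The dropped gates of the truncated circuit, meaning those not contained in \(\mathcal L\), are all among the cancelled ones. Moreover, some gates contained in \(\mathcal L\) but disjoint from \(R\) are kept in the truncated circuit yet cancel anyway, so keeping them is harmless. Hence conjugating by the full mixer layer and by the truncated mixer layer gives the same operator, supported in \(N_M(R)\).

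The cost layer is handled the same way, since all \(e^{-i\gamma C_{ab}}\) commute. Factors disjoint from \(N_M(R)\) cancel. The remaining ones touch \(N_M(R)\) and are therefore supported in \(N_C(N_M(R))=\Phi(R)\subseteq\mathcal L\). This gives \(U_p^\dagger C_{uv}U_p=(U^{XX}_{p,\mathcal L})^\dagger C_{uv}U^{XX}_{p,\mathcal L}\) as operators, tensored with the identity outside \(\mathcal L\).

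The main obstacle is the containment \(\Phi^{k+1}\subseteq\mathcal L\) at every stage. The growth region is built outward from the observable, one layer at a time, so after peeling \(k\) layers the support is \(\Phi^k(\{u,v\})\). Monotonicity and the inflationary property of \(\Phi\) (the set-dynamics proposition) give \(\Phi^k(\{u,v\})\subseteq\Phi^p(\{u,v\})=\mathcal L\) for all \(k\le p\). This is exactly what ensures that no essential gate is ever dropped.

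With the operator identity in hand, take the expectation in \(\ket{s}=\ket{s_{\mathcal L}}\otimes\ket{s_{\bar{\mathcal L}}}\). The factor on the complement contributes \(\langle s_{\bar{\mathcal L}}|s_{\bar{\mathcal L}}\rangle=1\), which yields the stated equality. The inclusion \(\mathcal L\subseteq N_{2p}^{\mathrm{int}}(\{u,v\})\) is the already-proved bound \(\Phi^p\subseteq N_{2p}^{\mathrm{int}}\).

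For the scheduled \(XX+YY\) case the argument is the same, except that commutativity within a mixer layer is no longer available. Instead I will peel the \(q\) matching sublayers one at a time, in reverse scheduled order. Within each matching the gates have disjoint supports and therefore commute, so the cancellation argument applies per sublayer and grows support by one \(N_M\) step. Each layer then yields \(\Psi_q\), and \(\mathcal L^{XY}=\Psi_q^p(\{u,v\})\subseteq N_{(q+1)p}^{\mathrm{int}}\) by the support theorem for scheduled mixers. The remark about globally tied angles needs no proof: it simply notes that shared scalars enter the truncated circuit.
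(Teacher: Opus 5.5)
Your proposal is correct and follows essentially the same route as the paper. Both peel layers in the Heisenberg picture: gates disjoint from the current support cancel, the surviving gates lie in the recursively generated sets \(\Phi^k(\{u,v\})\subseteq\Phi^p(\{u,v\})=\mathcal L\) by inflationarity, and the product structure of \(\ket{s}\) finishes the argument, with the matching sublayers peeled one at a time for \(XX+YY\). Your explicit observation that gates kept in \(\mathcal L\) but disjoint from the current support also cancel in the truncated circuit fills in a step the paper leaves implicit.
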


\begin{proof}
We prove the \(XX\) statement; the scheduled \(XX+YY\) statement follows from the same argument with \(\Phi\) replaced by \(\Psi_q\). The support-recursion proof does more than bound the final support. In the Heisenberg conjugation of \(C_{uv}\), every gate whose support is disjoint from the current support commutes through and cancels against its adjoint. The only gates that survive are those incident on the recursively generated sets \(\{u,v\},\ \Phi(\{u,v\}),\ \dots,\ \Phi^p(\{u,v\})\). Since the final region \(\mathcal L_{uv}^{XX}=\Phi^p(\{u,v\})\) is inflationary and contains all earlier recursively generated sets, every surviving gate has support contained in \(\mathcal L_{uv}^{XX}\). Therefore \(U_p^\dagger C_{uv}U_p=\bigl((U_{p,\mathcal L}^{XX})^\dagger C_{uv}U_{p,\mathcal L}^{XX}\bigr)\otimes \mathbb I_{V\setminus \mathcal L_{uv}^{XX}}\). Because the initial state factorizes as \(|s\rangle=|s_{\mathcal L}\rangle\otimes |s_{V\setminus \mathcal L}\rangle\), taking the expectation value gives the claimed restricted-circuit formula.

\end{proof}

\begin{corollary}[Lightcone-volume bound]
For any operator \(O_S\) initially supported on \(S\subseteq V\), \(|\operatorname{supp}(U_p^\dagger O_S U_p)| \le |\Phi^p(S)| \le |N_{2p}^{\mathrm{int}}(S)|\).
\end{corollary}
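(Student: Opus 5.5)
The bound will follow directly from the operator-growth recursion theorem for the commuting \(XX\) mixer, together with the monotonicity of cardinality under set inclusion. No new analytic input is needed: the statement just converts the support inclusions already proved into inequalities between the sizes of the corresponding vertex sets.

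The first step is to invoke the operator-growth theorem. It gives
\[
\operatorname{supp}(U_p^\dagger O_S U_p)\subseteq \Phi^p(S)\subseteq N_{2p}^{\mathrm{int}}(S).
\]
The same chain also follows from the set-dynamics proposition above, which gives \(\Phi^t(S)\subseteq N_{2t}^{\mathrm{int}}(S)\) for every \(t\); we take \(t=p\). Since \(V\) is finite, all three sets are finite subsets of \(V\). For finite sets \(A\subseteq B\) we have \(|A|\le|B|\). Applying this to each of the two inclusions yields
\[
|\operatorname{supp}(U_p^\dagger O_S U_p)|\le|\Phi^p(S)|\le|N_{2p}^{\mathrm{int}}(S)|.
\]

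The only step needing care is what \(\operatorname{supp}\) means for a general operator. We should fix it as the set of qubits on which the operator fails to act as the identity. This is the convention under which the theorem's inductive argument holds: gates supported away from the current set commute through the operator and cancel against their adjoints.

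As an optional remark, the same argument applied to the scheduled \(XX+YY\) support theorem gives the analogous bound \(|\operatorname{supp}|\le|\Psi_q^p(S)|\le|N_{(q+1)p}^{\mathrm{int}}(S)|\). There is no genuine obstacle in this proof. Its entire content lies in the earlier support theorem, so the argument is a two-line corollary.
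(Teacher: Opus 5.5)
Your proposal is correct and takes the same approach as the paper, which also derives the bound from the inclusion chain \(\operatorname{supp}(U_p^\dagger O_S U_p)\subseteq \Phi^p(S)\subseteq N_{2p}^{\mathrm{int}}(S)\) of the operator-growth theorem and then passes to cardinalities. Your note on the meaning of \(\operatorname{supp}\) and the scheduled \(XX+YY\) analogue are harmless additions.
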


\begin{proof}
This follows immediately from \(\operatorname{supp}(U_p^\dagger O_S U_p)\subseteq \Phi^p(S)\subseteq N_{2p}^{\mathrm{int}}(S)\).
\end{proof}

\begin{corollary}[Bounded-degree lightcone-volume estimate]
Assume the interaction graph \(G_{\mathrm{int}}\) has maximum degree at most \(\Delta\ge2\). Then for every operator \(O_S\) initially supported on \(S\subseteq V\),
\[
|\operatorname{supp}(U_p^\dagger O_S U_p)|
\le
|S|
\left(
1+\Delta\sum_{r=0}^{2p-1}(\Delta-1)^r
\right)
\]
Equivalently,
\[
|\operatorname{supp}(U_p^\dagger O_S U_p)|
\le
|S|(1+4p)
\qquad
\text{if }\Delta=2
\]
and
\begin{multline*}
|\operatorname{supp}(U_p^\dagger O_S U_p)|
\le\\
|S|
\left(
1+\Delta\frac{(\Delta-1)^{2p}-1}{\Delta-2}
\right)
\qquad
\text{if }\Delta\ge3
\end{multline*}
\end{corollary}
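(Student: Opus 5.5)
The plan is to chain the lightcone-volume bound with a standard ball-counting estimate in a bounded-degree graph. By the preceding lightcone-volume corollary, \(|\operatorname{supp}(U_p^\dagger O_S U_p)|\le |N_{2p}^{\mathrm{int}}(S)|\). Since \(N_{2p}^{\mathrm{int}}(S)=\bigcup_{s\in S}B_{G_{\mathrm{int}}}(s,2p)\), a union bound gives
\[
|N_{2p}^{\mathrm{int}}(S)|\le \sum_{s\in S}|B_{G_{\mathrm{int}}}(s,2p)|\le |S|\max_{s\in V}|B_{G_{\mathrm{int}}}(s,2p)|.
\]
So everything reduces to bounding the size of a radius-\(R\) ball with \(R=2p\) in a graph of maximum degree \(\Delta\).

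For the ball bound I will count spheres. Let \(\Sigma_r(s)=\{v:\mathrm{dist}_{G_{\mathrm{int}}}(v,s)=r\}\). Then \(|\Sigma_0|=1\) and \(|\Sigma_1|\le\Delta\). For \(r\ge1\), every vertex of \(\Sigma_{r+1}\) has a neighbor in \(\Sigma_r\). Also, every vertex of \(\Sigma_r\) with \(r\ge1\) already uses at least one of its at most \(\Delta\) edges on a neighbor in \(\Sigma_{r-1}\). Hence \(|\Sigma_{r+1}|\le(\Delta-1)|\Sigma_r|\), and by induction \(|\Sigma_r|\le\Delta(\Delta-1)^{r-1}\) for \(r\ge1\). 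Summing over \(r=0,\dots,R\) gives
\[
|B(s,R)|\le 1+\Delta\sum_{r=0}^{R-1}(\Delta-1)^r.
\]
Setting \(R=2p\) yields the first displayed inequality.

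The two closed forms then follow by evaluating the geometric sum. For \(\Delta=2\) every term equals \(1\), so the bracket is \(1+2\cdot 2p=1+4p\). For \(\Delta\ge3\) the sum equals \(\frac{(\Delta-1)^{2p}-1}{\Delta-2}\).

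The only step requiring care is the sphere recursion. It needs the observation that a vertex at distance \(r\ge1\) spends at least one edge pointing backward, and this must be stated so that it also covers vertices with several back-neighbors; the bound only becomes smaller in that case. The union bound over \(S\) is crude, since balls may overlap, but it suffices for the stated inequality. No property of the circuit beyond the earlier support theorem is needed.
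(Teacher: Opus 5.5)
Your proposal is correct and follows the paper's proof. Both arguments reduce to $|N_{2p}^{\mathrm{int}}(S)|$ via the lightcone-volume corollary, bound each radius-$2p$ ball by $1+\Delta\sum_{r=0}^{2p-1}(\Delta-1)^r$, sum over $S$, and evaluate the geometric sum; your sphere recursion $|\Sigma_{r+1}|\le(\Delta-1)|\Sigma_r|$ is simply a more careful version of the paper's ``at most $\Delta-1$ new branches'' count.
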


\begin{proof}
By the lightcone-volume bound, it is enough to estimate \(|N_{2p}^{\mathrm{int}}(S)|\). For a single vertex \(u\in V\), the ball of radius \(2p\) contains at most
\[
1+\Delta\sum_{r=0}^{2p-1}(\Delta-1)^r
\]
vertices, because after the first step there are at most \(\Delta\) choices and at each later step there are at most \(\Delta-1\) new branches. Summing this crude bound over \(u\in S\) gives
\[
|N_{2p}^{\mathrm{int}}(S)|
\le
|S|
\left(
1+\Delta\sum_{r=0}^{2p-1}(\Delta-1)^r
\right)
\]
Combining this with \(|\operatorname{supp}(U_p^\dagger O_S U_p)|\le |N_{2p}^{\mathrm{int}}(S)|\) proves the result. The displayed closed forms are the corresponding evaluations of the geometric sum.
\end{proof}

\begin{corollary}[Necessary depth for nontrivial operator influence]
If \(\operatorname{diam}(G_{\mathrm{int}})>2p\), then there exist vertices \(u,v\in V\) such that for every pair of operators \(O_u\) and \(O_v\) supported on \(\{u\}\) and \(\{v\}\), respectively, we have \([U_p^\dagger O_u U_p,\; O_v]=0\).
\end{corollary}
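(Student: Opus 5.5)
The plan is to turn the diameter hypothesis into a pair of far-apart vertices and then apply the exact finite-depth commutator bound with \(S=\{u\}\) and \(T=\{v\}\).

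First, choose the vertices. Since \(\operatorname{diam}(G_{\mathrm{int}})>2p\), there are \(u,v\in V\) with \(\mathrm{dist}_{G_{\mathrm{int}}}(u,v)>2p\). If \(G_{\mathrm{int}}\) is disconnected we use the convention \(\operatorname{diam}=\infty\); then any \(u,v\) in different components work, with \(\mathrm{dist}=\infty>2p\). I will state this convention explicitly, because it is the only point needing care. In the connected case the diameter is attained by some pair, since \(V\) is finite.

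Next, apply the commutator corollary. Take arbitrary \(O_u\) supported on \(\{u\}\) and \(O_v\) supported on \(\{v\}\). With \(S=\{u\}\) and \(T=\{v\}\) we have \(\mathrm{dist}_{G_{\mathrm{int}}}(S,T)=\mathrm{dist}_{G_{\mathrm{int}}}(u,v)>2p\). The distance form of the exact finite-depth commutator corollary then gives \([U_p^\dagger O_uU_p,O_v]=0\).

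For completeness, I will also spell out the underlying chain. The operator-growth theorem gives \(\operatorname{supp}(U_p^\dagger O_uU_p)\subseteq\Phi^p(\{u\})\subseteq N^{\mathrm{int}}_{2p}(\{u\})=B_{G_{\mathrm{int}}}(u,2p)\). The choice of \(u,v\) gives \(v\notin B_{G_{\mathrm{int}}}(u,2p)\). Operators on disjoint qubit sets commute, which finishes the argument. This holds for all angles and for all choices of \(O_u,O_v\), because the pair \((u,v)\) depends only on the graph.

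There is no real obstacle here; the argument is a direct specialization of earlier results. The one subtle point is the disconnected case just noted. I would also remark that the same argument with \(2p\) replaced by \((q+1)p\) and \(\Phi\) replaced by \(\Psi_q\) gives the scheduled \(XX+YY\) analogue.
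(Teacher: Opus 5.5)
Your proposal is correct and matches the paper's proof: choose \(u,v\) with \(\mathrm{dist}_{G_{\mathrm{int}}}(u,v)>2p\), then apply the exact commutator corollary with \(S=\{u\}\) and \(T=\{v\}\). Your explicit treatment of the disconnected case (taking \(\operatorname{diam}=\infty\)) and your unpacked support chain through \(\Phi^p(\{u\})\subseteq B_{G_{\mathrm{int}}}(u,2p)\) are harmless additions that the paper leaves implicit.
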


\begin{proof}
Choose \(u,v\in V\) with \(\mathrm{dist}_{G_{\mathrm{int}}}(u,v)>2p\). Apply the exact commutator corollary with \(S=\{u\}\) and \(T=\{v\}\).
\end{proof}

\begin{proposition}[Sharpness of the radius-\(2p\) bound]
The factor \(2\) in the interaction-graph lightcone bound is optimal in general. For every depth \(p\), there exist a transport-augmented instance, a local observable, and parameter choices for which the Heisenberg-evolved observable has support reaching a vertex at interaction-graph distance exactly \(2p\) from the initial support.

\end{proposition}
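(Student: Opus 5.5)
We prove sharpness by an explicit path construction in which each of the $p$ layers spends one mixer hop followed by one cost hop. Take $V=\{0,1,\dots,2p\}$. Let the transport edges be $E_{\mathrm{new}}=\{(2k,2k+1):0\le k\le p-1\}$, a matching so that $q=1$. Let the cost edges be $E=\{(2k+1,2k+2):0\le k\le p-1\}$. Then $G_{\mathrm{int}}$ is the path $0-1-\cdots-2p$, and $\mathrm{dist}_{G_{\mathrm{int}}}(0,2p)=2p$. For the observable we choose $O=Z_0$. (A MaxCut instance with isolated vertex $0$ is allowed; alternatively append a cost edge $(0,-1)$ and use $C_{0,-1}$ with the same argument.) Since $\Phi(\{0,\dots,2k\})=\{0,\dots,2k+2\}$, the growth region satisfies $\Phi^p(\{0\})=V$. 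The task is therefore to show that the support actually reaches vertex $2p$.

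The Heisenberg order matters. $U_p^\dagger O U_p$ conjugates first by layer 1, whose mixer acts on the operator before its cost layer, so the mixer hop precedes the cost hop within each layer, as in $\Phi=N_C\circ N_M$. We set all single-qubit angles $\beta_0^{(\ell)}=0$ and choose $\beta_{2k,2k+1}^{(\ell)}=\pi/4$ and $\gamma_\ell=\pi/2$. The choice $\beta=\pi/4$ makes $e^{-i\pi X_aX_b/4}$ a Clifford gate that maps $Z_a\mapsto \pm Y_aX_b$. The cost gate $e^{-i\gamma C_{ab}}$ equals $e^{-i\gamma/2}e^{i\gamma Z_aZ_b/2}$, and with $\gamma=\pi/2$ it is a Clifford gate that maps $X_b\mapsto \pm Y_bZ_{b+1}$ (and similarly $Y_b\mapsto \pm X_bZ_{b+1}$). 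Every step is a Clifford map, so we can track a single Pauli string exactly and no cancellation can occur.

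In the induction, the hypothesis after $k$ layers is that the string is, up to sign, a Pauli string whose rightmost non-identity factor sits at site $2k$ and is $Z_{2k}$. The earlier mixer edges $(2j,2j+1)$ with $j<k$ act again in later layers. Their action on sites $<2k$ only permutes Pauli factors on those sites and never moves the frontier, and the earlier cost edges behave the same way. At layer $k+1$ we argue as follows. The mixer gate on $(2k,2k+1)$ sends $Z_{2k}$ to $\pm Y_{2k}X_{2k+1}$. The cost gate on $(2k+1,2k+2)$ sends $X_{2k+1}$ to $\pm Y_{2k+1}Z_{2k+2}$. All other gates act either on qubits to the left of the frontier or trivially. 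The new string is then supported up to $2k+2$, with a $Z$ at the frontier. This $Z$ is the right type for the next mixer step, because $X_aX_b$ anticommutes with $Z_a$. After $p$ layers the support contains $2p$, so the bound is attained. This also shows that $\Phi^p$ is attained exactly on this instance.

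The main obstacle is the bookkeeping of the repeated conjugations by earlier-layer gates on the already-covered sites. We must check that they never cancel the frontier factor and never turn it into a Pauli that commutes with the next gate. The rightmost-factor invariant handles this, because gates acting on sites $<2k$ cannot alter the factor at site $2k$. We need one further check: at site $2k$ in the current layer, the cost gate on $(2k-1,2k)$ acts before the mixer gate $(2k,2k+1)$ only in the previous layer's ordering. Within a fixed layer the mixer comes first, so the frontier $Z_{2k}$ meets the mixer gate $(2k,2k+1)$ before any cost gate in that layer touches site $2k$. We verify this directly from the single-layer conjugation order. If the signs or Pauli types turn out awkward, a fallback is to pick generic angles and note that the coefficient of a string reaching $2p$ is a nonzero trigonometric polynomial.
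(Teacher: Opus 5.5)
Your proof is correct and is essentially the paper's argument. The paper uses the same alternating transport/cost path (with the extra initial cost edge, i.e.\ your $C_{0,-1}$ variant), the same Clifford angles $\beta=\pi/4$, $\gamma=\pi/2$, and the same mixer-then-cost advance of a frontier Pauli factor in each layer. Your single-Pauli-string argument and rightmost-factor invariant make explicit two points the paper leaves implicit: that no cancellation can occur, and that earlier-layer gates acting again cannot disturb the frontier. Your side remark that the support is all of $\Phi^p(\{0\})$ is not established by the frontier invariant alone, but it is true, since an easy induction shows the string is $\pm Y_0\cdots Y_{2p-1}Z_{2p}$.
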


\begin{proof}
Consider vertices
\(v_0,v_1,\dots,v_{2p+1}\), cost edges \(E=\{(v_{2r},v_{2r+1}):0\le r\le p\}\), and transport edges \(E_{\mathrm{new}}=\{(v_{2r+1},v_{2r+2}):0\le r\le p-1\}\). Start from the \(Z_{v_1}\) component of the local edge observable \(C_{v_0v_1}=\tfrac12(\mathbb{I}-Z_{v_0}Z_{v_1})\).
For a transport gate on \((a,b)\),
\[
e^{i\beta X_aX_b} Z_a e^{-i\beta X_aX_b}
=
\cos(2\beta)\, Z_a
+
\sin(2\beta)\, Y_aX_b
\]
and for a cost gate on \((a,b)\),
\[
e^{i\gamma C_{ab}} X_a e^{-i\gamma C_{ab}}
=
\cos(\gamma)\, X_a
+
\sin(\gamma)\, Y_aZ_b
\]
Thus the first transport gate on \((v_1,v_2)\) can produce a Pauli-string component supported on \(v_2\), and the subsequent cost gate on \((v_2,v_3)\) can produce a component supported on \(v_3\). Iterating this alternating mechanism through the chain shows that after \(p\) layers there is a Pauli-string component supported on \(v_{2p+1}\) whose coefficient is a product of trigonometric factors. Choosing, for example, all relevant transport angles equal to \(\pi/4\) and all relevant cost angles equal to \(\pi/2\), this coefficient is nonzero. Therefore the evolved observable has support reaching \(v_{2p+1}\).

Finally, \(\mathrm{dist}_{G_{\mathrm{int}}}(\{v_0,v_1\},v_{2p+1})=2p\), so the radius-\(2p\) upper bound is attained in general and cannot be improved.

\end{proof}

\begin{corollary}[Connected-correlation obstruction]
Let \(\rho=\bigotimes_{v\in V}\rho_v\) be a product state on the qubits, and let \(O_S\) and \(O_T\) be observables initially supported on disjoint subsets \(S,T\subseteq V\). If \(\mathrm{dist}_{G_{\mathrm{int}}}(S,T)>4p\), then the connected correlator of the Heisenberg-evolved observables vanishes:
\[
\operatorname{Tr}\!\bigl(\rho\, O_S^{(p)}O_T^{(p)}\bigr)
-
\operatorname{Tr}\!\bigl(\rho\, O_S^{(p)}\bigr)
\operatorname{Tr}\!\bigl(\rho\, O_T^{(p)}\bigr)
=
0
\]

\end{corollary}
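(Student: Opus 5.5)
The plan is to combine the support theorem with the factorization of product states. By the operator-growth theorem, \(\operatorname{supp}(O_S^{(p)})\subseteq \Phi^p(S)\subseteq N_{2p}^{\mathrm{int}}(S)\) and likewise \(\operatorname{supp}(O_T^{(p)})\subseteq N_{2p}^{\mathrm{int}}(T)\). The first step is to show that these two neighborhoods are disjoint when \(\mathrm{dist}_{G_{\mathrm{int}}}(S,T)>4p\).

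The disjointness follows from the triangle inequality. Suppose \(w\in N_{2p}^{\mathrm{int}}(S)\cap N_{2p}^{\mathrm{int}}(T)\). Then there are \(s\in S\) and \(t\in T\) with \(\mathrm{dist}(s,w)\le 2p\) and \(\mathrm{dist}(w,t)\le 2p\). This gives \(\mathrm{dist}(s,t)\le 4p\), which contradicts \(\mathrm{dist}(S,T)>4p\). Write \(A:=N_{2p}^{\mathrm{int}}(S)\), \(B:=N_{2p}^{\mathrm{int}}(T)\) and \(R:=V\setminus(A\cup B)\).

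Next I would write the evolved observables as \(O_S^{(p)}=\tilde O_A\otimes\mathbb I_{B\cup R}\) and \(O_T^{(p)}=\tilde O_B\otimes\mathbb I_{A\cup R}\). Here "support" means the operator acts as the identity outside that set, which is the meaning used throughout the theorem's proof. The product state factorizes as \(\rho=\rho_A\otimes\rho_B\otimes\rho_R\) with \(\rho_X=\bigotimes_{v\in X}\rho_v\). Therefore
\[
\operatorname{Tr}\bigl(\rho\,O_S^{(p)}O_T^{(p)}\bigr)=\operatorname{Tr}(\rho_A\tilde O_A)\operatorname{Tr}(\rho_B\tilde O_B)\operatorname{Tr}(\rho_R),
\]
and \(\operatorname{Tr}(\rho_R)=1\). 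Similarly \(\operatorname{Tr}(\rho\,O_S^{(p)})=\operatorname{Tr}(\rho_A\tilde O_A)\operatorname{Tr}(\rho_B)\operatorname{Tr}(\rho_R)=\operatorname{Tr}(\rho_A\tilde O_A)\), and the analogous identity holds for \(T\). Subtracting the two expressions gives zero.

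There is no real obstacle. The only subtle point is that the state being used is the unevolved product state \(\rho\), paired with the Heisenberg-evolved operators. This is exactly why the threshold is \(4p\), twice the single-lightcone radius, rather than \(2p\). One could also sharpen the hypothesis to \(\Phi^p(S)\cap\Phi^p(T)=\emptyset\) by using the exact growth regions in place of the balls; I would mention this as a remark. The same argument extends to the scheduled \(XX+YY\) ansatz with threshold \(2(q+1)p\).
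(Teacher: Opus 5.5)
Your proposal is correct and follows the paper's proof: the operator-growth theorem places the evolved supports in \(N_{2p}^{\mathrm{int}}(S)\) and \(N_{2p}^{\mathrm{int}}(T)\), the hypothesis \(\mathrm{dist}_{G_{\mathrm{int}}}(S,T)>4p\) makes these sets disjoint, and the product structure of \(\rho\) factorizes the expectations. You also spell out two steps the paper leaves implicit, namely the triangle-inequality argument for disjointness and the explicit tensor-factor computation.
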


\begin{proof}
By the operator-growth theorem, \(\operatorname{supp}(O_S^{(p)})\subseteq N_{2p}^{\mathrm{int}}(S)\) and \(\operatorname{supp}(O_T^{(p)})\subseteq N_{2p}^{\mathrm{int}}(T)\).
If \(\mathrm{dist}_{G_{\mathrm{int}}}(S,T)>4p,\)
then the two neighborhoods \(N_{2p}^{\mathrm{int}}(S)\) and \(N_{2p}^{\mathrm{int}}(T)\) are disjoint. Hence \(O_S^{(p)}\) and \(O_T^{(p)}\) act on disjoint tensor factors. Since \(\rho\) is a product state across qubits, expectations factorize across these disjoint supports, proving the vanishing of the connected correlator.

\end{proof}

\section{Additional Properties of Graph Augmentation}

\begin{proposition}[Tradeoff monotonicity of optimal augmentation size]
If
\(
1\le D_1\le D_2,
\)
then
\[
\mathcal A_{D_1}(G)\subseteq \mathcal A_{D_2}(G),
\qquad
\alpha_{D_1}(G)\ge \alpha_{D_2}(G)
\]
In particular,
\[
\operatorname{diam}(G)\le D
\implies
\alpha_D(G)=0
\]

\end{proposition}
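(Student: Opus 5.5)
The plan is a direct set-inclusion argument from the definition of \(\mathcal A_D(G)\), followed by the standard fact that minimizing over a larger feasible set cannot give a larger minimum.

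\emph{Step 1: inclusion of feasible families.} Fix \(F\in\mathcal A_{D_1}(G)\). By definition \(F\subseteq\binom{V}{2}\setminus E\) and \(\operatorname{diam}(V,E\cup F)\le D_1\). Since \(D_1\le D_2\), we also have \(\operatorname{diam}(V,E\cup F)\le D_2\), so \(F\in\mathcal A_{D_2}(G)\). This proves \(\mathcal A_{D_1}(G)\subseteq\mathcal A_{D_2}(G)\).

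\emph{Step 2: comparison of minima.} Both feasible families are nonempty by the remark in the definition, since the completion to \(K_{|V|}\) has diameter \(1\le D_1\). The minima are over finite sets, so they are attained. Let \(F^\star\) be a minimizer for \(D_1\). By Step 1, \(F^\star\in\mathcal A_{D_2}(G)\). Hence
\[
\alpha_{D_2}(G)\le |F^\star|=\alpha_{D_1}(G).
\]

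\emph{Step 3: the zero case.} If \(\operatorname{diam}(G)\le D\), then \(F=\emptyset\) satisfies \(\emptyset\subseteq\binom{V}{2}\setminus E\) and \(\operatorname{diam}(V,E\cup\emptyset)=\operatorname{diam}(G)\le D\). So \(\emptyset\in\mathcal A_D(G)\), which gives \(\alpha_D(G)\le 0\). Since cardinalities are nonnegative, \(\alpha_D(G)=0\).

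The only delicate point is the diameter convention. If \(G\) is disconnected, \(\operatorname{diam}(G)=\infty\), and the hypothesis \(\operatorname{diam}(G)\le D\) cannot hold. The implication is then vacuous, and no step above needs a separate case for disconnected graphs.
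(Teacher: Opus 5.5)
Your proposal is correct and follows the paper's proof essentially step for step. You show the feasible families are nested, compare the minima over nested sets, and use $F=\emptyset$ for the zero case; your explicit choice of a minimizer and your note on disconnected graphs add care but do not change the argument.
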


\begin{proof}
If an augmentation \(F\) satisfies
\[
\operatorname{diam}(V,E\cup F)\le D_1
\]
then automatically
\(\operatorname{diam}(V,E\cup F)\le D_2\), so \(F\in\mathcal A_{D_2}(G)\). This proves \(\mathcal A_{D_1}(G)\subseteq \mathcal A_{D_2}(G)\). Taking minima of \(|F|\) over nested feasible sets gives \(\alpha_{D_1}(G)\ge \alpha_{D_2}(G)\). If the original graph already satisfies \(\operatorname{diam}(G)\le D\), then the empty augmentation \(F=\emptyset\) lies in \(\mathcal A_D(G)\), so \(\alpha_D(G)=0\).

\end{proof}

\begin{remark}[Complexity and approximability status]
The exact optimization problem defining \(\alpha_D(G)\) is computationally hard in general. Li, McCormick, and Simchi-Levi proved that the minimum-cardinality bounded-diameter edge-addition problem is NP-hard even when the target diameter is \(D=2\)~\cite{li1992}. Gao, Hare, and Nastos further showed that the parameterized decision version of diameter-\(t\) augmentation is \(W[2]\)-hard with respect to the number of added edges, for every fixed \(t\)~\cite{gao2013}.

On the positive side, Bil\`o, Gual\`a, and Proietti showed that the minimum-cardinality bounded-diameter edge-addition problem admits a polynomial-time \(O(\log n)\)-approximation algorithm and that this logarithmic dependence is asymptotically tight up to constant factors unless \(P=NP\)~\cite{bilo2012}. They also established a bicriteria guarantee: in polynomial time, one can add at most twice the optimal number of edges while obtaining diameter at most twice the target diameter~\cite{bilo2012}. Thus, although computing an exact optimum is unlikely to be efficient on arbitrary graphs, generic constructive heuristics and logarithmic-factor approximation algorithms are available.

\end{remark}

\section{Proof of Root-Centered Augmentation}

\begin{proof}
If
\(
\mathrm{dist}_G(r,v)\le p,
\)
then no new edge is added incident to \(v\), and
\(
\mathrm{dist}_{G_r^{(p)}}(r,v)\le p.
\)
If instead
\(
\mathrm{dist}_G(r,v)>p,
\)
then by construction the edge \(\{r,v\}\) is added, so
\(\mathrm{dist}_{G_r^{(p)}}(r,v)=1\le p\). Thus every vertex of \(G_r^{(p)}\) lies within distance at most \(p\) of the root \(r\). The root criterion from the previous subsection therefore implies \(\operatorname{diam}(G_r^{(p)})\le 2p\).

The number of added edges is exactly the number of vertices at distance greater than \(p\) from \(r\), namely \(|F_r^{(p)}|=n-|N_p^G(r)|\).
Minimizing this quantity over all roots is equivalent to maximizing \(|N_p^G(r)|\), which gives the stated bound on \(\alpha_{2p}(G)\).

\end{proof}

\section{Additional Properties of Graph Augmentation and Symmetry}
\label{app:graph-augmentation}

\begin{proof}
Because \(F_p^\star\in\mathcal A_{2p}(G)\), the interaction graph \(G_p^\star\) satisfies \(\operatorname{diam}(G_p^\star)\le 2p\).
Applying the full-lightcone criterion to this interaction graph shows that for every nonempty \(S\subseteq V\), the radius-\(2p\) support bound is the whole vertex set \(V\). This proves part (1).

Now let \(F\subseteq \binom{V}{2}\setminus E\) satisfy \(|F|<\alpha_{2p}(G)\). By optimality of \(F_p^\star\), such an \(F\) cannot lie in \(\mathcal A_{2p}(G)\). Therefore \(\operatorname{diam}(G_F)>2p\). The necessary-depth corollary applied to the interaction graph \(G_F\) then yields vertices \(u,v\in V\) for which \([U_p^\dagger O_u U_p,\; O_v]=0\) for all single-site operators \(O_u\) and \(O_v\) at depth \(p\). This proves part (2).

\end{proof}

\begin{remark}[Graph-agnostic character of the augmentation problem]
The definition of \(\alpha_D(G)\) depends only on the input graph \(G\) and the target diameter \(D\). It does not assume symmetry, regularity, expansion, bipartite structure, or any other special graph family. The optimization problem is therefore a generic design problem for the mixer transport geometry, while the underlying cost Hamiltonian on \(E\) remains unchanged.

\end{remark}

\begin{proposition}[Polynomial-time root-centered augmentation]
Fix a depth \(p\) and a root vertex \(r\in V\). Define
\[
F_r^{(p)}
:=
\{
\{r,v\}\in \binom{V}{2}\setminus E:
\mathrm{dist}_G(r,v)>p
\}
\]
Then the augmented interaction graph
\[
G_r^{(p)}:=(V,E\cup F_r^{(p)})
\]
satisfies
\(\operatorname{diam}(G_r^{(p)})\le 2p\). Moreover, \(|F_r^{(p)}|=n-|N_p^G(r)|\), where \(N_p^G(r):=\{v\in V:\mathrm{dist}_G(r,v)\le p\}\). Consequently, \(\alpha_{2p}(G)\le n-\max_{r\in V}|N_p^G(r)|\). Choosing a root \(r^\star\) maximizing \(|N_p^G(r)|\) yields a graph-agnostic polynomial-time algorithm for producing a feasible diameter-\(2p\) augmentation of this size.

\end{proposition}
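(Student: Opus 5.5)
The plan is to verify the diameter bound by routing every pair of vertices through the root \(r\), then count the added edges, and finally read off the bound on \(\alpha_{2p}(G)\) and the polynomial running time.

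\textbf{Step 1: distances to the root.} The first step is to show that every vertex lies within distance \(p\) of \(r\) in \(G_r^{(p)}\). Split into two cases for each \(v\in V\).
\begin{itemize}
\item If \(\mathrm{dist}_G(r,v)\le p\), then a shortest \(r\)–\(v\) path in \(G\) survives in \(G_r^{(p)}\), because \(E\subseteq E\cup F_r^{(p)}\). Hence \(\mathrm{dist}_{G_r^{(p)}}(r,v)\le p\).
\item If \(\mathrm{dist}_G(r,v)>p\), then \(\{r,v\}\notin E\), since an edge would give distance \(1\le p\) (here \(p\ge1\)). So \(\{r,v\}\) belongs to \(F_r^{(p)}\), and \(\mathrm{dist}_{G_r^{(p)}}(r,v)=1\le p\).
\end{itemize}
This also covers a disconnected \(G\), where \(\mathrm{dist}_G=\infty>p\), so \(G_r^{(p)}\) is connected.

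\textbf{Step 2: the diameter bound.} For any \(u,w\in V\), the triangle inequality through the root gives
\[
\mathrm{dist}_{G_r^{(p)}}(u,w)\le \mathrm{dist}_{G_r^{(p)}}(u,r)+\mathrm{dist}_{G_r^{(p)}}(r,w)\le 2p.
\]
Therefore \(\operatorname{diam}(G_r^{(p)})\le 2p\). This is exactly the ``root criterion'' that the appendix proof alludes to.

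\textbf{Step 3: counting added edges.} The set \(F_r^{(p)}\) is in bijection with \(\{v:\mathrm{dist}_G(r,v)>p\}\), through \(v\mapsto\{r,v\}\). This set is the complement of \(N_p^G(r)\), so \(|F_r^{(p)}|=n-|N_p^G(r)|\).

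One point needs care here: could a vertex at distance \(>p\) already be adjacent to \(r\), so that its edge is excluded by the \(\setminus E\)? No. Adjacency forces distance \(1\le p\), which is the same argument as in Step 1. This is the only point requiring attention.

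\textbf{Step 4: bound on \(\alpha_{2p}(G)\) and running time.}
\begin{itemize}
\item By Step 2, \(F_r^{(p)}\in\mathcal A_{2p}(G)\) for every root \(r\). Hence \(\alpha_{2p}(G)\le n-|N_p^G(r)|\) for every \(r\), and minimizing over \(r\) gives \(\alpha_{2p}(G)\le n-\max_{r\in V}|N_p^G(r)|\).
\item For the algorithm, run BFS from each root. This costs \(O(n(n+|E|))\) in total and computes every \(|N_p^G(r)|\). Pick a root \(r^\star\) attaining the maximum and output \(F_{r^\star}^{(p)}\).
\end{itemize}

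No step is a real obstacle. The most delicate points are the convention \(p\ge1\) and the disconnected case, both handled in Steps 1 and 3.
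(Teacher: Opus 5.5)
Your proposal is correct and follows the paper's proof essentially step for step. Both use the same case split on \(\mathrm{dist}_G(r,v)\le p\) versus \(>p\), the same root criterion via the triangle inequality through \(r\), the same correspondence \(v\mapsto\{r,v\}\) for the edge count, and minimization over roots for the bound on \(\alpha_{2p}(G)\). Your explicit handling of \(p\ge1\), of disconnected \(G\), and of the \(O(n(n+|E|))\) BFS cost makes precise points the paper leaves implicit; the condition \(p\ge1\) guarantees \(\{r,v\}\notin E\) whenever \(\mathrm{dist}_G(r,v)>p\), and without it both the diameter bound and the count fail at \(p=0\).
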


\paragraph{Choosing \(E_{\mathrm{new}}\) to reduce diameter.}
Solving the exact minimization problem defining \(\alpha_D(G)\) may itself be combinatorial. For practical constructions, a simple greedy augmentation is often sufficient.

Fix a target diameter \(D\le 2p\). Initialize \(G_{\mathrm{int}}\leftarrow(V,E)\) and \(E_{\mathrm{new}}\leftarrow\emptyset\). Repeat:
\begin{enumerate}
    \item Compute a pair \((u,v)\) with maximum shortest-path distance in the current \(G_{\mathrm{int}}\).
    \item If \(\mathrm{dist}_{G_{\mathrm{int}}}(u,v)\le D\), stop.
    \item Otherwise add the shortcut edge \((u,v)\) to \(E_{\mathrm{new}}\) and update \(G_{\mathrm{int}}\).
\end{enumerate}

This explicitly trades off \(|E_{\mathrm{new}}|\) versus the achieved diameter and tends to create hub-like edges that shrink long geodesics rapidly.

A simple sufficient condition for \(\operatorname{diam}(G_{\mathrm{int}})\le 2p\) is the existence of a root vertex \(v_r\) such that every vertex satisfies \(\mathrm{dist}_{G_{\mathrm{int}}}(v,v_r)\le p\). Indeed, for any \(u,v\in V\), \(\mathrm{dist}_{G_{\mathrm{int}}}(u,v)\le \mathrm{dist}_{G_{\mathrm{int}}}(u,v_r)+\mathrm{dist}_{G_{\mathrm{int}}}(v_r,v)\le 2p\).

\paragraph{Implementing \(U_B\) (circuit synthesis).}

Each term in
\(
H_B(\beta)
\)
is a one- or two-qubit Pauli rotation.

For an added edge
\(
(i,j),
\)
we may implement
\begin{equation}
    e^{-i\beta_{ij}X_iX_j}
    =
    (H_i\otimes H_j)
    e^{-i\beta_{ij}Z_iZ_j}
    (H_i\otimes H_j)
\end{equation}
and a standard decomposition of
\(
e^{-i\beta Z_iZ_j}
\)
uses two CNOTs and an \(R_z(2\beta)\) rotation.

Since all \(X_iX_j\) terms commute with one another and with all \(X_k\) terms, the mixer layer can be parallelized according to an edge-coloring of \(G_M\).

\paragraph{Locality consequence at depth \(p\).}

By the operator-growth theorem, each Heisenberg-evolved edge term
\(
U^\dagger C_{jk} U
\)
is supported within
\(
N_{2p}^{\mathrm{int}}(\{j,k\}).
\)
Hence if the interaction graph is engineered so that
\(
\mathrm{diam}(G_{\mathrm{int}})\le 2p,
\)
then for every
\(
(j,k)\in E
\)
we have
\(
N_{2p}^{\mathrm{int}}(\{j,k\})=V,
\)
because for any vertex \(v\in V\),
\[
\mathrm{dist}_{G_{\mathrm{int}}}(v,\{j,k\})
\le
\mathrm{dist}_{G_{\mathrm{int}}}(v,j)
\le
\operatorname{diam}(G_{\mathrm{int}})
\le
2p
\]
Thus graph-distance support restrictions no longer force
\(
U^\dagger C_{jk} U
\)
to remain localized near the original edge.

This does not by itself guarantee a better approximation ratio for every parameter choice, nor does it rule out other bottlenecks such as symmetry constraints or concentration effects, but it does remove the specific graph-distance support restriction underlying fixed-depth locality arguments.

\paragraph{Hardware implications.}

If the device natively supports XX interactions, the added edges in
\(
E_{\mathrm{new}}
\)
correspond directly to additional couplers.
If only nearest-neighbor connectivity is available, one may either:
\begin{enumerate}
    \item Realize effective long-range XX interactions via SWAP networks, increasing depth, or
    \item Restrict \(E_{\mathrm{new}}\) to hardware edges while still choosing diameter-reducing augmentations of the hardware graph.
\end{enumerate}

The practical goal is to maximize effective mixer diameter reduction per unit two-qubit-gate cost.

\subsection{Symmetry Sectors from Graph Automorphisms}

Let \(\Gamma\) be a subgroup of permutations of \(V\). For each \(g\in\Gamma\), let \(P_g\) denote the induced qubit-permutation unitary.

\begin{theorem}[Automorphism-sector invariance]
Assume that for every \(g\in\Gamma\),
\[
g(E)=E,
\qquad
g(E_{\mathrm{new}})=E_{\mathrm{new}}
\]
and that in each layer \(\ell\), the mixer parameters are constant on \(\Gamma\)-orbits of transport edges:
\[
\beta_{g(i)g(j)}^{(\ell)}=\beta_{ij}^{(\ell)}
\qquad
\text{for all }(i,j)\in E_{\mathrm{new}}
\]
Then
\[
[P_g,C]=0,
\qquad
[P_g,U_B(\beta^{(\ell)})]=0
\]
for every \(g\in\Gamma\) and every layer \(\ell\). Hence the full depth-\(p\) unitary commutes with the \(\Gamma\)-action:
\[
[P_g,U_p]=0
\qquad
\text{for all }g\in\Gamma
\]
Consequently every symmetry sector of the permutation representation is invariant under the transport-augmented QAOA dynamics. In particular, if the initial state is \(\Gamma\)-invariant, then the evolution remains inside the fixed-point subspace
\[
\mathcal H^\Gamma
:=
\{
\ket{\psi}:
P_g\ket{\psi}=\ket{\psi}
\text{ for all }g\in\Gamma
\}
\]

\end{theorem}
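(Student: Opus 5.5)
The plan is to verify the two commutation relations directly from the conjugation rule for qubit permutations, and then assemble the unitary and sector statements from them. The basic identity is that for a permutation \(g\) of \(V\) and any single-site Pauli \(\sigma\in\{X,Y,Z\}\),
\[
P_g\,\sigma_i\,P_g^\dagger=\sigma_{g(i)} .
\]
Hence \(P_g\,\sigma_i\sigma_j\,P_g^\dagger=\sigma_{g(i)}\sigma_{g(j)}\). Conjugation by a unitary also commutes with taking exponentials, so \(P_g e^{-iA}P_g^\dagger=e^{-iP_gAP_g^\dagger}\) for any Hermitian \(A\). It therefore suffices to show \(P_gCP_g^\dagger=C\) and \(P_gH_B(\beta^{(\ell)})P_g^\dagger=H_B(\beta^{(\ell)})\).

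\textbf{Cost Hamiltonian.} We compute
\[
P_gCP_g^\dagger=\sum_{(j,k)\in E}\tfrac12\bigl(1-Z_{g(j)}Z_{g(k)}\bigr).
\]
Because \(g\) is a bijection of \(V\) with \(g(E)=E\), the map \((j,k)\mapsto(g(j),g(k))\) is a bijection of \(E\). The sum is therefore a reindexing of \(C\). Since \(C_{jk}\) is symmetric in \(j,k\), edge orientation is irrelevant.

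\textbf{Mixer Hamiltonian.} The term \(\beta_0\sum_jX_j\) maps to \(\beta_0\sum_jX_{g(j)}\), which is a reindexing over \(V\). The transport part maps to
\[
\sum_{(i,j)\in E_{\mathrm{new}}}\beta^{(\ell)}_{ij}X_{g(i)}X_{g(j)}.
\]
Substituting \((a,b)=(g(i),g(j))\), which ranges over \(E_{\mathrm{new}}\) because \(g(E_{\mathrm{new}})=E_{\mathrm{new}}\), turns the coefficient into \(\beta^{(\ell)}_{g^{-1}(a)g^{-1}(b)}\). The orbit-constancy hypothesis, applied to the edge \((g^{-1}(a),g^{-1}(b))\), shows that this equals \(\beta^{(\ell)}_{ab}\). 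This recovers \(H_B(\beta^{(\ell)})\).

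\textbf{Full unitary.} It follows that \(P_g\) commutes with \(U(C,\gamma_\ell)=e^{-i\gamma_\ell C}\) and with \(U_B(\beta^{(\ell)})\). Since \(U_p\) is a product of such factors, \([P_g,U_p]=0\).

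\textbf{Symmetry sectors.} Each isotypic component of the representation \(g\mapsto P_g\) is the image of a projector \(\Pi_\lambda=\frac{d_\lambda}{|\Gamma|}\sum_g\overline{\chi_\lambda(g)}P_g\). Because \(U_p\) commutes with every \(P_g\), it commutes with every \(\Pi_\lambda\) and so preserves each sector. In particular, if \(P_g\ket{\psi}=\ket{\psi}\) for all \(g\), then
\[
P_gU_p\ket{\psi}=U_pP_g\ket{\psi}=U_p\ket{\psi},
\]
so \(\mathcal H^\Gamma\) is invariant. For the intended initial state we also note that \(\ket{s}=\ket{+}^{\otimes n}\) is fixed by every \(P_g\).

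\textbf{Main obstacle.} The only delicate point is index bookkeeping in the mixer step. One must use the hypothesis in the form \(\beta_{g^{-1}(a)g^{-1}(b)}=\beta_{ab}\), which follows because \(\Gamma\) is a group, so \(g^{-1}\in\Gamma\). One must also treat edges as unordered pairs, with \(\beta_{ij}=\beta_{ji}\). If the scheduled \(XX+YY\) version were intended instead, the additional condition would be that \(g\) preserves each color class and the ordering. Since the theorem is stated for \(U_B(\beta)\), the argument above suffices.
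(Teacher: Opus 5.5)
Your proposal is correct and follows the same route as the paper: it shows $P_gCP_g^\dagger=C$ and $P_gH_B(\beta^{(\ell)})P_g^\dagger=H_B(\beta^{(\ell)})$ by reindexing the edge sums using $g(E)=E$, $g(E_{\mathrm{new}})=E_{\mathrm{new}}$ and orbit-constancy, then passes to the exponentials, the product $U_p$, and sector invariance. Your explicit index bookkeeping and the isotypic-projector argument supply details the paper leaves implicit; note that applying the hypothesis for $g$ itself to the edge $(g^{-1}(a),g^{-1}(b))$ already suffices, so the appeal to $g^{-1}\in\Gamma$ is not needed.
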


\begin{proof}
For the MaxCut cost Hamiltonian,
\[
C=\sum_{(i,j)\in E}\frac12(\mathbb{I}-Z_iZ_j)
\]
and the assumption \(g(E)=E\) implies that \(P_g\) permutes the edge terms among themselves. Therefore
\[
P_g C P_g^\dagger=C
\]
which is equivalent to \([P_g,C]=0\).

Similarly,
\[
H_B(\beta^{(\ell)})
=
\beta_0^{(\ell)}\sum_{u\in V}X_u
+
\sum_{(i,j)\in E_{\mathrm{new}}}\beta_{ij}^{(\ell)}X_iX_j
\]
The single-qubit term is permutation-invariant, and the assumptions \(g(E_{\mathrm{new}})=E_{\mathrm{new}}\) together with
\[
\beta_{g(i)g(j)}^{(\ell)}=\beta_{ij}^{(\ell)}
\]
show that \(P_g\) also permutes the two-qubit mixer terms without changing their coefficients. Hence
\[
P_g H_B(\beta^{(\ell)}) P_g^\dagger=H_B(\beta^{(\ell)})
\]
so \([P_g,H_B(\beta^{(\ell)})]=0\), and therefore
\[
[P_g,U_B(\beta^{(\ell)})]=0
\]

Since \(U_p\) is a product of layers \(U_B(\beta^{(\ell)})U(C,\gamma_\ell)\), it follows that \([P_g,U_p]=0\) for every \(g\in\Gamma\). Any symmetry sector of the \(\Gamma\)-representation is therefore invariant under \(U_p\). In particular, if a state satisfies \(P_g\ket{\psi}=\ket{\psi}\) for every \(g\in\Gamma\), then so does \(U_p\ket{\psi}\), so the fixed-point subspace \(\mathcal H^\Gamma\) is invariant.

\end{proof}

\end{document}